\DeclareSymbolFont{bbold}{U}{bbold}{m}{n}
\DeclareSymbolFontAlphabet{\mathbbold}{bbold}
\newtheorem{Theorem}{Theorem}[section]
\newtheorem{Proposition}[Theorem]{Proposition}
\newtheorem{Lemma}[Theorem]{Lemma}
\newtheorem{Corollary}[Theorem]{Corollary}
\theoremstyle{remark}
\newtheorem{Remark}[Theorem]{Remark}
\newtheorem{Example}[Theorem]{Example}
\newcommand{\CC}{{\mathbb C}}
\newcommand{\NN}{{\mathbb N}}
\newcommand{\PP}{{\mathbb P}}
\newcommand{\RR}{{\mathbb R}}
\newcommand{\TT}{{\mathbb T}}
\newcommand{\ZZ}{{\mathbb Z}}
\newcommand{\calA}{{\mathcal A}}
\newcommand{\calE}{{\mathcal E}}
\newcommand{\calF}{{\mathcal F}}
\newcommand{\calG}{{\mathcal G}}
\newcommand{\calN}{{\mathcal N}}
\newcommand{\calV}{{\mathcal V}}
\DeclareMathOperator\conv{conv}
\newcommand{\vol}{{\rm vol}}
\newcommand{\Var}{{\it  Var}}
\newcommand{\bfe}{{\bf  e}}
\newcommand{\bfz}{{\bf  0}}
\newcommand{\defcolor}[1]{{\color{blue}#1}}
\newcommand{\demph}[1]{\defcolor{{\sl #1}}}
\title{Critical points of Discrete Periodic Operators}
\author{M.~Faust}
\address{Matthew Faust, Department of Mathematics,
         Texas A\&M University, College Station, Texas 77843,  USA}
\email{mfaust@tamu.edu}
\urladdr{https://mattfaust.github.io}
\author{F.~Sottile}
\address{Frank Sottile, Department of Mathematics,
         Texas A\&M University, College Station, Texas 77843,  USA}
\email{sottile@tamu.edu}
\urladdr{https://franksottile.github.io}
\thanks{Research supported in part by Simons Collaboration Grant for Mathematicians 636314 and NSF grants DMS-2246031, DMS-2201005,
  DMS-2052572, and DMS-2000345.} 
\subjclass[2010]{81U30, 81Q10, 14M25}
\keywords{Bloch variety, Schr\"odinger operator, Kushnirenko Theorem, Toric variety, Newton polytope}
\begin{document}

\begin{abstract}
  We study the spectra of operators on periodic graphs using methods from combinatorial algebraic geometry.
  Our main result is a bound on the number of complex critical points of  the Bloch variety,
  together with an effective criterion for when this bound is attained.
  We show that this criterion holds for $\ZZ^2$- and $\ZZ^3$-periodic graphs 
  with sufficiently many edges and use our results to
  establish the spectral edges conjecture for some $\ZZ^2$-periodic graphs.
\end{abstract}
\maketitle

%
\section*{Introduction}
The spectrum of a $\ZZ^d$-periodic self-adjoint discrete operator $L$ consists of intervals in $\RR$.
Floquet theory reveals that the spectrum is the image of the coordinate projection to $\RR$ of the \demph{Bloch variety}
(also known as the dispersion relation), an algebraic hypersurface in
$(S^1)^d\times\RR$.
This coordinate projection defines a function $\lambda$ on the Bloch variety, which is our main object of study.

When the operator is discrete, the complexification of the Bloch variety is an algebraic variety in $(\CC^\times)^d\times\CC$.
Thus techniques from algebraic geometry and related areas may be used to address some questions in spectral theory.
In the 1990's Gieseker, Kn\"orrer, and Trubowitz~\cite{GKT} used algebraic geometry to study the Schr\"odinger operator on the
square lattice 
$\ZZ^2$ with a periodic potential and established a number of results, including Floquet isospectrality,
the irreducibility of its Fermi varieties, and determined the density of states.
Recently there has been a surge of interest in using algebraic methods in spectral theory.
This includes investigating the irreducibility of Bloch and Fermi varieties~\cite{FLM,FLM23,LiSh,Liu22}, Fermi isospectrality~\cite{Liu+},
density of states~\cite{Kravaris}, and extrema and critical points of the projection $\lambda$ on
Bloch varieties~\cite{Berk,DKS,Liu22}.
We use techniques from combinatorial algebraic geometry and geometric combinatorics~\cite{GBCP} to study critical points of the function
$\lambda$ 
on the Bloch variety of a discrete periodic operator.
We now discuss motivation and sketch our results.
Some background on spectral theory is sketched in Section~\ref{S:one}, and Section~\ref{Sec:AG} gives some background from algebraic geometry.

An old and widely believed conjecture in mathematical physics concerns the structure of the Bloch variety near the edges of the spectral bands.
Namely, that for a sufficiently general operator $L$ (as defined in Section~\ref{S:oneone}),
the extrema of the band functions $\lambda_j$ on the Bloch variety are nondegenerate in that
their Hessians are nondegenerate quadratic forms.
This \demph{spectral edges nondegeneracy conjecture} is stated in~\cite[Conj.\ 5.25]{KuchBAMS}, and it also appears
in~\cite{CdV,KuchBook,Nov81,Nov83}. 
Important notions, such as effective mass in solid state physics, the Liouville property, Green's
function asymptotics, Anderson localization, homogenization, and many other assumed properties in physics, depend upon this conjecture.

The spectral edges conjecture states that for generic parameters, each extreme value is attained by a single band,
the extrema are isolated, and the extrema are nondegenerate. 
We discuss progress for discrete operators on periodic graphs.
In 2000, Klopp and Ralston~\cite{RP} proved that for Laplacians with generic potential each extreme value is attained by a single band.
In 2015, Filonov and Kachkovskiy~\cite{FILI} gave a class of two-dimensional operators for which the extrema are isolated.
They also show~\cite[Sect.\ 6]{FILI} that the spectral edges conjecture may fail for
a Laplacian with general potential, which does not have generic parameters in the sense of Section~\ref{S:oneone}.
Most recently, Liu~\cite{Liu22} proved that the extrema are isolated for the Schr\"{o}dinger operator acting on the square lattice.

We consider a property which implies the spectral edges nondegeneracy conjecture:
A family of operators has the \demph{critical points property} if for almost all operators in the family, all critical points of the
function $\lambda$ (not just the extrema) are nondegenerate. 
Algebraic geometry was used in~\cite{DKS} to prove the following dichotomy:
For a given algebraic family of discrete periodic operators, either the critical points property holds for that family, or almost  all
operators in the family have Bloch varieties with degenerate critical points.

In~\cite{DKS}, this dichotomy was used to establish the critical points property for the family of Laplace-Beltrami difference operators on the
$\ZZ^2$-periodic diatomic graph of Figure~\ref{F:denseDimer}. 
Bloch varieties for these operators were shown to have at most 32 critical points.
A single example was computed to have 32 nondegenerate critical points.
Standard arguments from algebraic geometry (see Section~\ref{S:final}) implied that, for this family, the critical points property, and 
therefore also the spectral edges nondegeneracy conjecture, holds.

We extend part of that argument to operators on many periodic graphs.
Let $L$ be a discrete operator on a $\ZZ^d$-periodic graph $\Gamma$ (see Section~\ref{S:one}).
Its (complexified) Bloch variety is a hypersurface in the product $(\CC^\times)^d\times\CC$ of a complex torus and the complex line
defined by a Laurent polynomial $D(z,\lambda)$.
The last coordinate $\lambda$, corresponding to projection onto the spectral axis, is the function on the Bloch variety
whose critical points we study.
Accordingly, we will call critical points of the function $\lambda$ on the Bloch variety ``critical points of the Bloch variety.''
One contribution of this paper is to shift focus from spectral band functions $\lambda_j$ defined on a compact torus to a
  global function on the complex Bloch variety.
  Another is to use the perspective of nonlinear optimization to address a
question concerning the spectrum of a discrete periodic operator.

We state our first result.
Let $\Gamma$ be a connected $\ZZ^d$-periodic graph (as in Section~\ref{S:oneone}).
Fix a fundamental domain \defcolor{$W$} for the $\ZZ^d$-action on the vertices of $\Gamma$.
The support \defcolor{$\calA(\Gamma)$} of $\Gamma$ records the local connectivity between translates of the fundamental domain.
It is the set of $a \in \ZZ^d$ such that $\Gamma$ has an edge with endpoints in both $W$ and 
$a{+}W$.
\medskip

\noindent{\bf Theorem A.}  \ {\it
 The function $\lambda$ on the Bloch variety of a discrete operator on $\Gamma$ has at most
  \[
      d!\,|W|^{d+1}\vol(\calA(\Gamma))
  \]
 isolated critical points.
 Here, $\vol(\calA(\Gamma))$ is the Euclidean volume of the convex hull of $\calA(\Gamma)$.}\medskip

This bound uses an outer approximation for the Newton polytope of $D(z,\lambda)$ (see Lemma~\ref{L:genericNP}) and a study of the equations
defining critical points of the function $\lambda$ on the Bloch variety, called the \demph{critical point equations}~\eqref{Eq:CPE}.
Corollary~\ref{C:Bound} is a strengthening of Theorem A.
When the bound is attained all critical points are isolated.\medskip

\noindent{\bf Example.}
  We illustrate Theorem A on the example from~\cite[Sect.\ 4]{DKS}.
  Figure~\ref{F:denseDimer} shows a periodic graph $\Gamma$ with $d=2$ whose fundamental domain $W$ has 
  \begin{figure}[htb]
  \centering
  \includegraphics{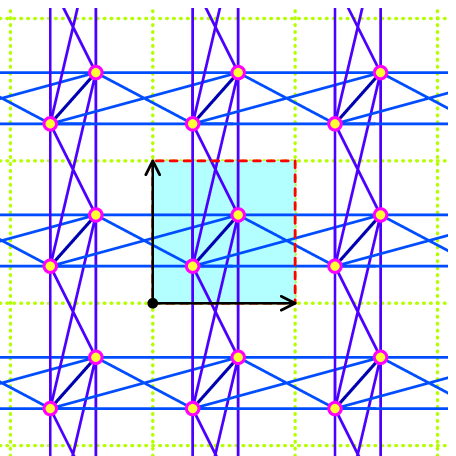}  
  \qquad\qquad
  \raisebox{14.5pt}{\includegraphics{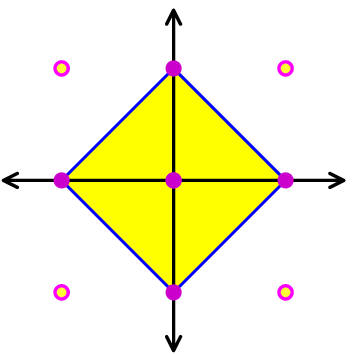}}           
  \caption{A dense periodic graph $\Gamma$ with the convex hull of $\calA(\Gamma)$.}\label{F:denseDimer}
  \end{figure}
  two vertices and its support $\calA(\Gamma)$ consists of the columns of the matrix
  $(\begin{smallmatrix}0&1&0&-1&0\\0&0&1&0&-1\end{smallmatrix})$.
  Figure~\ref{F:denseDimer} also displays the convex hull of $\calA(\Gamma)$.
  As $|W|=2$ and $\vol(\calA(\Gamma))=2$, Theorem A implies that any Bloch variety for an operator on $\Gamma$ has at most 
  \[
  d!\,|W|^{d+1}\vol(\calA(\Gamma))\ =\ 2!\cdot 2^{2+1}\cdot 2=32
  \]
  critical points, which is the bound demonstrated in~\cite{DKS}.
  \hfill$\diamond$\medskip
  
  The bound of Corollary~\ref{C:Bound} arises as follows.
  There is a natural compactification of $(\CC^\times)^d \times \CC$ by a projective toric variety $X$ associated to the Newton
  polytope, P, of $D(z,\lambda)$~\cite[Ch.~5]{GKZ}. 
  The critical point equations become linear equations on $X$ whose number of solutions is the degree of $X$.
  By Kushnirenko's Theorem~\cite{Kushnirenko}, this degree is the normalized volume of $P$, $(d{+}1)!\vol(P)$.
  This bound is attained exactly when there are no solutions at infinity, which is the set 
  $\defcolor{\partial X}\vcentcolon=X \smallsetminus\bigl( (\CC^\times)^d \times \CC\bigr)$ of points added in the compactification. 

  The compactified Bloch variety is a hypersurface in $X$.
  A \demph{vertical face} of $P$ is one that contains a segment parallel to the $\lambda$-axis.
  Corollary~\ref{C:SingThmB} shows that when $P$ has no vertical faces, any solution on $\partial X$ to the critical point equations  is a
  singular point of the intersection of this hypersurface with $\partial X$.
  We state a simplified version of Corollary~\ref{C:SingThmB}.\medskip

\noindent{\bf Theorem B.}  \ {\it
  If $P$ has no vertical faces, then the bound of Corollary~\ref{C:Bound} is attained exactly when the compactified Bloch variety is smooth
  along $\partial X$.}\medskip 

We give a class of graphs whose typical Bloch variety is smooth at infinity and whose Newton polytopes have no vertical faces.
A periodic graph $\Gamma$ is \demph{dense} if it has every possible edge,
given its support $\calA(\Gamma)$ and fundamental domain
$W$ (see Section~\ref{S:C}).
The following is a consequence of Corollary~\ref{C:SingThmB} and Theorem~\ref{Th:denseDense}.  \medskip

\noindent{\bf Theorem C.} \ {\it
  When $d=2$ or $3$ the Bloch variety of a generic operator on a dense periodic graph is smooth along $\partial X$, its Newton
  polytope has no vertical faces, and the bound of \textbf{Theorem A} is attained.}\medskip 

Theorem C is an example of a recent trend in applications of algebraic geometry in which a highly structured
optimization problem is shown to unexpectedly achieve a combinatorial bound on the number of critical points.
A first instance was~\cite{DKS}, which inspired~\cite{EDD} and~\cite{LNRW}.

Section~\ref{S:one} presents background on the spectrum of an operator on a periodic graph, and formulates our goal to bound the number of
critical points of the function $\lambda$ on the Bloch variety.
At the beginning of Section~\ref{S:B}, we recast extrema of the spectral band functions using the language of constrained optimization.
Theorems A, B, and C are proven in Sections~\ref{S:A}, \ref{S:B}, and \ref{S:C}.
In Section~\ref{S:final}, we use these results to prove the spectral edges conjecture for operators on three periodic graphs.

%
\section{Operators on periodic graphs}\label{S:one}

Let $d$ be a positive integer.
We write $\defcolor{\CC^\times}\vcentcolon=\CC\smallsetminus\{0\}$ for the multiplicative group of nonzero complex numbers and 
$\defcolor{\TT}\vcentcolon=\{z\in\CC^\times\mid |z|=1\}$ for its maximal compact subgroup.
Note that if $z\in\TT$, then $\overline{z}=z^{-1}$.
We write edges of a graph as pairs, $(u,v)$ with $u,v$ vertices, and understand that $(u,v)=(v,u)$.

\subsection{Operators on periodic graphs}\label{S:oneone}
For more, see~\cite[Ch.\ 4]{BerkKuch}.
A (\defcolor{$\ZZ^d$}-)\demph{periodic graph} is a simple (no multiple edges or loops) connected undirected graph
$\Gamma$ with a free cocompact action of $\ZZ^d$. 
Thus $\ZZ^d$ acts freely on the vertices, \defcolor{$\calV(\Gamma)$}, and edges,
\defcolor{$\calE(\Gamma)$}, of $\Gamma$ preserving incidences, and $\ZZ^d$ has finitely many orbits
on each of $\calV(\Gamma)$ and $\calE(\Gamma)$.
Figure~\ref{F:first_graphs} 
\begin{figure}[htb]
 \centering
  \includegraphics{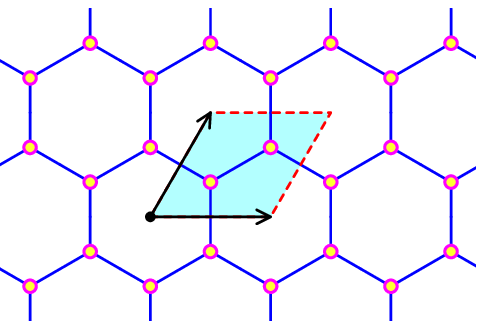}
  \qquad \ 
  \includegraphics{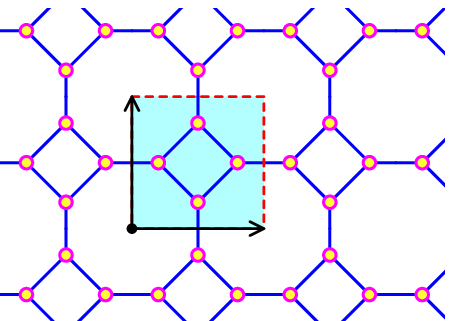}

  \caption{Two $\ZZ^2$-periodic graphs.}
  \label{F:first_graphs}
\end{figure}
shows two $\ZZ^2$-periodic graphs.
One is the honeycomb lattice and the other is an abelian cover of $K_4$, the complete graph on four vertices.

It is useful but not necessary to consider $\Gamma$ immersed in $\RR^d$ so that $\ZZ^d$ acts on $\Gamma$ via translations.
The graphs in Figure~\ref{F:first_graphs} are each immersed in $\RR^2$, and for each we show two independent vectors that generate the
$\ZZ^2$-action. 

Choose a fundamental domain for this $\ZZ^d$-action whose boundary does not contain a vertex of
$\Gamma$. 
In Figure~\ref{F:first_graphs}, we have shaded the fundamental domains.
Let \defcolor{$W$} be the vertices of $\Gamma$ lying in the fundamental domain.
Then $W$ is a set of representatives of $\ZZ^d$-orbits of $\calV(\Gamma)$.
Every $\ZZ^d$-orbit of edges contains one or two edges incident on vertices in $W$.
An edge incident on $W$ has the form $(u,a{+}v)$ for some $u,v\in W$ and $a\in\ZZ^d$.
(If $a=0$, then $u\neq v$ as $\Gamma$ has no loops, and there are no restrictions when $a\neq 0$.)
The support $\defcolor{\calA(\Gamma)}$ of $\Gamma$ is the set of $a\in\ZZ^d$ such that $(u,a{+}v)\in\calE(\Gamma)$ for some
$u,v\in W$. 
This finite set depends on the choice of fundamental domain and it is centrally symmetric in that $\calA(\Gamma)=-\calA(\Gamma)$.
As $\Gamma$ is connected, the $\ZZ$-span of $\calA(\Gamma)$ is $\ZZ^d$.
For both graphs in Figure~\ref{F:first_graphs}, this set consists of the columns of the matrix
$(\begin{smallmatrix}0&1&0&-1&\hspace{3pt}0\\0&0&1&\hspace{3pt}0&-1\end{smallmatrix})$.

A labeling of $\Gamma$ is a pair of functions $\defcolor{e}\colon\calE(\Gamma)\to\RR$ (edge weights) and
$\defcolor{V}\colon\calV(\Gamma)\to\RR$ (potential) that is $\ZZ^d$-invariant (constant on orbits).
The set of labelings is the finite-dimensional vector space $\RR^E\times\RR^W$, where $E$ is the set of orbits on $\calE(\Gamma)$.
Given a labeling $c=(e,V)$, we have the discrete operator \defcolor{$L_c$} acting
on functions $f$ on $\calV(\Gamma)$.
Then $L_c(f)$ is defined by its value at $u\in\calV(\Gamma)$, 
\[
L_c(f)(u)\ \vcentcolon=\ V(u)f(u)\ +\ \sum_{(u,v)\in \calE(\Gamma)} e_{(u,v)}(f(u)-f(v))\,.
\]
We call $L_c$ a \demph{discrete periodic operator} on $\Gamma$, and may often omit the subscript $c$.
It is a bounded self-adjoint operator on the Hilbert space $\ell^2(\Gamma)$ of square-summable functions on $\calV(\Gamma)$, and
has real spectrum.

\subsection{Floquet theory}
As the action of $\ZZ^d$ on $\Gamma$ commutes with the operator $L$, we may apply the Floquet transform, which
reveals important structure of its spectrum.
References for this \demph{Floquet theory} include~\cite{BerkKuch, KuchBook, KuchBAMS}.

The Floquet (Fourier) transform is a linear isometry $\ell^2(\Gamma)\xrightarrow{\,\sim\,}L^2(\TT^d,\CC^W)$,
from $\ell^2(\Gamma)$ to square-inte\-grable 
functions on $\defcolor{\TT^d}$, the compact torus, with values in the vector space $\CC^W$.
The torus $\TT^d$ is the group of unitary characters of $\ZZ^d$.
For $z\in\TT^d$ and $a\in\ZZ^d$, the corresponding character value is the Laurent monomial 
 \[
   \defcolor{z^a}\ \vcentcolon=\ z_1^{a_1} z_2^{a_2}\dotsb z_d^{a_d}\,.
 \]
The Floquet transform \defcolor{$\hat{f}$} of a function $f$ on $\calV(\Gamma)$ is a function on
$\TT^d\times\calV(\Gamma)$ such that for $z\in\TT^d$ and $u\in\calV(\Gamma)$,
 \begin{equation}\label{Eq:FloquetTransform}
   \hat{f}(z,a{+}u)\ =\ z^a \hat{f}(z,u)\qquad\mbox{ for }a\in\ZZ^d\,.
 \end{equation}
Thus $\hat{f}$ is determined by its values at the vertices $W$ in the fundamental domain.

Let $\hat{f}\in L^2(\TT^d,\CC^W)$.
Then for $u\in W$, $\hat{f}(u)$ is a function on $\TT^d$.
The action of the operator $L$ on the Floquet transform $\hat{f}$ is given by the formula 
 \begin{equation}\label{Eq:Laurent_operator}
   L(\hat{f})(u)\  =\ V(u)\hat{f}(u)\ +\ \sum_{(u,a+v)\in\calE(\Gamma)} e_{(u,a+v)} \bigl(\hat{f}(u) - z^a \hat{f}(v)\bigr)\,,
 \end{equation}
as $\hat{f}(a{+}v)=z^a\hat{f}(v)$.
The exponents $a$ which appear lie in the support $\calA(\Gamma)$ of $\Gamma$.
The simplicity of this expression is because $L$ commutes with the $\ZZ^d$-action.

Thus in the standard basis for $\CC^W$, the operator $L$ becomes multiplication by a square matrix whose rows and columns
are indexed by elements of $W$.
Writing \defcolor{$\delta_{u,v}$} for the Kronecker delta function, the matrix entry in position $(u,v)$ is the function
 \begin{equation}\label{Eq:matrix-entry}
   \delta_{u,v} \Bigl( V(u)\ +\ \sum_{(u,w)\in\calE(\Gamma)} e_{(u,w)} \Bigr)\ -\
   \sum_{(u,a+v)\in\calE(\Gamma)} e_{(u,a+v)} z^a\ .
 \end{equation}
%

\begin{Example}\label{Ex:Graphene_operator}
  Let  $\Gamma$ be the hexagonal lattice from Figure~\ref{F:first_graphs}.
  Figure~\ref{F:localGraphene} shows a labeling in a neighborhood of its fundamental domain.
  \begin{figure}[htb]
    \centering
   \begin{picture}(105,92)(0,-1)
     \put(0,0){\includegraphics{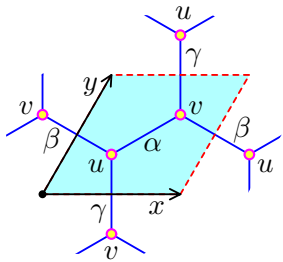}}
     \put(31,31){\small$u$}      \put(95,31){\small$u$}     \put(64,89){\small$u$}
     \put(69,53){\small$v$}      \put( 5,53){\small$v$}     \put(37,-2){\small$v$}
     \put(54,15){\small$x$}
     \put(29,62){\small$y$}
     \put(52,39){\small$\alpha$}
     \put(32,14){\small$\gamma$}   \put(68,73){\small$\gamma$}
     \put(14,40){\small$\beta$}    \put(86,44){\small$\beta$}
   \end{picture}
   \caption{A labeling of the hexagonal lattice.}
   \label{F:localGraphene}
 \end{figure}
  Thus $W=\{u,v\}$ consists of two vertices and there are three (orbits of) edges, with labels $\alpha,\beta,\gamma$.
  Let $(x,y)\in\TT^2$.
  The operator $L$ is
 \begin{align*}
   L(\hat{f})(u)\  &=\ V(u)\hat{f}(u)\ +\ \alpha(\hat{f}(u)-\hat{f}(v))\ +\
                          \beta(\hat{f}(u)-x^{-1}\hat{f}(v))+\gamma(\hat{f}(u)-y^{-1}\hat{f}(v))\ ,\\
   L(\hat{f})(v)\  &=\ V(v)\hat{f}(v)\ +\ \alpha(\hat{f}(v)-\hat{f}(u))\ +\
                         \beta(\hat{f}(v)-x\hat{f}(u))\ \ \ \  + \gamma(\hat{f}(v)-y\hat{f}(u))\ .
 \end{align*}
 Collecting coefficients of $\hat{f}(u),\hat{f}(v)$, we represent $L$ by the $2\times 2$-matrix,
 \begin{equation}\label{Eq:Graphene_Matrix}
    L\ =\ \left( \begin{matrix}
          V(u)+\alpha+\beta+\gamma    &-\alpha-\beta x^{-1}-\gamma y^{-1}\\
         -\alpha-\beta x-\gamma y  & V(v)+\alpha+\beta+\gamma
              \end{matrix}\right)\ ,
 \end{equation}
 whose entries are Laurent polynomials in $x,y$.
 Notice that the support $\calA(\Gamma)$ of $\Gamma$ equals the set of exponents of monomials which appear in $L$.
 Observe that for $(x,y)\in\TT^2$, $L^T=\overline{L}$, so that $L$ is Hermitian, showing again that the operator $L$
 is self-adjoint.\hfill$\diamond$
\end{Example}

What we saw in Example~\ref{Ex:Graphene_operator} holds in general.
In the standard basis for $\CC^W$, $L=L_c$ is multiplication by a $|W|\times|W|$-matrix \defcolor{$L(z)=L_c(z)$} with each 
entry~\eqref{Eq:matrix-entry} a finite sum of monomials with exponents from $\calA(\Gamma)$
(a \demph{Laurent polynomial with support $\calA(\Gamma)$}).
Note that $(u,a{+}v)\in\calE(\Gamma)$ if and only if $(-a{+}u,v)\in \calE(\Gamma)$, these edges have the same label, and for
$z\in\TT^d$, $\overline{z^a}=z^{-a}$.
Thus for $z\in\TT^d$, the matrix is Hermitian, as $L(z)^T=L(z^{-1})=\overline{L(z)}$.

\subsection{Critical points of the Bloch variety}
As $L(z)$ is Hermitian for $z\in\TT^d$, its spectrum is real and consists of its $|W|$ eigenvalues
 \begin{equation}\label{Eq:bandFunctions}
   \lambda_1(z)\ \leq\ \lambda_2(z)\ \leq\ \dotsb\ \leq\ \lambda_{|W|}(z)\,.
 \end{equation}
 These eigenvalues vary continuously with $z\in\TT^d$, and $\lambda_j(z)$ is called the $j$th
 \demph{spectral band function}, $\lambda_j\colon\TT^d\to\RR$.
 Its image is an interval in $\RR$, called the \demph{$j$th spectral band}.
The eigenvalues~\eqref{Eq:bandFunctions} are the roots of the characteristic polynomial
 \begin{equation}\label{Eq:charPoly}
   \defcolor{D(z,\lambda)}\ =\ D_c(z,\lambda)\  \vcentcolon=\ \det(L_c(z)\ -\ \lambda I)\,,
 \end{equation}
 which we call the \demph{dispersion function}.
 Its vanishing defines a hypersurface
 \begin{equation}
   \defcolor{\Var(D_c(z,\lambda))} = \{(z,\lambda) \in \TT^d\times\RR \mid D(z,\lambda)=0 \}\,,
 \end{equation}
 called the \demph{Bloch variety} of the operator $L$\footnote{This is also called the dispersion relation in the literature. We use
 the term Bloch variety as it is an algebraic variety and our perspective is to use methods from algebraic geometry in spectral theory.}.
 The Bloch variety is the union of $|W|$ branches with the $j$th branch equal to the graph of the $j$th spectral band function.
The image of the Bloch variety under the projection to $\RR$ is the spectrum \defcolor{$\sigma(L)$} of the operator $L$.
This projection is a function \defcolor{$\lambda$} on the Bloch variety.
Identifying the $j$th branch/graph with $\TT^d$, the restriction of $\lambda$ to that branch gives the corresponding spectral band
  function $\lambda_j$.

Figure~\ref{F:smooth_graphene} shows this for the operator $L$ on the hexagonal lattice with edge weights $6,3,2$ and zero potential
$V$---for this we unfurl $\TT^2$, representing it by 
$[-\frac{\pi}{2},\frac{3\pi}{2}]^2\subset\RR^2$, which is a fundamental domain in its universal cover.
\begin{figure}[htb]
  \centering
  \begin{picture}(178,138)(-39,0)
     \put(0,0){\includegraphics[height=140.4pt]{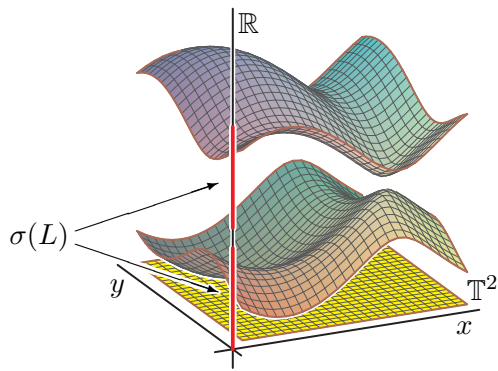}}
     \put(130,15){\small$x$} \put(-1,31){\small$y$}
     \put(47.5,129){\small$\RR$}    \put(134,28){\small$\TT^2$}
     \put(-39,49){\small$\sigma(L)$}
     \thicklines
       \put(-14,50){{\color{white}\line(3,-1){52}}}
       \put(-14,49.5){{\color{white}\line(3,-1){52}}}
       \put(-14,49){{\color{white}\line(3,-1){52}}}
       \put(-14,48.5){{\color{white}\line(3,-1){52}}}
       \put(-14,48){{\color{white}\line(3,-1){52}}}
     \thinlines
       \put(40,31.){\color{white}{\circle*{1.5}}}
       \put(39,31.3333){\color{white}{\circle*{2}}}
       \put(38.5,31.5){\color{white}{\circle*{2}}}
       \put(37,32){\color{white}{\circle*{3.5}}}
       \put(-14,54){\vector(3,1){54}}    \put(-14,49){\vector(3,-1){54}}
   \end{picture}
   \caption{A Bloch variety and spectral bands for the hexagonal lattice.}
   \label{F:smooth_graphene}
\end{figure}
(That is, by quasimomenta in $[-\frac{\pi}{2},\frac{3\pi}{2}]^2$.)
It has two branches with each the graph of the corresponding spectral band function.
An endpoint of a spectral band (\demph{spectral edge}) is the image of an extremum of some band function $\lambda_j(z)$.
For the hexagonal lattice at these parameters, each band function has two nondegenerate extrema, and these give the four spectral edges.
These are also local extrema of the function $\lambda$ on the Bloch variety.

The \demph{spectral edges conjecture}~\cite[Conj.\ 5.25]{KuchBAMS} for a periodic graph $\Gamma$ asserts that for generic values of the
parameters 
$(e,V)$, each spectral edge is attained by a single band, the extrema on the Bloch variety are isolated, and all extrema are nondegenerate
(the spectral band function $\lambda_j$ has a full rank Hessian matrix).
Here, generic means that there is a nonconstant polynomial $p(e,V)$ in the parameters such that when $p(e,V)\neq 0$, these desired
 properties hold.

The entries in the matrix $L(z)$ and the function~\eqref{Eq:charPoly} defining the Bloch variety are all (Laurent) polynomials.
In this setting it is natural to allow complex parameters, $e\colon\calE(\Gamma)\to\CC$, $V\colon\calV(\Gamma)\to\CC$ and variables
$z\in(\CC^\times)^d$, $\lambda\in\CC$.
With complex parameters and variables, $L_c(z)$ is no longer Hermitian, but it does satisfy $L_c(z)^T=L_c(z^{-1})$ and the Bloch variety is
the complex algebraic hypersurface $\Var(D_{c}(z,\lambda))$ in $(\CC^\times)^d\times\CC$ defined by the vanishing of the 
dispersion function $D_{c}(z,\lambda)$ of $L_c(z)$~\eqref{Eq:charPoly}.

In passing to the complex Bloch variety we may no longer distinguish branches $\lambda_j(z)$ of $\lambda$.
At a smooth point $(z_0,\lambda_0)$ whose projection $z$ to $(\CC^\times)^d$ is regular (in that
$\frac{\partial D}{\partial\lambda}(z_0,\lambda_0)\neq 0$), there is a locally defined function $f$ of
$z$ with $\lambda_0=f(z_0)$ and $D(z,f(z))=0$ on its domain, but this is not necessarily a global function of $z$.
Consequently, we will consider the projection to the last coordinate to be a function $\lambda$ on the Bloch variety, and then study its
differential geometry, including its critical points.

Nondegeneracy of spectral edges is implied by the stronger condition that all critical points of the function $\lambda$ on the complex
Bloch variety are nondegenerate.
Understanding the critical points of $\lambda$ is a first step.
Our aim is to bound the number of (isolated) critical points of $\lambda$ on the Bloch variety of a given operator $L$, give criteria for when
the bound is attained, prove that it is attained for generic operators on a class of graphs, and finally to use these results to prove
the spectral edges conjecture for $2^{19}+2$ graphs.
We treat these in the following four sections.

\section{Bounding the number of critical points}\label{S:A}

We first recast extrema of spectral band functions in terms of constrained optimization.
The complex Bloch variety is the hypersurface $\Var(D(z,\lambda))$ in $(\CC^\times)^d\times\CC$ defined by the vanishing of the dispersion
function $D(z,\lambda)$.
\defcolor{Critical points} of the function $\lambda$ on the Bloch variety
are points of the Bloch variety where the gradients in $(\CC^\times)^d\times\CC$ of $\lambda$ and $D(z,\lambda)$ are linearly dependent.
  That is, a critical point is a point $(z,\lambda)\in(\CC^\times)^d\times\CC$ with $D(z,\lambda)=0$ such that either the
  gradient $\nabla D(z,\lambda)$ vanishes or we have
  $\frac{\partial D}{\partial z_i}(z,\lambda)=0$ for $i=1,\dotsc, d$ and $\frac{\partial D}{\partial\lambda}(z,\lambda)\neq0$ (as
  $\nabla\lambda=(0,\dotsc,0,1)$).
  In either case, we have
\[
  D(z,\lambda)\ =\ 0
    \qquad\mbox{and}\qquad
  \frac{\partial D}{\partial z_i}\ =\ 0 \qquad\mbox{for } i=1,\dots,d\,.
\]
Since $z_i\neq 0$, we obtain the equivalent system
 \begin{equation}\label{Eq:CPE}
   D(z,\lambda)\ =\
   z_1\frac{\partial D}{\partial z_1}\ =\ \ \dotsb\    \ =\
   z_d\frac{\partial D}{\partial z_d}\ =\ 0 \ ,
 \end{equation}
which we call the \demph{critical point equations}.

\begin{Proposition}\label{P:CPE}
  A point $(z,\lambda)\in(\CC^\times)^d\times\CC$ is a critical point of the function $\lambda$ on the Bloch variety
  $\Var(D(z,\lambda))$ if and only if~\eqref{Eq:CPE} holds.
\end{Proposition}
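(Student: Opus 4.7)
The proof will be essentially a restatement of the definition given in the paragraph immediately preceding the proposition, together with the observation that multiplication by each coordinate $z_i \in \CC^\times$ is invertible. The real content is the equivalence of the two ways of packaging the condition ``$\nabla D$ is parallel to $\nabla \lambda$ at a point of $\Var(D)$'': as the vanishing of the partials $\partial D/\partial z_i$, or, after clearing denominators by $z_i$, as the system~\eqref{Eq:CPE}.

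For the forward direction, suppose $(z,\lambda)\in(\CC^\times)^d\times\CC$ is a critical point of $\lambda$ on $\Var(D)$. By definition $D(z,\lambda)=0$, and $\nabla D(z,\lambda)$ is linearly dependent with $\nabla\lambda=(0,\dots,0,1)$. In the case $\nabla D(z,\lambda)=0$ all partials vanish, and in the remaining case $\nabla D(z,\lambda)=c\,\nabla\lambda$ for some $c\neq 0$, so its first $d$ components vanish. In either case $\frac{\partial D}{\partial z_i}(z,\lambda)=0$ for $i=1,\dots,d$. Multiplying each of these equations by $z_i$ (which leaves the condition unchanged since $z_i\neq 0$) and adjoining $D(z,\lambda)=0$ yields~\eqref{Eq:CPE}.

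For the converse, suppose~\eqref{Eq:CPE} holds at $(z,\lambda)\in(\CC^\times)^d\times\CC$. Since each $z_i$ is nonzero, the equations $z_i\frac{\partial D}{\partial z_i}=0$ imply $\frac{\partial D}{\partial z_i}(z,\lambda)=0$ for $i=1,\dots,d$, so the only possibly nonzero component of $\nabla D(z,\lambda)$ is its last, $\frac{\partial D}{\partial\lambda}$. Thus $\nabla D(z,\lambda)$ is a scalar multiple of $\nabla\lambda=(0,\dots,0,1)$, hence linearly dependent with it, and $D(z,\lambda)=0$ by the first equation in~\eqref{Eq:CPE}. Therefore $(z,\lambda)$ is a critical point.

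There is no serious obstacle here; the content of the proposition is really a matter of bookkeeping. The preferred form~\eqref{Eq:CPE} is chosen because each $z_i\frac{\partial D}{\partial z_i}$ is a Laurent polynomial with the same Newton polytope as $D$ itself, which is exactly what is needed to apply the toric/Kushnirenko machinery in the sections that follow.
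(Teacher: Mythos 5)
Your proof is correct and follows essentially the same route as the paper: the forward direction is the computation already carried out in the paragraph defining the critical point equations, and the converse uses $z_i\neq 0$ to recover the vanishing of the partials and hence the linear dependence of $\nabla D$ and $\nabla\lambda$. Nothing is missing.
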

\begin{proof}
    We already showed that at a critical point of $\lambda$, the equations~\eqref{Eq:CPE} hold.
    Suppose now that $(z,\lambda)\in(\CC^\times)^d\times\CC$ is a solution to~\eqref{Eq:CPE}.
    As $D(z,\lambda)=0$, the point lies on the Bloch variety.
    As $z\in(\CC^\times)^d$, 
    no coordinate $z_i$ vanishes, which implies that $\frac{\partial D}{\partial z_i}(z,\lambda)=0$ for $i=1,\dotsc,d$.
    Thus the gradients $\nabla \lambda$ and $\nabla D$ are linearly dependent at $(z,\lambda)$, showing that it is a critical point.
\end{proof}
\begin{Remark}
 A point $(z_0,\lambda_0)\in\TT^d\times\RR$ such that $\lambda_0=\lambda_j(z_0)$ is an extreme value of the spectral band function
 $\lambda_j$ is also a critical point of the Bloch variety.
 Indeed, either the gradient $\nabla D$ vanishes at $(z_0,\lambda_0)$ or it does not vanish.
 If $\nabla D(z_0,\lambda_0)=\bfz$, then $(z_0,\lambda_0)$ is a critical point.
If $\nabla D(z_0,\lambda_0)\neq 0$, then the Bloch variety is smooth at $(z_0,\lambda_0)$ and thus is a smooth point of the graph of $\lambda_j$.
As $\lambda_0=\lambda_j(z_0)$ is an extreme value of $\lambda_j$, the tangent plane is horizontal at $(z_0,\lambda_0)$.
This implies that $\lambda_j$ is differentiable (by the implicit function theorem) and that
$\frac{\partial \lambda_j}{\partial z_i}(z_0,\lambda_0)=0$ for $i=1,\dotsc,d$.
Thus the gradients of $\lambda$ and $D$ at $(z_0,\lambda_0)$ are linearly dependent, showing that it is a
critical point.\hfill$\diamond$ 
\end{Remark}

B\'ezout's Theorem~\cite[Sect.\ 4.2.1]{Shaf13} gives an upper bound on the number of isolated critical points:
We may multiply each Laurent polynomial in~\eqref{Eq:CPE} by a monomial to clear denominators and obtain ordinary polynomials.
The product of their degrees is an upper bound for the number of the common zeroes that are isolated in the complex domain.
Polyhedral bounds that exploit the structure of the Laurent polynomials are typically much smaller.
Sources for these are~\cite[Ch.\ 7]{CLOII}, \cite[Ch.\ 5]{GKZ}, and~\cite[Ch.\ 3]{IHP}.
These results bound the number of isolated common zeroes, counted with multiplicities. An isolated common zero $z_0$ of polynomials
$f_1,\dots,f_{d+1}$ on $(\CC^\times)^d\times\CC$ has multiplicity 1 exactly when the gradient of $f_1,\dots,f_{d+1}$ spans the cotangent
space at $z_0$; otherwise its multiplicity exceeds 1 (see~\cite[Ch.\ 4\ Def.\ 2.1]{CLOII} and~\cite[Ch.\ 8.7 \ Def.\ 8]{CLO}). 

Let \defcolor{$\CC[z^{\pm},\lambda]$} be the ring of Laurent polynomials in $z_1,\dotsc,z_d,\lambda$ where $\lambda$ occurs with
only nonnegative exponents.
Note that $D(z,\lambda)\in  \CC[z^{\pm},\lambda]$.
The \demph{support} $\defcolor{\calA(\psi)}\subset\ZZ^d\times\NN$ of a polynomial $\psi\in\CC[z^{\pm},\lambda]$ is the set of exponents of
monomials in $\psi$. 
The \defcolor{Newton polytope} $\defcolor{\calN(\psi)}\vcentcolon=\conv(\calA(\psi))$ of $\psi$ is the convex hull of its support.
Write \defcolor{$\vol(\calN(\psi))$} for the $(d{+}1)$-dimensional Euclidean volume of the Newton polytope of $\psi$.

\begin{Example}\label{Ex:GraphenePolytope}
  We continue the example of the hexagonal lattice.
  Writing \defcolor{$\ell$} for $\alpha{+}\beta{+}\gamma {-}\lambda$, the dispersion function \defcolor{$D(x,y;\lambda)$} of the
  matrix~\eqref{Eq:Graphene_Matrix} is 
  \[
  (V(u)+\ell)(V(v)+\ell)\ -\ 
  (-\alpha-\beta x^{-1}-\gamma y^{-1})(-\alpha-\beta x-\gamma y)\,.
  \]
  In Figure~\ref{F:GraphenePolytope} the monomials in $D(x,y;\lambda)$ label the columns of a $3\times 9$ array which are their
  exponent vectors.
  Figure~\ref{F:GraphenePolytope} also shows its Newton polytope, which has volume 2.\hfill$\diamond$
  \begin{figure}[htb]
  \centering
  \begin{tabular}{ccccccccc}
    $x$&$xy^{-1}$&$y^{-1}$&$x^{-1}$&$x^{-1}y$&$y$&$1$&$\lambda$&$\lambda^2$\\
    $1$& $1$    &  $0$  & $-1$  &  $-1$  &$0$&$0$&$0$&     $0$   \\
    $0$& $-1$   & $-1$  &  $0$  &   $1$  &$1$&$0$&$0$&     $0$   \\
    $0$&  $0$   &  $0$  &  $0$  &   $0$  &$0$&$0$&$1$&     $2$
  \end{tabular}  
  \qquad\qquad
  \raisebox{-45pt}{\begin{picture}(85,100)(-5,0)
        \put(-5,-5){\includegraphics{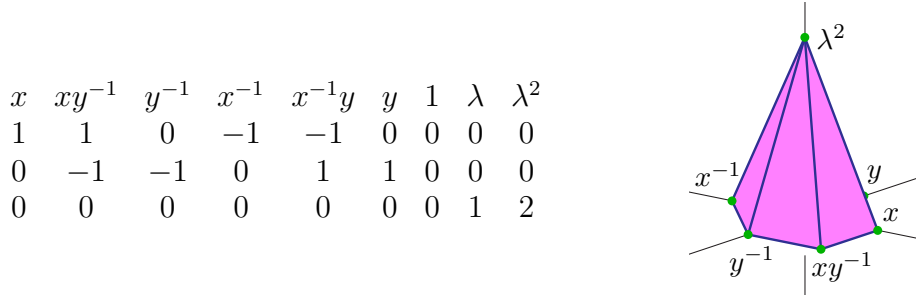}}
            \put(43,88){\small$\lambda^2$}
        \put(68,23){\small$x$}       \put(41,2){\small$xy^{-1}$}
        \put(10,4){\small$y^{-1}$}   \put(-3,36){\small$x^{-1}$}
        \put(62,39){\small$y$}
  \end{picture}}
  
  \caption{Support  and Newton polytope of the hexagonal lattice operator.}
  \label{F:GraphenePolytope}
\end{figure}

\end{Example}

\begin{Theorem}\label{Th:Bound}
  For a polynomial $\psi\in\CC[z^{\pm},\lambda]$, the critical point equations for $\psi$
  \begin{equation}\label{Eq:psi_system}
    \psi(z,\lambda)\ =\
    z_1 \frac{\partial\psi}{\partial z_1}\ =\ \dotsb\ =\ z_d \frac{\partial\psi}{\partial z_d}\ =\ 0
  \end{equation}
  have at most $(d{+}1)!\vol(\calN(\psi))$ isolated solutions in $(\CC^\times)^d\times\CC$, counted with multiplicity.
  When the bound is attained, all solutions are isolated.
\end{Theorem}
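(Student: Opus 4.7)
The plan is to invoke Kushnirenko's theorem on the projective toric variety associated to $P=\calN(\psi)$. First, I observe that every Laurent polynomial in the system~\eqref{Eq:psi_system} has Newton polytope contained in $P$: because $z_i\partial_{z_i}$ acts on a monomial $z^a\lambda^b$ by the scalar $a_i$, we have $\calA(z_i\partial_{z_i}\psi)\subseteq\calA(\psi)$ and hence $\calN(z_i\partial_{z_i}\psi)\subseteq P$ for each $i=1,\dotsc,d$. Thus all $d{+}1$ equations are simultaneously sections of a single line bundle in the toric setup.

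Next, let $X=X_P$ be the projective toric variety determined by $P$ with very ample line bundle $L_P$. Kushnirenko's theorem identifies $\deg_{L_P}(X)=(d{+}1)!\vol(P)$, and each Laurent polynomial supported in $P$ defines a global section of $L_P$, hence an effective divisor on $X$. Let $H_0,H_1,\dotsc,H_d$ be the divisors cut out by $\psi,z_1\partial_{z_1}\psi,\dotsc,z_d\partial_{z_d}\psi$ respectively. By Bezout on $X$, their intersection number is
\[
  H_0\cdot H_1\dotsm H_d\ =\ \deg_{L_P}(X)\ =\ (d{+}1)!\vol(P).
\]
The standard fact from intersection theory is that the sum of local intersection multiplicities at isolated points of $H_0\cap\dotsb\cap H_d$ is bounded by this intersection number, with equality precisely when the intersection is zero-dimensional. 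Because $(\CC^\times)^d\times\CC$ embeds as an open subset of $X_P$ (the toric compactification alluded to in the introduction), an isolated solution of~\eqref{Eq:psi_system} in $(\CC^\times)^d\times\CC$ is an isolated point of $H_0\cap\dotsb\cap H_d$ in $X$ with the same local multiplicity, which yields the asserted bound.

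For the final assertion, if the isolated solutions in $(\CC^\times)^d\times\CC$ (counted with multiplicity) already saturate $(d{+}1)!\vol(P)$, then $H_0\cap\dotsb\cap H_d$ must be zero-dimensional in all of $X_P$; in particular the solution set of~\eqref{Eq:psi_system} in $(\CC^\times)^d\times\CC$ has no positive-dimensional components, so every solution there is isolated. The main obstacle I anticipate is the toric-geometric bookkeeping—identifying $(\CC^\times)^d\times\CC$ as an open subset of $X_P$ and keeping careful track of which boundary strata of $X_P$ correspond to $\lambda=0$ versus points truly ``at infinity'' in $X_P\setminus((\CC^\times)^d\times\CC)$. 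Degenerate cases, such as when $\psi$ is divisible by $\lambda$ and the locus $\lambda=0$ is itself a positive-dimensional component of the solution set, must be handled separately, but they contribute no isolated solutions to the count and so the bound still goes through by applying Bernstein's theorem directly on $(\CC^\times)^{d+1}$.
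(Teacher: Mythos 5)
Your proposal is correct and follows essentially the same route as the paper: compactify via a projective toric variety attached to $\calN(\psi)$, identify $(d{+}1)!\vol(\calN(\psi))$ with the relevant degree by Kushnirenko's theorem, and use the refined B\'ezout fact that positive-dimensional components absorb degree to obtain both the bound and the final ``all solutions isolated'' claim. The only divergence is bookkeeping: the paper works with the image toric variety $X_\calA\subset\PP^\calA$ of the monomial map and must track the finite covering group $G_\calA$ (so that $\deg X_\calA\cdot|G_\calA|=(d{+}1)!\vol(\calN(\psi))$), whereas you work on the polytopal model where the torus embeds honestly, and the worry you flag about $(\CC^\times)^d\times\CC$ sitting inside $X_P$ is harmless for essentially the reason you give --- whenever the base face fails to be a facet, every solution with $\lambda=0$ lies in a positive-dimensional family and contributes nothing to the isolated count.
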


We prove this at the end of the section.

As the Bloch variety is defined by the dispersion function $D(z,\lambda)=\det(L(z)-\lambda I)$,
we deduce the following from Theorem~\ref{Th:Bound}.
  
\begin{Corollary}\label{C:Bound}
  The number of isolated critical points of the function $\lambda$ on the Bloch variety for an operator $L$ on a discrete periodic graph 
  is at most $(d{+}1)!\vol(\calN(D))$.  
\end{Corollary}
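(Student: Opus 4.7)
The plan is to deduce Corollary \ref{C:Bound} directly from the two ingredients already assembled: Proposition \ref{P:CPE}, which identifies critical points of $\lambda$ on the Bloch variety with solutions of the critical point equations \eqref{Eq:CPE}, and Theorem \ref{Th:Bound}, which bounds the number of isolated solutions of such a system by $(d{+}1)!\vol(\calN(\psi))$ for any $\psi \in \CC[z^\pm,\lambda]$.

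First I would verify that the dispersion function $D(z,\lambda) = \det(L(z) - \lambda I)$ lies in the ring $\CC[z^\pm, \lambda]$. By formula \eqref{Eq:matrix-entry}, each entry of $L(z)$ is a Laurent polynomial in $z$ with support in $\calA(\Gamma)$, and subtracting $\lambda I$ introduces $\lambda$ with coefficient $-1$ only on the diagonal. Expanding the determinant therefore yields a polynomial of degree $|W|$ in $\lambda$ whose coefficients are Laurent polynomials in $z$, so $D \in \CC[z^\pm, \lambda]$ and Theorem \ref{Th:Bound} is applicable with $\psi = D$.

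Next, by Proposition \ref{P:CPE} the set of critical points of $\lambda$ on $\Var(D(z,\lambda))$ inside $(\CC^\times)^d \times \CC$ equals, as a subset, the solution set of the system \eqref{Eq:psi_system} specialized to $\psi = D$. Since this is a genuine set equality, isolated critical points correspond precisely to isolated solutions of the specialized system, and Theorem \ref{Th:Bound} then yields the stated bound of $(d{+}1)!\vol(\calN(D))$, counted with multiplicity.

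There is no substantive obstacle here, as all the real work has been absorbed into the preceding results: the algebraic reformulation of critical points in Proposition \ref{P:CPE}, and the Kushnirenko-style polyhedral bound in Theorem \ref{Th:Bound} (whose proof via a toric compactification of $(\CC^\times)^d \times \CC$ is what one would need to expand, but which is postponed until the end of the section). The corollary itself is simply the specialization of that general bound to the dispersion function $D$.
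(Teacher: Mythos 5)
Your proposal is correct and matches the paper's own (very brief) derivation: the corollary is obtained by combining Proposition~\ref{P:CPE} with Theorem~\ref{Th:Bound} applied to $\psi = D$. Your additional check that $D \in \CC[z^{\pm},\lambda]$ is a sensible bit of bookkeeping the paper leaves implicit.
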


Theorem A follows from this and Lemma~\ref{L:genericNP}, which asserts that
  \[ \calN(D)\ \subset\ |W|(\conv(\calA(\Gamma)\cup\{\bfe\})\,,\]
  where $\bfe=(0,\dotsc,0,1)$.
  This containment implies the inequality
  \[
  (d{+}1)!\vol(\calN(D))\ \leq
  \ (d{+}1)!|W|^{d+1}\vol(\conv(\calA(\Gamma)\cup\{\bfe\}))\ =\ d!\,|W|^{d+1}\vol(\calA(\Gamma))\,.
  \]

We prove Theorem~\ref{Th:Bound} and Corollary~\ref{C:Bound} after developing some preliminary results.

\subsection{A little algebraic geometry}\label{Sec:AG}
For more from algebraic geometry, see~\cite{CLO,Shaf13}.
An (affine) \demph{variety} is the set of common zeroes of some polynomials $f_1,\dotsc,f_r\in\CC[x_1,\dotsc,x_n]$,
 \[
   \defcolor{\Var(f_1,\dotsc,f_r)}\ \vcentcolon=\  \{x\in\CC^n\mid f_1(x)=\dotsb=f_r(x)=0\}\,.
 \]
We also call this the set of \demph{solutions} to the system $f_1=\dotsb=f_r=0$.   
We may replace any factor $\CC$ in $\CC^n$ by $\CC^\times$, and then allow the corresponding variable to have negative exponents.
The complement of a variety $X$ is a (Zariski) open set.
This defines the \demph{Zariski topology} in which varieties are the closed sets.
A variety is irreducible if it is not the union of two proper subvarieties.
For an irreducible variety, any nonempty open set is dense (even in the classical topology) and any nonempty classically open set is dense in
the Zariski topology. 
Maps $f\colon X\to Y\subset\CC^m$ of varieties are given by $m$ polynomials on $X$ and the image $f(X)$ contains an
open subset of its closure.

Suppose that $X=\Var(f_1,\dotsc,f_r)$.
The \demph{smooth (nonsingular)} locus of $X$ is the open subset of points  of $X$ where the
Jacobian of $f_1\dotsc,f_r$ has maximal rank on $X$.
Let $f$ be a single polynomial.
A point $x$ is a smooth point on the hypersurface $\Var(f)$ defined by $f$ if $f(x)=0$, so that $x\in\Var(f)$ and if the
gradient $\nabla f(x)=(\frac{\partial f}{\partial x_1}(x),\dotsc,\frac{\partial f}{\partial x_n}(x))$ is nonzero,
so that some partial derivative of $f$ does not vanish at $x$.
The point $x\in\Var(f)$ is \demph{singular} if all partial derivatives of $f$ vanish at $x$.
The kernel of the Jacobian at $x\in X=\Var(f_1,\dotsc,f_r)$ is the (Zariski) tangent space at $x$.
The dimension of an irreducible variety is the dimension of a tangent space at any smooth point.
An isolated point $x$ of $X$ has multiplicity one exactly when it is nonsingular.

\begin{Remark}
    Our definition of smooth and singular points of a variety depends upon its defining polynomials.
  For example, the variety defined by $(z-\lambda)^2$ is singular at every point.
  This {\sl scheme-theoretic} notion of singularity is essential to our arguments in
  Sections~\ref{S:B} and~\ref{S:C}, and is standard in algebraic geometry.\hfill$\diamond$
\end{Remark}

If $X$ is irreducible, then any proper subvariety has smaller dimension.
If $f\colon X\to Y$ is a map of varieties with $f(X)$
dense in $Y$, then there is an open subset $U$ of $Y$ such that if $y\in U$, then $\dim f^{-1}(y) + \dim Y = \dim X$.
We also have \demph{Bertini's Theorem}: if $X$ is smooth, then $U$ may be chosen so that for every $y\in U$, $f^{-1}(y)$ is smooth.

Projective space \defcolor{$\PP(\CC^n)$} is the set of one-dimensional linear subspaces (lines) of $\CC^n$ and is compact.
It has dimension $n{-}1$ and subvarieties are given by homogeneous polynomials.
The set \defcolor{$U_0$} of lines spanned by vectors whose initial coordinate is nonzero is isomorphic to $\CC^{n-1}$ under
$v\mapsto\mbox{span}(1,v)$ and $\PP(\CC^n)$ is a compactification of $U_0\simeq\CC^{n-1}$.

\subsection{Polyhedral bounds}\label{Sec:bounds}

The expression $(d{+}1)!\vol(\calN(\psi))$ of Theorem~\ref{Th:Bound} is the \demph{normalized volume} of $\calN(\psi)$.
This is Kushnirenko's bound~\cite[Ch.\ 6, Thm.\ 2.2]{GKZ} for the number of isolated solutions in
$(\CC^\times)^{d+1}$ to a system of $d{+}1$ polynomial equations, all with Newton polytope $\calN(\psi)$.
To prove Theorem~\ref{Th:Bound}, we first explain why Kushnirenko's bound applies to the system~\eqref{Eq:psi_system}, and then why it
bounds the number of isolated solutions on the larger space $(\CC^\times)^d\times\CC$.

For a monomial $z^a\lambda^j$ in $\CC[z^{\pm},\lambda]$, $a\in\ZZ^d$ and $j\in\NN$.
For each $i=1,\dotsc,d$, this monomial is an eigenvector for the operator $z_i\frac{\partial}{\partial z_i}$ with eigenvalue $a_i$.
Thus $\calA(z_i\frac{\partial}{\partial z_i}\psi)\subset\calA(\psi)$, giving the inclusion
$\calN(z_i\frac{\partial}{\partial z_i}\psi)\subset\calN(\psi)$.
A refined version of Kushnirenko's Theorem in which the polynomials may have different Newton polytopes is Bernstein's
theorem~\cite[Sect.\ 7.5]{CLOII}, which is in terms of a quantity called mixed volume, whose properties are developed
in~\cite[Ch.\ IV]{Ewald}. 
The mixed volume of polytopes is monotone under inclusion of polytopes and it equals the normalized volume when all polytopes coincide.
It follows that the theorems of Bernstein and Kushnirenko together give the bound of $(d{+}1)!\vol(\calN(\psi))$ for the number of
isolated solutions to the system~\eqref{Eq:psi_system} in $(\CC^\times)^{d+1}$.
To extend this to solutions in the larger space $(\CC^\times)^d\times\CC$, we develop some theory of projective toric varieties.

\subsection{Projective toric varieties}\label{Sec:Toric}
For Kushnirenko's Theorem and our extension, we replace the nonlinear equations~\eqref{Eq:psi_system} on
$(\CC^\times)^d\times\CC$ by linear equations on a projective variety.
We follow the discussion of~\cite[Ch.\ 3]{IHP}.
Let $\defcolor{f}\in\CC[z^{\pm},\lambda]$ be a polynomial with support $\calA=\calA(f)$.
To  simplify the presentation, we will at times assume that the origin $\bfz$ lies in $\calA$.
The results hold without this assumption, as explained in ~\cite[Ch.\ 3]{IHP}.

Writing \defcolor{$\CC^\calA$} for the vector space with basis indexed by elements of $\calA$, consider the map
\begin{eqnarray*}
  \varphi_\calA\ \colon\ (\CC^\times)^d\times\CC &\longrightarrow& \CC^\calA\\
  (z,\lambda)&\longmapsto & (z^a\lambda^j \mid (a,j)\in\calA)\ .
\end{eqnarray*}
This map linearizes nonlinear polynomials.
Indeed, write $f$ as a sum of monomials,
\[
f\ =\ \sum_{(a,j)\in\calA} c_{(a,j)}z^a\lambda^j \ .
\]
If $\{x_{(a,j)}\mid (a,j)\in\calA\}$ are variables (coordinate functions) on $\CC^\calA$, then 
 \begin{equation}\label{Eq:LinearForm}
\defcolor{\Lambda_f}\ \vcentcolon=\ \sum_{(a,j)\in\calA} c_{(a,j)} x_{(a,j)}
 \end{equation}
is a linear form on $\CC^\calA$, and we have $f(z,\lambda)=\Lambda_f(\varphi_\calA(z,\lambda))=\vcentcolon\defcolor{\varphi^*_\calA(\Lambda_f)}$.

Since $\bfz\in\calA$, the corresponding coordinate $x_{\bfz}$ of $\varphi_\calA$ is 1 and so the image of $\varphi_\calA$ lies in the
principal affine open subset $U_{\bfz}$ of the projective space $\defcolor{\PP^\calA}\vcentcolon=\PP(\CC^\calA)=\PP^{|\calA|-1}$.
This is the subset of $\PP^\calA$ where $x_{\bfz}\neq 0$ and it is isomorphic to the affine space $\CC^{|\calA|-1}$.
We define \defcolor{$X_\calA$} to be the closure of the image $\varphi_\calA((\CC^\times)^{d+1})$ in the projective space $\PP^\calA$,
which is a projective toric variety.
Because the map $\varphi_{\calA}$ is continuous on $(\CC^\times)^d\times\CC$, $X_\calA$ is also the closure of
the image $\varphi_\calA((\CC^\times)^d\times\CC)$.

The map $\varphi_\calA$ is not necessarily injective; we describe its fibers.
Let $\ZZ\calA\subset\ZZ^{d+1}$ be sublattice generated by all differences $\alpha{-}\beta$ for $\alpha,\beta\in\calA$.
When $\bfz\in\calA$ this is the sublattice generated by $\calA$, and it has full rank $d+1$ if and only if $\conv(\calA)$ has full dimension
$d+1$. 
Let \defcolor{$G_\calA$} be $\mbox{Hom}(\ZZ^{d+1}/\ZZ\calA,\CC^\times)\subset(\CC^\times)^{d+1}$, which acts on $(\CC^\times)^d\times\CC$.
The fibers of $\varphi_\calA$ are exactly the orbits of $G_\calA$ on $(\CC^\times)^d\times\CC$.
If $\conv(\calA)$ does not have full dimension, then $G_\calA$ has positive dimension as do all fibers of $\varphi_\calA$, otherwise
$G_\calA$ is a finite group and $\varphi_\calA$ has finite fibers.
On the torus $(\CC^\times)^{d+1}$, $G_\calA$ acts freely and $\varphi_\calA((\CC^\times)^{d+1})$ is identified with 
$(\CC^\times)^{d+1}/G_\calA$.
To describe the fibers of $\varphi_\calA$ on $(\CC^\times)^d\times\{0\} = ((\CC^\times)^d\times\CC)\smallsetminus(\CC^\times)^{d+1}$,
note that $(\CC^\times)^{d+1}$ acts on this through the homomorphism $\pi$ that sends its last ($\lambda$) coordinate to $\{1\}$.
Thus the fibers of $\varphi_\calA$ on $(\CC^\times)^d\times\{0\}$ are exactly the orbits of $\pi(G_\calA)\subset(\CC^\times)^d$.

\begin{Proposition}
  The dimension of $X_\calA$ is the dimension of $\conv(\calA)$.
  The fibers of $\varphi_\calA$ on $(\CC^\times)^{d+1}$ are the orbits of $G_\calA$ and its fibers on $(\CC^\times)^d\times\{0\}$ are the
  orbits of $\pi(G_\calA)$. 
\end{Proposition}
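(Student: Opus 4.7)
The plan is to treat the three claims by exploiting the toric structure: $\varphi_\calA$ restricted to the open torus $(\CC^\times)^{d+1}$ is an algebraic group homomorphism onto the dense torus of $X_\calA$.

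For the dimension claim, I would observe that $\varphi_\calA$ (viewed projectively, normalized by the coordinate $x_{\bfz}=1$) descends to a homomorphism of algebraic tori whose image is the character subtorus corresponding to the sublattice $\ZZ\calA\subset\ZZ^{d+1}$, of dimension $\mathrm{rank}\,\ZZ\calA$; translating so that $\bfz\in\calA$ identifies this rank with $\dim\conv(\calA)$. Since $X_\calA$ is defined as the Zariski closure of this image, $\dim X_\calA=\dim\conv(\calA)$. For the fibers on $(\CC^\times)^{d+1}$, I would invoke the standard principle that fibers of a group homomorphism are cosets of the kernel. That kernel consists of $(w,\mu)\in(\CC^\times)^{d+1}$ with $w^a\mu^j=1$ for every $(a,j)\in\calA$; since characters of $\ZZ^{d+1}$ extend uniquely to their generated sublattice, this is equivalent to triviality on all of $\ZZ\calA$, identifying the kernel with $G_\calA=\mbox{Hom}(\ZZ^{d+1}/\ZZ\calA,\CC^\times)$.

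For the boundary fibers on $(\CC^\times)^d\times\{0\}$, I would restrict $\varphi_\calA$ and observe that every coordinate $(a,j)$ with $j>0$ vanishes identically, so the restriction depends only on $\calA_0\vcentcolon=\{a\in\ZZ^d:(a,0)\in\calA\}$. Since the $\lambda$-factor of $(\CC^\times)^{d+1}$ acts trivially on $\lambda=0$, the $(\CC^\times)^{d+1}$-action on this locus factors through $\pi$, so the $G_\calA$-invariance of $\varphi_\calA$ descends to $\pi(G_\calA)$-invariance of the restriction; this gives the easy containment that each $\pi(G_\calA)$-orbit sits inside a single fiber. The main obstacle is the reverse containment: showing the stabilizer of the restricted map equals $\pi(G_\calA)$, which reduces to proving that $\pi|_{G_\calA}$ surjects onto $\mbox{Hom}(\ZZ^d/\ZZ\calA_0,\CC^\times)$. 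I would deduce this from an extension-of-characters argument on the short exact sequence $0\to\ZZ\calA\cap(\ZZ^d\times\{0\})\to\ZZ\calA\to\pi(\ZZ\calA)\to 0$, using divisibility of $\CC^\times$ to lift characters, after verifying the natural identification $\ZZ\calA_0=\ZZ\calA\cap(\ZZ^d\times\{0\})$ which holds under the structural hypotheses on $\calA$ relevant to our applications.
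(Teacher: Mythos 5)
Your argument follows the same route as the paper's: the Proposition appears without a separate proof, its justification being exactly the preceding discussion --- $\varphi_\calA$ restricted to $(\CC^\times)^{d+1}$ is a group homomorphism onto a subtorus with kernel $G_\calA$ (giving the dimension count and the generic fibers), while on $(\CC^\times)^d\times\{0\}$ the torus action factors through $\pi$. Your treatment of the first two claims is correct and matches the paper.

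Where you go beyond the paper is the boundary fibers, and your instinct there is right. The paper's one-line justification (``the action factors through $\pi$, thus the fibers are the orbits of $\pi(G_\calA)$'') delivers only the easy containment you describe: each $\pi(G_\calA)$-orbit sits inside a fiber. The reverse containment is exactly the point you isolate, and your reduction is correct: a fiber over $\lambda=0$ is an orbit of $H\vcentcolon=\mbox{Hom}(\ZZ^d/\ZZ\calA_0,\CC^\times)$ where $\calA_0=\{a:(a,0)\in\calA\}$, one always has $\pi(G_\calA)\subseteq H$, and divisibility of $\CC^\times$ gives $H\subseteq\pi(G_\calA)$ precisely when $\ZZ\calA\cap(\ZZ^d\times\{0\})=\ZZ\calA_0\times\{0\}$. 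Be aware that this identity is a genuine hypothesis rather than a formality: for $d=1$ and $\calA=\{(0,0),(1,1),(-1,1)\}$ one has $\ZZ\calA_0=0$ while $(2,0)=(1,1)-(-1,1)\in\ZZ\calA$, so $H=\CC^\times$ but $\pi(G_\calA)=\{\pm1\}$; indeed $\varphi_\calA$ collapses all of $\CC^\times\times\{0\}$ to a single point, and the stated conclusion fails. So the Proposition implicitly carries this lattice condition (it is also what legitimizes the fiber bound $|\pi(G_\calA)|\le|G_\calA|$ invoked in the proof of Theorem~\ref{Th:Bound}), and your closing hedge --- that the identification must be verified for the supports actually arising from dispersion functions --- is exactly the right thing to insist on rather than a defect of your argument.
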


We return to the situation of Theorem~\ref{Th:Bound}.
Let $\psi\in\CC[z^{\pm},\lambda]$ be a polynomial with support $\calA$.
As each polynomial in~\eqref{Eq:psi_system} has support a subset of $\calA$, each corresponds to a linear form on $\PP^\calA$ as
in~\eqref{Eq:LinearForm}. 
The corresponding system of linear forms defines a linear subspace \defcolor{$M_\psi$} of $\PP^\calA$.
We have the following proposition (a version of~\cite[Lemma 3.5]{IHP}).

\begin{Proposition}\label{P:bijection}
  The solutions to~\eqref{Eq:psi_system} are the inverse images under $\varphi_\calA$ of 
  points in the linear section $\varphi_\calA((\CC^\times)^d\times\CC)\cap M_\psi$.
  When $\varphi_\calA$ is an injection, it is a bijection between solutions to~\eqref{Eq:psi_system} on $(\CC^\times)^d\times\CC$ and points
  in  $\varphi_\calA((\CC^\times)^d\times\CC)\cap M_\psi$.
\end{Proposition}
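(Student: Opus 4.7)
The plan is to translate the polynomial system \eqref{Eq:psi_system} into a linear condition on $\PP^\calA$ and then read off the two assertions. First I would verify that each of the $d{+}1$ polynomials appearing in \eqref{Eq:psi_system} has support contained in $\calA$. Since the operator $z_i\frac{\partial}{\partial z_i}$ acts on a monomial $z^a\lambda^j$ as scalar multiplication by $a_i$, it merely rescales coefficients of $\psi$ and does not enlarge its support; hence $\calA(z_i\partial_{z_i}\psi)\subseteq\calA(\psi)=\calA$ for each $i$.

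Because every $g\in\{\psi,\,z_1\partial_{z_1}\psi,\dots,z_d\partial_{z_d}\psi\}$ has support in $\calA$, the construction \eqref{Eq:LinearForm} yields a linear form $\Lambda_g$ on $\CC^\calA$ satisfying $g=\varphi_\calA^*(\Lambda_g)$. By definition $M_\psi\subseteq\PP^\calA$ is the projective linear subspace cut out by these $d{+}1$ forms. A point $(z,\lambda)\in(\CC^\times)^d\times\CC$ therefore satisfies \eqref{Eq:psi_system} if and only if $\Lambda_g(\varphi_\calA(z,\lambda))=0$ for every such $g$, equivalently $\varphi_\calA(z,\lambda)\in M_\psi$. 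Since $\varphi_\calA(z,\lambda)$ automatically lies in $\varphi_\calA((\CC^\times)^d\times\CC)$, this is the same as $\varphi_\calA(z,\lambda)\in\varphi_\calA((\CC^\times)^d\times\CC)\cap M_\psi$. Taking preimages under $\varphi_\calA$ then gives the first assertion.

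For the second assertion, if $\varphi_\calA$ is injective on $(\CC^\times)^d\times\CC$, then its restriction to the solution set is also injective; by the first assertion its image is exactly $\varphi_\calA((\CC^\times)^d\times\CC)\cap M_\psi$, so this restriction is the claimed bijection.

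I do not foresee a genuine obstacle: the statement is essentially an unwinding of the definitions of $\varphi_\calA$, $\Lambda_f$, and $M_\psi$. The one point that must not be skipped is the closure of $\calA$ under each operator $z_i\partial_{z_i}$; without it the $d{+}1$ equations would not share a common ambient $\CC^\calA$ and could not be assembled into a single linear section $M_\psi$ of $\PP^\calA$.
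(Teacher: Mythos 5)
Your proof is correct and is essentially the intended argument: the paper states this proposition without proof (citing a version of Lemma 3.5 of the reference for sparse systems), and your definitional unwinding — noting that $z_i\partial_{z_i}$ preserves support so all $d{+}1$ polynomials pull back from linear forms on $\PP^\calA$ cutting out $M_\psi$ — is exactly the mechanism the surrounding text of Section 2.3 sets up.
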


\begin{proof}[Proof of Theorem~\ref{Th:Bound}]
 When $\vol(\calN(\psi))=0$, so that $\calN(\psi)$ does not have full dimension $d+1$, then each fiber of $\varphi_{\calA}$ is
 positive-dimensional and so by Proposition~\ref{P:bijection} there 
 are no isolated solutions to~\eqref{Eq:psi_system}.

 Suppose that $\vol(\calN(\psi))>0$.
 Then every fiber of $\varphi_\calA$ is an orbit of the finite group $G_\calA$.
 Over points of $\varphi_\calA((\CC^\times)^{d+1})$, each fiber consists of $|G_\calA|$ points and over
 $\varphi_\calA((\CC^\times)^d\times\{0\}$ each fiber consists of $|\pi(G_\calA)|\leq|G_\calA|$ points.
 As $X_\calA$ is the closure of $\varphi_\calA((\CC^\times)^d\times\CC)$, the number of isolated points in $X_\calA\cap M_\psi$ is at least
 the number of isolated points in $\varphi_\calA((\CC^\times)^d\times\CC)\cap M_\psi$, both counted with multiplicity.
 The degree of the projective variety $X_\calA$ is an upper bound for the number of isolated points in $X_\calA\cap M_\psi$,
 which is explained in~\cite[Ch.\ 3.3]{IHP}.
 There, the product of $|G_\calA|$ and the degree of $X_\calA$ is shown to be $(d{+}1)!\vol(\calN(\psi))$,
 the normalized volume of the Newton polytope of $\psi$.
 This gives the bound of Theorem~\ref{Th:Bound}.
 That all points are isolated when the bound of the degree is attained is Proposition~\ref{P:sharp} in the next section.
\end{proof}


\section{Proof of Theorem B}\label{S:B}

We give conditions for when the upper bound of Corollary~\ref{C:Bound} is attained.
By Proposition~\ref{P:CPE}, the critical points of the function $\lambda$ on the Bloch variety $\Var(D)$ are the solutions in
$(\CC^\times)^d\times\CC$ to the critical point equations~\eqref{Eq:CPE}.
Let $\defcolor{\calA}=\calA(D)$ be the support of the polynomial $D$.
The critical points are $\varphi_\calA^{-1}(X_\calA\cap M_{D})$, where $X_\calA\subset\PP^\calA$ is the closure of
$\varphi_\calA((\CC^\times)^d\times\CC)$ and $M_{D}$ is the subspace of $\PP^\calA$ defined by linear forms corresponding (as
in~\eqref{Eq:LinearForm}) to the polynomials in~\eqref{Eq:CPE}. 
For the bound of Theorem~\ref{Th:Bound} and Corollary~\ref{C:Bound}, note that the number of isolated points of $X_\calA\cap M_{D}$
is at most the product of the degree of $X_\calA$ with the cardinality of a fiber of $\varphi_\calA$, which is
$(d{+}1)!\vol(\calN(D))$.
We establish Theorem B concerning the sharpness of this bound by characterizing when the inequality of
Theorem~\ref{Th:Bound} is strict and then interpreting that for the critical point equations.

\begin{Remark}\label{R:excess}
  Let $X\subset\PP^n$ be a 
  variety of dimension $d$ and $M\subset\PP^n$ a linear subspace of codimension $d$.
  The number of points in $X\cap M$ does not depend on $M$ when the intersection is transverse; it is the \demph{degree} of 
  $X$~\cite[p.\ 234]{Shaf13}.
  When the intersection is not transverse, intersection theory gives a refinement~\cite[Ch.\ 6]{Fulton}.
  For each irreducible component $Z$ of the intersection $X\cap M$, there is a positive integer---the intersection multiplicity along
  $Z$--such that the sum of these multiplicities is the degree of $X$.
  When $Z$ is positive-dimensional this number is the degree of a zero-cycle constructed on $Z$ (it is at least the degree of $Z$) and when
  $Z$ is zero-dimensional 
  (a point), it is the local multiplicity~\cite[Ch.\ 4]{Shaf13}.
  %
  %
  %
 \hfill$\diamond$
\end{Remark}  

A consequence of Remark~\ref{R:excess} is the following.

\begin{Proposition}\label{P:sharp}
  Let $X,M$ be as in Remark~\ref{R:excess}.
  The number (counted with multiplicity) of isolated points of $X\cap M$ is strictly less than the degree of $X$ if and only if 
  the intersection has a positive-dimensional component.
\end{Proposition}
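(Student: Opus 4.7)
The plan is to reduce the proposition directly to the additivity formula stated in Remark~\ref{R:excess} together with the positivity of intersection multiplicities.

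First I would decompose $X\cap M$ into its irreducible components $Z_1,\dotsc,Z_r$, with $m_i$ the intersection multiplicity of $X\cap M$ along $Z_i$ as described in Remark~\ref{R:excess}. The key identity is
\[
\deg X\ =\ \sum_{i=1}^r m_i\,.
\]
I would then split this sum according to dimension: let $I\subset\{1,\dotsc,r\}$ index the zero-dimensional components (the isolated points) and $J$ index the positive-dimensional ones. By the definition of the local multiplicity at an isolated point, the number of isolated points of $X\cap M$ counted with multiplicity is exactly $\sum_{i\in I} m_i$.

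The next step is to observe that each intersection multiplicity $m_i$ is a strictly positive integer. For $i\in I$ this is immediate from the definition of local multiplicity. For $i\in J$, Remark~\ref{R:excess} records that $m_i$ is at least $\deg Z_i\geq 1$. Consequently
\[
\deg X\ -\ \sum_{i\in I} m_i\ =\ \sum_{i\in J} m_i\ \geq\ |J|\,,
\]
so the isolated-point count (with multiplicity) is strictly less than $\deg X$ precisely when $J\neq\emptyset$, i.e.\ precisely when $X\cap M$ has a positive-dimensional component. This gives both directions of the biconditional at once.

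I do not anticipate a real obstacle here: the content of the statement is already packaged into Remark~\ref{R:excess}, and the proof is essentially bookkeeping on the additivity formula. The only subtle point is to make sure we are using the correct convention that the ``number of isolated points counted with multiplicity'' refers to the sum of the local multiplicities at the zero-dimensional components (and not, say, to the cardinality of the isolated locus), which matches the usage in the statement of Theorem~\ref{Th:Bound} that this proposition is meant to support.
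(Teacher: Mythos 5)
Your argument is exactly the one the paper intends: Proposition~\ref{P:sharp} is stated as a direct consequence of Remark~\ref{R:excess}, and your bookkeeping with the additivity formula $\deg X=\sum_i m_i$, the positivity of each $m_i$, and the split into zero- and positive-dimensional components is the implicit proof. Correct and same approach.
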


Write $\defcolor{X^\circ_\calA}\vcentcolon=\varphi_\calA((\CC^\times)^d\times\CC)$ for the image of $\varphi_\calA$ and
$\defcolor{\partial X_\calA}\vcentcolon=X_\calA\smallsetminus X^\circ_\calA$, the points of $X_\calA$ added to $X^\circ_\calA$ when taking the closure.
This is the \demph{boundary} of $X_\calA$.
In the Introduction, points of $\partial X_\calA$ were referred to as `lying at infinity'.

\begin{Corollary}\label{C:MeetBoundary}
  For a polynomial $\psi\in\CC[z^{\pm},\lambda]$, the inequality of Theorem~\ref{Th:Bound} is strict if and only if
  $\partial X_\calA\cap M_\psi\neq\emptyset$.
\end{Corollary}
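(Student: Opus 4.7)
My plan is to deduce Corollary~\ref{C:MeetBoundary} from Proposition~\ref{P:sharp}, using the key geometric fact that $X^\circ_\calA$ is an affine variety. As a quotient of the affine variety $(\CC^\times)^d \times \CC = \mathrm{Spec}\,\CC[z_1^{\pm},\dotsc,z_d^{\pm},\lambda]$ by the finite group $G_\calA$, $X^\circ_\calA$ is itself affine. Consequently, any closed subvariety of $X_\calA$ contained in $X^\circ_\calA$ is simultaneously projective (closed in $X_\calA$) and affine (closed in $X^\circ_\calA$), hence zero-dimensional.

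Next, I would organize the proof of Theorem~\ref{Th:Bound} as a chain
\[
  N_{\text{sol}}\ \leq\ |G_\calA|\cdot N_\calA\ \leq\ |G_\calA|\cdot \deg(X_\calA)\ =\ (d{+}1)!\vol(\calN(\psi)),
\]
where $N_{\text{sol}}$ counts isolated solutions to~\eqref{Eq:psi_system} in $(\CC^\times)^d\times\CC$ with multiplicity, and $N_\calA$ is the analogous count on $X_\calA\cap M_\psi$. The bound of Theorem~\ref{Th:Bound} is attained iff both inequalities are equalities, and the plan is to show that the two equalities hold simultaneously iff $\partial X_\calA \cap M_\psi = \emptyset$.

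Assuming $\partial X_\calA \cap M_\psi = \emptyset$, the inclusion $X_\calA \cap M_\psi \subset X^\circ_\calA$ combined with the affineness observation forces $X_\calA \cap M_\psi$ to be finite; Proposition~\ref{P:sharp} then upgrades this to $N_\calA = \deg(X_\calA)$. Equality in the first inequality then follows from $\varphi_\calA$ being a finite surjection of degree $|G_\calA|$ whose pullback of the linear form $\Lambda_f$ of~\eqref{Eq:LinearForm} is $f$: the scheme cut out by~\eqref{Eq:psi_system} is precisely $\varphi_\calA^{-1}(X^\circ_\calA\cap M_\psi)$, with multiplicities scaling by $|G_\calA|$. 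Conversely, suppose $p \in \partial X_\calA \cap M_\psi$. If $p$ lies in a positive-dimensional component of $X_\calA\cap M_\psi$, Proposition~\ref{P:sharp} directly yields $N_\calA < \deg(X_\calA)$. If $p$ is isolated in $X_\calA\cap M_\psi$, its multiplicity contributes to $N_\calA$ but is discarded in $N_{\text{sol}}$, since $p$ has no preimage under $\varphi_\calA$ in $(\CC^\times)^d\times\CC$. Either way the bound is strict.

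The main obstacle I anticipate is the scheme-theoretic multiplicity bookkeeping for $\varphi_\calA$, in particular along the subvariety $\{\lambda = 0\}$ where set-theoretic fibers of $\varphi_\calA$ can shrink from $|G_\calA|$ to $|\pi(G_\calA)|$. This is reconciled by the finite flat map $\varphi_\calA$ having global degree $|G_\calA|$, so scheme-theoretic multiplicities over any fiber always sum to $|G_\calA|$ times the base multiplicity, independent of the set-theoretic fiber size.
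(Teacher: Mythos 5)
Your argument is correct and follows essentially the same route as the paper's: Proposition~\ref{P:sharp} combined with the observation that $X^\circ_\calA$ is affine, so a positive-dimensional component of $X_\calA\cap M_\psi$ must meet $\partial X_\calA$, while an isolated point of the intersection lying on $\partial X_\calA$ contributes to the degree count on $X_\calA$ but not to the solution count on $(\CC^\times)^d\times\CC$. The only difference is presentational: the paper leaves the multiplicity bookkeeping for $\varphi_\calA$ implicit in the proof of Theorem~\ref{Th:Bound}, whereas you spell it out via the degree of the finite map.
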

\begin{proof}
  The inequality of Theorem~\ref{Th:Bound} is strict if either of the following hold.
  \begin{enumerate}
   \item $X_\calA\cap M_\psi$ has an isolated point not lying in $X^\circ_\calA$.
   \item $X_\calA\cap M_\psi$ contains a positive-dimensional component $Z$.
  \end{enumerate}
  In (1), $X_\calA\cap M_\psi$ has isolated points in $\partial X_\calA\cap M_\psi$, so the intersection is
  nonempty.
  In (2), $Z$ is a projective variety of dimension at least one.
  The set $X^\circ_\calA$ is an affine variety, and we cannot have $Z\subset X^\circ_\calA$ as the only projective
  varieties that are also subvarieties of an affine variety are points.
  Thus $Z\cap \partial X_\calA\neq\emptyset$, which completes the proof.
\end{proof}

\subsection{Facial systems}\label{Sec:facial}
We return to the general case of a toric variety.
Let $\calA\subset\ZZ^n$ be a finite set of points with corresponding projective toric variety $X_\calA\subset\PP^\calA$.
We have the following description of the points of its boundary,
$X_\calA\smallsetminus\varphi_\calA((\CC^\times)^n)$. 

Let $\defcolor{P}\vcentcolon=\conv(\calA)$, the convex hull of $\calA$.
The dot product with a nonzero vector $w\in\RR^n$, $a\mapsto w\cdot a$, defines a linear function on $\RR^n$.
For $w\in\RR^n$, set $\defcolor{h(w)}\vcentcolon=\min\{w\cdot a\mid a\in P\}$.
The set $F=\{p\in P\mid w\cdot p=h(w)\}$ of minimizers is the \demph{face} of $P$ \demph{exposed} by $w$.
We have that $F=\conv(F\cap\calA)$, and may write \defcolor{$\calF$} for $F\cap\calA$.
As $\calA\subset\ZZ^n$, we only need integer vectors $w\in\ZZ^n$ to expose all faces of $P$.
If $\dim F = \dim P-1$, then $F$ is a \demph{facet}.

For each face $F$ of $P$, there is a corresponding coordinate subspace \defcolor{$\PP^\calF$} of $\PP^\calA$---this
is the set of points $z=[z_a\mid a\in\calA]\in\PP^\calA$ such that $a\not\in F$ implies that $z_a=0$.
The image of the map $\varphi_{\calF}\colon(\CC^\times)^n\to\PP^{\calF}\subset\PP^\calA$ has closure the toric variety $X_{\calF}$.
Its dimension is equal to the dimension of the face $F$.
Write \defcolor{$X_{\calF}^\circ$} for the image of $\varphi_{\calF}$.
This description and the following proposition is essentially~\cite[Prop.\ 5.1.9]{GKZ}.

\begin{Proposition}
  The boundary of the toric variety $X_\calA$ is the disjoint union of the sets $X_{\calF}^\circ$ for all the proper faces
  $F$ of $\conv(\calA)$.
\end{Proposition}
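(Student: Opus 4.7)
The plan is to prove two inclusions and disjointness separately, using one-parameter subgroups of the torus $(\CC^\times)^n$ to move between the interior $X_\calA^\circ$ and the boundary. Disjointness is immediate from the coordinate description: a point $p=[p_a:a\in\calA]\in X_{\calF}^\circ$ is the image under $\varphi_{\calF}$ of some $z\in(\CC^\times)^n$, so $p_a=z^a\neq 0$ for $a\in\calF$ and $p_a=0$ for $a\notin\calF$. Thus the support $\{a\in\calA:p_a\neq 0\}$ equals $\calF$, and since $F=\conv(\calF)$, distinct faces yield disjoint images.

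Next I would establish the easy inclusion: each $X_{\calF}^\circ$ with $F$ a proper face lies in $\partial X_\calA$. Let $w\in\ZZ^n$ expose $F$, so $w\cdot a=h(w)$ for $a\in\calF$ and $w\cdot a>h(w)$ otherwise. For any $z\in(\CC^\times)^n$, consider the one-parameter family $\varphi_\calA(t^w\cdot z)$ in $X_\calA^\circ$, whose $a$-coordinate is $t^{w\cdot a}z^a$. After rescaling projectively by $t^{-h(w)}$, the $a$-coordinate becomes $t^{w\cdot a-h(w)}z^a$, which as $t\to 0$ tends to $z^a$ when $a\in\calF$ and to $0$ otherwise. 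The limit equals $\varphi_{\calF}(z)\in X_{\calF}^\circ$, so $X_{\calF}^\circ\subset X_\calA$. Properness of $F$ means some coordinate vanishes identically on $X_{\calF}^\circ$, so $X_{\calF}^\circ\cap X_\calA^\circ=\emptyset$ and $X_{\calF}^\circ\subset\partial X_\calA$.

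The reverse inclusion is the main obstacle: every $p\in\partial X_\calA$ lies in $X_{\calF}^\circ$ for some proper face $F$. The approach is via curve selection. Since $p\in\overline{X_\calA^\circ}$ and $X_\calA$ is projective, there is an analytic arc $\gamma\colon(0,\epsilon)\to(\CC^\times)^n$ with $\varphi_\calA(\gamma(t))\to p$ as $t\to 0$. Writing $\gamma_i(t)=t^{w_i}u_i(t)$ with $w\in\ZZ^n$ and $u_i(0)\neq 0$, the $a$-coordinate of $\varphi_\calA(\gamma(t))$ has leading order $t^{w\cdot a}$. After normalizing by $t^{-h(w)}$, the only coordinates that survive in the limit are those with $a$ in the face $F$ exposed by $w$, and the limit coincides with $\varphi_{\calF}(u(0))\in X_{\calF}^\circ$. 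Since $p\in\partial X_\calA$, at least one coordinate vanishes in this limit, forcing $F$ to be a proper face.

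The technical heart is the tropicalization step in the third paragraph: identifying $\calF$ with a genuine face of $\conv(\calA)$ purely from the valuation vector $w$, and verifying that the resulting limit really is the same point $p$ (independent of arc choice). This relies on the fact that the map $w\mapsto F$ determined by $h(w)=\min\{w\cdot a:a\in\conv(\calA)\}$ produces exactly the proper faces of the Newton polytope as $w$ ranges over $\ZZ^n\smallsetminus\{\bfz\}$, which is the content of the normal fan / face-fan duality for $\conv(\calA)$. A shortcut would be to invoke the orbit-cone correspondence for projective toric varieties, as in~\cite[Prop.\ 5.1.9]{GKZ}.
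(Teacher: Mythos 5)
Your argument is correct in outline, but it is worth noting that the paper does not actually prove this proposition: it simply observes that the statement "is essentially \cite[Prop.\ 5.1.9]{GKZ}" and leaves the proof to that reference (which establishes it via the orbit decomposition of a projective toric variety). Your proposal instead gives a self-contained proof, and its three steps are the standard ones: disjointness from the fact that the support $\{a:p_a\neq 0\}$ of a point of $X_\calF^\circ$ is exactly $\calF$ and $F=\conv(\calF)$ recovers $F$; the inclusion $X_\calF^\circ\subset\partial X_\calA$ by degenerating along the one-parameter subgroup $t^w$ for $w$ exposing $F$ (using that points of $X_\calA^\circ$ have all coordinates nonzero, since here $\varphi_\calA$ is restricted to the full torus $(\CC^\times)^n$); and the reverse inclusion by a valuation/tropicalization argument. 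The one place that needs care is the curve-selection step: an arbitrary analytic arc $\gamma\colon(0,\epsilon)\to(\CC^\times)^n$ with $\varphi_\calA(\gamma(t))\to p$ need not admit a factorization $\gamma_i(t)=t^{w_i}u_i(t)$ with $w_i\in\ZZ$ and $u_i(0)\neq 0$; you should either produce the arc via the valuative criterion of properness (giving Laurent/Puiseux coordinates, then reparametrize $t\mapsto t^N$ to clear denominators) or invoke the curve selection lemma for constructible sets together with a Puiseux expansion. With that repaired, the case $w=\bfz$ (or more generally $w$ constant on $\calA$) is correctly excluded because the limit would then lie in $X_\calA^\circ$, so the exposed face is indeed proper. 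What your route buys is independence from the orbit--cone correspondence and an explicit identification of which face a given boundary point lies on; what the paper's citation buys is brevity and the additional structural information (torus orbits, their closures and adjacencies) that the application in Section~3.1 implicitly uses.
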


Let $f=\sum_{a\in\calA} c_a x^a$ be a polynomial with support $\calA$.
We observed that if $\Lambda$ is the corresponding linear form~\eqref{Eq:LinearForm} on $\PP^\calA$, then the variety
$\Var(f)\subset(\CC^\times)^n$ of $f$ is the pullback along $\varphi_\calA$ of $X^\circ_\calA\cap M$, where
$\defcolor{M}\vcentcolon=\Var(\Lambda)$ is the hyperplane defined by $\Lambda$.
Let $F$ be a proper face of $P$.
Then $X^\circ_{\calF}\cap M$ pulls back along $\varphi_{\calF}$ to the variety of
\[
\varphi^{-1}_{\calF}(\Lambda)\ =\ \sum_{a\in F} c_a x^a
\]
in $(\CC^\times)^n$.
This sum of the terms of $f$ whose exponents lie in $F$ is a \demph{facial form} of $f$ and is written \defcolor{$f|_F$}.
Given a system $\Phi\colon f_1=\dotsb=f_n=0$ involving Laurent polynomials with support $\calA$, the system $f_1|_F=\dotsb=f_n|_F=0$
of their facial forms is the \demph{facial system} \defcolor{$\Phi|_F$} of $\Phi$.

\begin{Corollary}\label{C:Sings}
  Let $M$ be the intersection of the hyperplanes given by the polynomials in a system $\Phi$ of Laurent polynomials
  with support $\calA$.
  For each face $F$ of $\conv(\calA)$, the points of $X^\circ_{\calF}\cap M$ pull back under $\varphi_{\calF}$ to the solutions of the facial
  system $\Phi|_F$. 

  If no facial system $\Phi|_F$ has a solution, then the number of solutions to $\Phi=0$ on $(\CC^\times)^n$ is
  $n!\vol(\conv(\calA))$.
\end{Corollary}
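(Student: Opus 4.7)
The plan is to deduce both assertions from the facial decomposition $\partial X_\calA = \bigsqcup_F X^\circ_{\calF}$, indexed by proper faces $F$ of $\conv(\calA)$, combined with a natural generalization of Corollary~\ref{C:MeetBoundary}.

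For the first assertion, I would apply the pullback argument from the paragraph preceding the statement to each polynomial $f_i$ in $\Phi$ individually. That argument identifies the pullback of the hyperplane $\Var(\Lambda_i)\subset\PP^\calA$ along $\varphi_{\calF}$ with the zero locus in $(\CC^\times)^n$ of the facial form $f_i|_F$, since restricting $\Lambda_i = \sum_{a\in\calA} c_a x_a$ to the coordinate subspace $\PP^\calF$ kills every term with $a\notin F$. Intersecting over $i$, the linear subspace $M=\bigcap_i\Var(\Lambda_i)$ meets $X^\circ_{\calF}$ in the image under $\varphi_{\calF}$ of the common zeroes of the facial forms $f_i|_F$, which are the solutions to $\Phi|_F$ in $(\CC^\times)^n$.

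For the enumerative assertion, suppose no facial system $\Phi|_F$ has a solution. By the first assertion, $X^\circ_{\calF}\cap M = \emptyset$ for every proper face $F$, and then the decomposition of $\partial X_\calA$ yields $\partial X_\calA\cap M = \emptyset$. This is precisely the hypothesis of Corollary~\ref{C:MeetBoundary}, whose proof is purely geometric and transfers verbatim from $\CC[z^\pm,\lambda]$ to an arbitrary $\calA\subset\ZZ^n$. It follows that the inequality of Theorem~\ref{Th:Bound} in the form $n!\vol(\conv(\calA))$ is an equality, and that all solutions to $\Phi$ on $(\CC^\times)^n$ are isolated, so the count is exactly the Kushnirenko bound.

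The main step requiring care is the extension of Corollary~\ref{C:MeetBoundary} to general $\calA\subset\ZZ^n$, which reduces to two facts already implicit in the excerpt: Proposition~\ref{P:sharp} applies to any projective variety cut by a linear subspace of complementary codimension, and any positive-dimensional component of $X_\calA\cap M$ must meet $\partial X_\calA$ because a positive-dimensional projective variety cannot be contained in the affine set $X^\circ_\calA$. Neither observation uses anything specific to the coordinate $\lambda$, so the generalization is routine, and the chain of implications from \emph{no facial system has a solution} to \emph{the Kushnirenko bound is attained} closes.
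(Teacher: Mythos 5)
Your proposal is correct and follows the paper's own route: the first assertion is obtained by applying the single-polynomial pullback observation to each $f_i$, and the second by combining the facial decomposition of $\partial X_\calA$ with the evident generalization of Corollary~\ref{C:MeetBoundary} from $(\CC^\times)^d\times\CC$ to $(\CC^\times)^n$. The paper's proof is exactly this (stated in two sentences, citing Bernstein/Kushnirenko for the count), so you have simply supplied the details it leaves implicit.
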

\begin{proof}
 The first statement follows from the observation about a single polynomial $f$ and its facial form $f|_F$, and the second is a consequence
 of a version of Corollary~\ref{C:MeetBoundary} for $X_\calA\smallsetminus\varphi_\calA((\CC^\times)^n)$.  
\end{proof}

The second statement is essentially~\cite[Thm.\ B]{Bernstein} and is also explained in~\cite[Sect.\ 3.4]{IHP}.

\subsection{Facial systems of the critical point equations}\label{S:facial}
We prove Theorem B from the Introduction by interpreting the facial systems of the critical point equations.
It is useful to introduce the following notion.
A polynomial $f(x)$ in $x\in(\CC^\times)^n$ is \demph{quasi-homogeneous} with \demph{quasi-homogeneity}
$w\in\ZZ^n$ if there is a number $0\neq w_f$ such that
\[
     a\ \in\ \calA(f)\ \Longrightarrow\ w\cdot a\ =\ w_f\,.
\]
Equivalently, $f$ is quasi-homogeneous if its support $\calA(f)$ lies on a hyperplane not containing the origin.
The quasi-homogeneities of $f$ are those $w\in\ZZ^n$ whose dot product is constant on $\calA(f)$.
For $t\in \CC^\times$ and $w\in\ZZ^n$, let $\defcolor{t^w}\vcentcolon=(t^{w_1},\dotsc,t^{w_n})\in(\CC^\times)^n$.

\begin{Lemma}\label{L:quasi-homogeneous}
  Suppose that $f$ has a quasi-homogeneity $w\in\ZZ^n$.
  Then
  \begin{enumerate}
    \item For $t\in\CC^\times$ and $x\in(\CC^\times)^n$, we have $f(t^w\cdot x)=t^{w_f} f(x)$.
    \item We have
      \[
      w_f \; f\ =\ \sum_{i=1}^n w_i\,x_i \frac{\partial f}{\partial x_i}\,.
      \]
   \end{enumerate}    
\end{Lemma}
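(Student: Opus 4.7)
The plan is to expand $f$ in its monomial form $f(x)=\sum_{a\in\calA(f)} c_a x^a$ and exploit the defining property of quasi-homogeneity, namely $w\cdot a=w_f$ for every $a\in\calA(f)$. Both parts are essentially direct computations; the work is organizing notation cleanly, and I do not expect any serious obstacle.

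For part (1), I would compute $f(t^w\cdot x)$ monomial by monomial. Using multi-index notation,
\[
(t^w\cdot x)^a\ =\ \prod_{i=1}^n (t^{w_i} x_i)^{a_i}\ =\ t^{w\cdot a}\, x^a\,.
\]
Since every $a\in\calA(f)$ satisfies $w\cdot a=w_f$, the scalar $t^{w\cdot a}=t^{w_f}$ factors out of the sum, giving $f(t^w\cdot x)=t^{w_f}f(x)$.

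For part (2), I would differentiate $w_i x_i \cdot x^a$ and sum, using that $x_i\frac{\partial}{\partial x_i}$ acts on $x^a$ by the eigenvalue $a_i$ (as already noted in Section~\ref{Sec:bounds}). Then
\[
\sum_{i=1}^n w_i\, x_i\frac{\partial f}{\partial x_i}\ =\ \sum_{a\in\calA(f)} c_a \Bigl(\sum_{i=1}^n w_i a_i\Bigr) x^a\ =\ \sum_{a\in\calA(f)} c_a\,(w\cdot a)\, x^a\ =\ w_f \sum_{a\in\calA(f)} c_a x^a\ =\ w_f f,
\]
again invoking $w\cdot a=w_f$ on the support. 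Alternatively, one could derive (2) from (1) by differentiating the identity $f(t^w\cdot x)=t^{w_f}f(x)$ in $t$ and setting $t=1$, which reproduces the Euler-type relation via the chain rule; I would mention this as a conceptual remark but carry out the direct computation as the actual proof, since it is the shortest route and avoids any smoothness considerations.
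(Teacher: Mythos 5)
Your proof is correct and follows essentially the same route as the paper: both parts are the same monomial-by-monomial computation using $(t^w\cdot x)^a=t^{w\cdot a}x^a$ and the eigenvalue identity $x_i\frac{\partial}{\partial x_i}x^a=a_ix^a$, which is exactly the paper's (more tersely stated) argument. The alternative derivation of (2) from (1) by differentiating in $t$ is a fine remark but not needed.
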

\begin{proof}
  Note that for $a\in\ZZ^n$, $(t^w\cdot x)^a = t^{w\cdot a} x^a$.
  The first statement follows.
  For the second, note that $a_ix^a= x_i \frac{\partial}{\partial x_i} x^a$.
\end{proof}

Let $\psi\in\CC[z^{\pm},\lambda]$ have support $\calA\subset\ZZ^d\times\NN$ and Newton polytope
$\defcolor{P}\vcentcolon=\conv(\calA)$.
We will assume that $P$ has dimension $d{+}1$, and also that
$\calA\cap\ZZ^d\times\{0\}$ is a facet of $\calA$, called its \demph{base}.
Let~\eqref{Eq:psi_system} be the critical point equations for $\lambda$ on $\psi$ and $M_\psi\subset\PP^\calA$ the corresponding linear
subspace of codimension $d{+}1$.

Let $\defcolor{\bfz}\vcentcolon= 0^d$ in $\ZZ^d$ and $\defcolor{\bfe}\vcentcolon=(\bfz,1)$.
The base of $\calA$ is exposed by $\bfe$ and it is the support of $\psi(z,0)$.
A main difference between the sparse equations of Section~\ref{Sec:facial} and the critical point
equations~\eqref{Eq:CPE} is that the critical point equations allow solutions with $\lambda=0$, which is the component of the
boundary of the toric variety corresponding to the base of $\calA$.
A face $F$ of $P$ is \demph{vertical} if it contains a vertical line segment, one parallel to $\bfe$.

\begin{Lemma}\label{L:Singularities}
  Suppose that $F$ is a proper face of $P$ that is not the base of $P$ and is not vertical.
  Then the corresponding facial system of the critical point equations has a solution if and only if the hypersurface
  $\Var(\psi|_F)$ defined by
  $\psi|_F$ in $(\CC^\times)^{d+1}$ is singular.
\end{Lemma}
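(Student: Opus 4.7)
The plan is to unfold the meaning of the facial system, reduce both directions to vanishing of partial derivatives, and use Euler's identity for quasi-homogeneous polynomials to close the gap between ``partials in the $z$-directions vanish'' and ``the $\lambda$-partial vanishes.''

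First I would observe that since the operator $z_i\partial_{z_i}$ preserves the support of any Laurent monomial (it multiplies $z^a\lambda^j$ by $a_i$), taking the facial form commutes with $z_i\partial_{z_i}$. Thus the facial system $\Phi|_F$ of the critical point equations is
\[
  \psi|_F\ =\ z_1 \frac{\partial \psi|_F}{\partial z_1}\ =\ \dotsb\ =\ z_d \frac{\partial \psi|_F}{\partial z_d}\ =\ 0\,.
\]
The ``$(\Leftarrow)$'' direction is immediate: at a singular point $p\in(\CC^\times)^{d+1}$ of $\Var(\psi|_F)$ all partials vanish, so multiplying each $\partial_{z_i}\psi|_F$ by the nonzero coordinate $z_i$ gives a solution of $\Phi|_F$.

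For ``$(\Rightarrow)$'' I would take a solution $p$ of $\Phi|_F$ and try to deduce that $\partial_\lambda \psi|_F(p)=0$ as well; combined with the facial equations this shows $p\in\Sing(\Var(\psi|_F))$. The idea is to choose an exposing vector $w\in\ZZ^{d+1}$ for $F$ with $w_{d+1}\neq 0$ and then apply Lemma~\ref{L:quasi-homogeneous}(2) to $\psi|_F$, which is quasi-homogeneous with weight $w$:
\[
  w_{\psi|_F}\,\psi|_F\ =\ \sum_{i=1}^{d} w_i\, z_i\frac{\partial \psi|_F}{\partial z_i}\ +\ w_{d+1}\,\lambda\frac{\partial \psi|_F}{\partial \lambda}\,.
\]
Evaluating at $p$, the left side and the sum on the right both vanish by the facial equations, forcing $w_{d+1}\,\lambda\,\partial_\lambda\psi|_F(p)=0$; since $w_{d+1}\neq 0$ and $\lambda(p)\neq 0$, we obtain $\partial_\lambda \psi|_F(p)=0$, as desired.

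The main obstacle is therefore the polyhedral input: showing that the non-verticality of $F$ guarantees an exposing vector $w$ with $w_{d+1}\neq 0$. I would prove this by duality between $F$ and its normal cone $N_F$ in the normal fan of $P$. The key linear-algebraic fact is that $N_F$ spans the orthogonal complement of the direction space $F-F$, so $N_F\subset \bfe^\perp$ if and only if $\bfe\in\mathrm{span}(F-F)$, which is exactly the condition that $F$ contains a segment parallel to $\bfe$, i.e.\ that $F$ is vertical. Thus $F$ not vertical supplies an exposing vector $w$ with $w_{d+1}\neq 0$, and the Euler identity argument above goes through. The hypothesis that $F$ is not the base is used implicitly to ensure $\psi|_F$ genuinely involves $\lambda$ so that the derived condition $\partial_\lambda\psi|_F(p)=0$ is a nontrivial constraint rather than an automatic identity.
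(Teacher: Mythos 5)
Your proof is correct and follows essentially the same route as the paper's: both directions reduce to the Euler identity of Lemma~\ref{L:quasi-homogeneous}(2) applied to the quasi-homogeneous polynomial $\psi|_F$ with an exposing vector $w$ satisfying $w_{d+1}\neq 0$, whose existence is exactly what non-verticality supplies (the paper simply asserts this polyhedral fact, which you justify via normal cones). One small quibble: the hypothesis that $F$ is not the base is used to guarantee that $\psi|_F$ is quasi-homogeneous (its support lies on an affine hyperplane avoiding the origin), not to ensure that $\psi|_F$ ``genuinely involves $\lambda$'' --- a proper face of the base is neither the base nor vertical, yet its facial form contains no $\lambda$, and the argument still goes through there because $\partial_\lambda\psi|_F\equiv 0$ is harmless for the stated equivalence.
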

\begin{proof}
  Let $0\neq w\in\ZZ^{d+1}$ be an integer vector that exposes the face $F$.
  As $F$ is not vertical we may assume that $w_{d+1}$ is nonzero.
  As $F$ is not the base, it lies on an affine hyperplane that does not contain
  the origin, so that $\psi|_F$ is quasi-homogeneous with some quasi-homogeneity $w$.
  Write \defcolor{$w_F$} for the constant $w\cdot a$ for $a\in F$.
  By Lemma~\ref{L:quasi-homogeneous} (2), we have 
  \begin{equation}\label{Eq:Euler}
   w_F\,\psi|_F\ =\ \sum_{i=1}^d  w_i\; z_i\frac{\partial\psi|_F}{\partial z_i} \ +\
       w_{d+1}\; \lambda\frac{\partial\psi|_F}{\partial \lambda}\ .
  \end{equation}
  Suppose now that $(z,\lambda)$ is a solution of the restriction of the critical point equations to the face $F$.
  That is, at $(z,\lambda)$,
  \[
     \psi|_\calF\ =\ \left(z_1\frac{\partial\psi}{\partial z_1}\middle)\right|_F\ =\ 
     \dotsb
     \ =\ \left(z_d\frac{\partial\psi}{\partial z_d}\middle)\right|_F\ =\ 0\,.
  \]
  Observe that $(z_i\frac{\partial\psi}{\partial z_i})|_F= z_i\frac{\partial\psi|_F}{\partial z_i}$ 
  (and the same for $\lambda$).
  Since $w_{d+1}\neq 0$, these equations and~\eqref{Eq:Euler} together imply that
  $(\lambda\frac{\partial\psi}{\partial\lambda})|_F=0$,
  which implies that $(z,\lambda)$ is a singular point of the hypersurface $\Var(\psi|_F)$ defined by $\psi|_F$.
\end{proof}

We deduce the following theorem.

\begin{Theorem}\label{Th:SingTheoremA}
  If the Newton polytope $\calN(\psi)$ of $\psi$ has no vertical faces and the restriction of $\psi$ to each
  face that is not the base of $\calN(\psi)$ defines a smooth variety,
  then the critical point equations have exactly $(d{+}1)!\vol(\calN(\calA))$ solutions in $(\CC^\times)^d\times \CC$.
\end{Theorem}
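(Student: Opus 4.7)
The plan is to combine Corollary~\ref{C:MeetBoundary}, the toric stratification of $X_\calA$, Corollary~\ref{C:Sings}, and Lemma~\ref{L:Singularities} to reduce the equality to the two hypotheses. By Corollary~\ref{C:MeetBoundary}, the inequality of Theorem~\ref{Th:Bound} is strict if and only if $\partial X_\calA\cap M_\psi\neq\emptyset$, so it suffices to establish that $\partial X_\calA\cap M_\psi=\emptyset$; the last sentence of Theorem~\ref{Th:Bound} will then give that all $(d{+}1)!\vol(\calN(\psi))$ solutions are isolated.

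Next, I would identify the pieces of $\partial X_\calA$. Since $X_\calA^\circ=\varphi_\calA((\CC^\times)^d\times\CC)$ decomposes as $\varphi_\calA((\CC^\times)^{d+1})\sqcup\varphi_\calA((\CC^\times)^d\times\{0\})$, and the second factor is exactly $X_\calB^\circ$ for $\calB$ the base of $\calA$, the standard torus-orbit stratification of $X_\calA$ gives
\[
 \partial X_\calA\ =\ \bigsqcup_{F}\ X_{\calF}^\circ\,,
\]
where $F$ runs over proper faces of $P=\calN(\psi)$ \emph{other than the base}. By Corollary~\ref{C:Sings} applied to the critical point system $\Phi$ of~\eqref{Eq:psi_system}, the set $X_{\calF}^\circ\cap M_\psi$ pulls back along $\varphi_\calF$ to the solutions of the facial system $\Phi|_F$ in $(\CC^\times)^{d+1}$. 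Thus the problem reduces to showing that $\Phi|_F$ has no solution in $(\CC^\times)^{d+1}$ for every proper non-base face $F$ of $P$.

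Finally, I would invoke Lemma~\ref{L:Singularities}: because $P$ has no vertical faces, Lemma~\ref{L:Singularities} applies to \emph{every} proper non-base face $F$, and asserts that $\Phi|_F$ admits a solution if and only if $\Var(\psi|_F)\subset(\CC^\times)^{d+1}$ is singular. The smoothness hypothesis then closes out the argument. The subtle step that requires care, and in my view the only nonroutine part of the bookkeeping, is the observation that the base face can be omitted from $\partial X_\calA$: the map $\varphi_\calA$ is defined on $(\CC^\times)^d\times\CC$, so the locus $\lambda=0$ is already contained in $X_\calA^\circ$, and consequently the theorem's smoothness assumption on non-base faces is sufficient; no condition on $\psi|_\calB$ is needed.
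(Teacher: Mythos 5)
Your proposal is correct and follows essentially the same route the paper takes: the paper deduces Theorem~\ref{Th:SingTheoremA} directly from Lemma~\ref{L:Singularities} together with Corollary~\ref{C:MeetBoundary}/Corollary~\ref{C:Sings} and the observation (stated just before the lemma) that the base of $\calN(\psi)$ corresponds to the locus $\lambda=0$, which lies in $X_\calA^\circ$ rather than in $\partial X_\calA$. You have merely made explicit the bookkeeping that the paper leaves implicit, including the step you correctly identify as the only delicate one.
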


We apply this when $\psi$ is the dispersion function $D(z,\lambda)$.
Recall that the boundary of the variety $X_\calA$
($X_D$) corresponds to all proper faces of its Newton polytope $\calN(D)$, except for its base.
We deduce the following precise version of \textbf{Theorem B}.

\begin{Corollary}\label{C:SingThmB}
  Let $L$ be an operator on a periodic graph and set $D=\det(L(z)-\lambda I)$.
  If $\calN(D)$ has no vertical faces and if for each face $F$ that is not its base, $\Var(D|_F)$ is smooth,
  then the Bloch variety has exactly $(d{+}1)!\vol(\conv(\calA(D)))$ critical points.
\end{Corollary}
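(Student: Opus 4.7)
The plan is to deduce Corollary~\ref{C:SingThmB} directly from Theorem~\ref{Th:SingTheoremA} applied to $\psi = D(z,\lambda) = \det(L(z) - \lambda I)$, with Proposition~\ref{P:CPE} serving as the bridge between critical points of the Bloch variety and solutions of the critical point equations. Since each entry of $L(z) - \lambda I$ is a Laurent polynomial in $z$ that is affine in $\lambda$, the dispersion polynomial $D$ lies in $\CC[z^{\pm},\lambda]$; the hypotheses that $\calN(D)$ has no vertical faces and that $\Var(D|_F)$ is smooth for every non-base face $F$ are assumed in the corollary. So Theorem~\ref{Th:SingTheoremA} applies and produces exactly $(d{+}1)!\vol(\conv(\calA(D)))$ solutions to the critical point equations~\eqref{Eq:CPE} in $(\CC^\times)^d \times \CC$.

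I would then invoke Proposition~\ref{P:CPE}, which identifies a point $(z,\lambda) \in (\CC^\times)^d \times \CC$ as a critical point of $\lambda$ on $\Var(D)$ precisely when it satisfies~\eqref{Eq:CPE}. Hence the solutions just counted are exactly the critical points of the Bloch variety, and the final clause of Theorem~\ref{Th:Bound}---that when the bound is attained all solutions are isolated---confirms that all $(d{+}1)!\vol(\conv(\calA(D)))$ critical points are isolated.

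The only genuine work lies inside Theorem~\ref{Th:SingTheoremA}, whose proof strategy I expect would be the following chain: use Corollary~\ref{C:MeetBoundary} to reduce sharpness of the bound to the emptiness of $\partial X_\calA \cap M_\psi$; decompose $\partial X_\calA$ into the strata $X_\calF^\circ$ indexed by non-base proper faces $F$; apply Corollary~\ref{C:Sings} to rephrase each stratum of the intersection as the solution set of the facial system $\Phi|_F$ of the critical point equations; and then invoke Lemma~\ref{L:Singularities} to equate the existence of such facial solutions with the singularity of $\Var(\psi|_F)$, which the hypothesis rules out. The main obstacle in this chain is the non-vertical hypothesis, which is precisely what ensures the Euler-style identity in Lemma~\ref{L:Singularities} can be written with nonzero $\lambda$-weight $w_{d+1}$, allowing the facial critical point equations to force $\lambda\,\partial\psi|_F/\partial\lambda = 0$.
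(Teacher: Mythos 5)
Your proposal is correct and follows the paper's route exactly: the paper obtains Corollary~\ref{C:SingThmB} as an immediate specialization of Theorem~\ref{Th:SingTheoremA} to $\psi=D$, with Proposition~\ref{P:CPE} identifying solutions of~\eqref{Eq:CPE} with critical points, and your sketch of the chain Corollary~\ref{C:MeetBoundary} $\to$ boundary stratification by non-base faces $\to$ Corollary~\ref{C:Sings} $\to$ Lemma~\ref{L:Singularities} is precisely how the paper deduces Theorem~\ref{Th:SingTheoremA}. Your identification of the role of the no-vertical-faces hypothesis (guaranteeing $w_{d+1}\neq 0$ in the Euler relation~\eqref{Eq:Euler}) also matches the paper.
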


\begin{Example}\label{Ex:K4}
  The restriction on vertical faces is necessary.
  General operators on the second graph in Figure~\ref{F:first_graphs} (an abelian cover of $K_4$) have the following
  Newton polytope:
  \[
  \includegraphics[height=100pt]{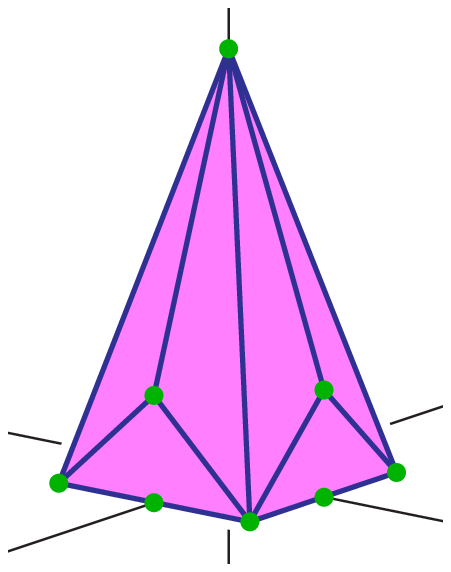}
  \]
  It has base $[-1,1]^2$, apex $(0,0,4)$, and the remaining vertices are at $(\pm 1,0,1)$ and $(0,\pm 1,1)$.
  It has volume $20/3$, so we expect $40=3!\cdot 20/3$ critical points.
  However, there are at most 32 critical points, as direct computation shows that the critical point
  equations have two solutions on each of its four vertical faces.\hfill$\diamond$
\end{Example}

\section{Newton polytopes and dense periodic graphs} \label{S:C}  
The Newton polytope $\calN(D)$ of the dispersion function of an operator on a periodic graph is central to
our results.
In Section~\ref{Sec:41} we associate a polytope $\calN(\Gamma)$ to any periodic graph $\Gamma$ such that $\calN(D)\subset\calN(\Gamma)$ for any
operator on $\Gamma$, and that we have equality for almost all parameter values.
We call $\calN(\Gamma)$ the \demph{Newton polytope} of $\Gamma$.

A periodic graph $\Gamma$ is dense if it has every possible edge, given its support $\calA(\Gamma)$ and fundamental domain $W$.
Every periodic graph is a subgraph of a minimal dense periodic graph.
We identify the Newton polytope of a dense periodic graph and show that when $d=2$ or $3$,
a general operator on $\Gamma$ satisfies Corollary~\ref{C:SingThmB}, which implies Theorem C.

Let $\Gamma$ be a connected $\ZZ^d$-periodic graph with fundamental domain $W$.
Its support $\calA(\Gamma)$ is the finite set of points $a\in\ZZ^d$ such that there is an edge between
$W$ and $a{+}W$.
The integer span of $\calA(\Gamma)$ is $\ZZ^d$, as $\Gamma$ is connected.
The graph $\Gamma$ is \demph{dense} if for every $a\in\calA(\Gamma)$, there is an edge in $\Gamma$ between every pair of vertices in the
union of $W$ and $a{+}W$. 
In particular, the restriction of $\Gamma$ to $W$ is the complete graph on $W$.
The graphs of Figures~\ref{F:denseDimer} and~\ref{F:New-dense} are dense, 
\begin{figure}[htb]
\centering
   \raisebox{-17pt}{\includegraphics[height=140pt]{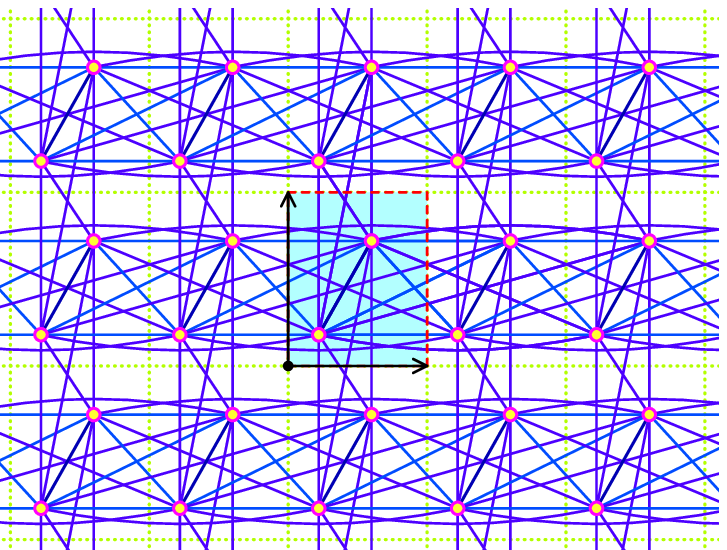}}
    \qquad
   \includegraphics{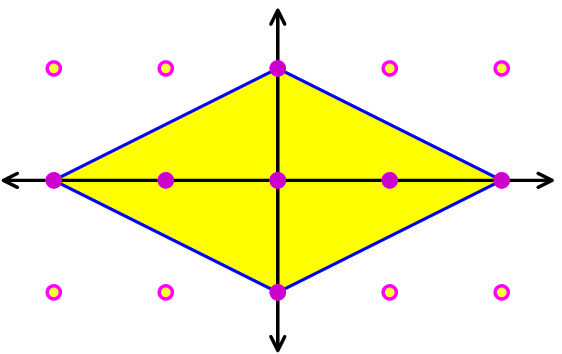}
  \caption{A dense graph $\Gamma$ and its support $\calA(\Gamma)$ with convex hull.}
  \label{F:New-dense}
\end{figure}
while those of Figure~\ref{F:first_graphs} are not dense.

The set of parameters $(e,V)$ for operators on a periodic graph $\Gamma$ is $Y=\CC^E\times\CC^W$, where $E$ is the set of orbits of edges.
We observed that for any $c\in Y$, each entry of $L_c(z)$ has support a subset of $\calA(\Gamma)$.
Consequently, each diagonal entry of $L_c(z)-\lambda I$ has support a subset of $\calA(\Gamma)\cup\{\bfe\}$
and its Newton polytope is a subpolytope of $\defcolor{Q}\vcentcolon=\conv(\calA(\Gamma)\cup\{\bfe\})$.
Let $\defcolor{m}\vcentcolon=|W|$, the number of orbits of vertices.

\begin{Lemma}\label{L:genericNP}
  The Newton polytope $\calN(D_c)$ is a subpolytope of the dilation $mQ$ of $Q$.
\end{Lemma}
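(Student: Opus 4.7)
The plan is to bound the Newton polytope of $D_c$ entry-by-entry and then propagate the bound through the Leibniz expansion of the determinant. The tools are two elementary properties of Newton polytopes: for Laurent polynomials $f,g$, one has $\calN(fg)\subset\calN(f)+\calN(g)$ (Minkowski sum), and $\calN(f+g)\subset\conv(\calN(f)\cup\calN(g))$. Iterating, a product of $m$ polynomials whose Newton polytopes lie in a common convex set $Q$ has Newton polytope inside the $m$-fold Minkowski sum $Q+\dotsb+Q=mQ$, and a sum of polynomials whose Newton polytopes all lie in $mQ$ again lies in $mQ$ since $mQ$ is convex.

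The first step is to verify that every entry of $L_c(z)-\lambda I$ has Newton polytope contained in $Q=\conv(\calA(\Gamma)\cup\{\bfe\})$. By~\eqref{Eq:matrix-entry}, off-diagonal entries have support in $\calA(\Gamma)\subset Q$. A diagonal entry is the sum of $-\lambda$ (exponent $\bfe\in Q$), the constant $V(u)+\sum_{(u,w)\in\calE(\Gamma)}e_{(u,w)}$ (exponent $\bfz$), and Laurent monomials $-e_{(u,a+u)}z^a$ with $a\in\calA(\Gamma)$. Since $\calA(\Gamma)$ is nonempty and centrally symmetric, $\bfz$ is the midpoint of some pair $\{a,-a\}\subset\calA(\Gamma)$, hence $\bfz\in\conv(\calA(\Gamma))\subset Q$.

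The second step is the Leibniz expansion
\[
D_c(z,\lambda)\ =\ \sum_{\sigma\in\mathrm{Sym}(W)}\mathrm{sgn}(\sigma)\,\prod_{u\in W}\bigl(L_c(z)-\lambda I\bigr)_{u,\sigma(u)}\,.
\]
Each summand is a product of $m$ matrix entries whose Newton polytopes lie in $Q$, so its Newton polytope lies in $mQ$. Summing over $\sigma$ and applying the sum inequality gives $\calN(D_c)\subset mQ$, as desired.

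The argument is essentially bookkeeping, so there is no serious obstacle. The only point requiring any care is showing that the constant exponent $\bfz$ that appears on the diagonal lies in $Q$; this is handled by the central symmetry of $\calA(\Gamma)$ as above, and is the one place where a property of $\calA(\Gamma)$ beyond the raw support bound is actually used.
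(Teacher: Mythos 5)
Your proof is correct and follows essentially the same route as the paper: bound each entry of $L_c(z)-\lambda I$ by $Q$, then push the bound through the Leibniz expansion using $\calN(fg)\subset\calN(f)+\calN(g)$ and convexity of $mQ$. Your extra care in checking that the constant diagonal term's exponent $\bfz$ lies in $\conv(\calA(\Gamma))$ via central symmetry is a point the paper glosses over, but it is the same argument.
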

\begin{proof}
  The dispersion function $D_c$ is a sum of products of $m$ entries of the $m\times m$ matrix $L_c(z)-\lambda I$.
  Each such product has Newton polytope a subpolytope of $mQ$ as the Newton polytope of a product is the sum of Newton polytopes of the
  factors. 
\end{proof}

Figure~\ref{F:NewtonPolytopes} shows $mQ=2Q$ for the dense graphs of Figures~\ref{F:denseDimer} and~\ref{F:New-dense}.
\begin{figure}[htb]
  \centering
   \includegraphics{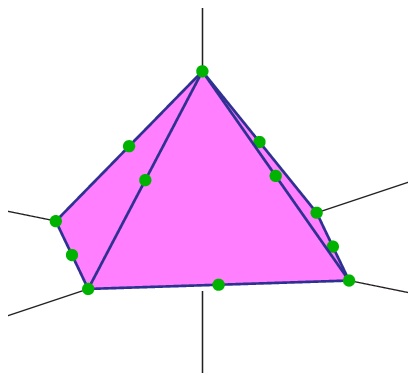}
     \qquad
   \includegraphics{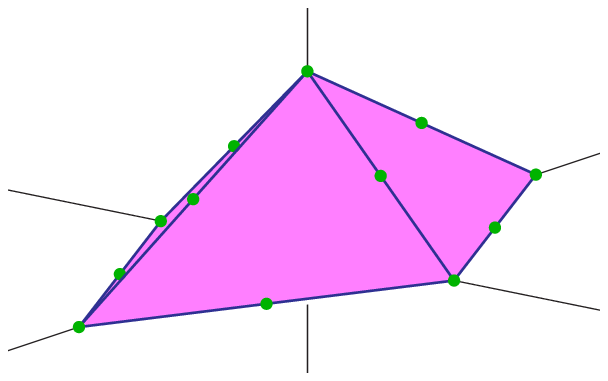}
   \caption{Newton polytopes of dense graphs.}
   \label{F:NewtonPolytopes}
\end{figure}
Observe that $mQ$ is a pyramid with base $m\conv(\calA(\Gamma))$ and apex $m\bfe$, and it has no vertical faces.

\begin{Theorem}\label{Th:denseDense}
  Let $\Gamma$ be a dense $\ZZ^d$-periodic graph.
  There is a nonempty Zariski open subset $U$ of the parameter space $Y$ such that for $c\in U$, the Newton polytope of
  $D_c(z,\lambda)$ is the pyramid $m Q$.
  When $d=2$ or $3$, then we may choose $U$ so that for every $c\in U$ and face $F$ of $m Q$ that is not its base,
  $\Var(D_c|_F)$ is smooth.
\end{Theorem}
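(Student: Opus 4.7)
The proof has two parts matching the two assertions. For Part 1 we use Lemma~\ref{L:genericNP}, which gives $\calN(D_c)\subset mQ$, and show that each vertex of $mQ$ appears with nonzero coefficient in $D_c$ for $c$ in a nonempty Zariski open set. The vertices of $mQ$ are the apex $m\bfe$, whose coefficient in $D_c(z,\lambda)=\det(L_c(z)-\lambda I)$ is $(-1)^m$, and the points $mv$ for $v$ a vertex of $\conv(\calA(\Gamma))$. For the latter we expand the determinant $D_c=\sum_{\pi}\mathrm{sgn}(\pi)\prod_{u\in W}(L_c(z)-\lambda I)_{u,\pi(u)}$ over all bijections $\pi\colon W\to W$. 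Density of $\Gamma$ ensures that for each pair $(u,u')\in W\times W$ the entry $(L_c(z))_{u,u'}$ contains a $z^v$ term with coefficient $-e_{(u,v+u')}$, so each $\pi$ contributes a $z^{mv}$-monomial $(-1)^m\,\mathrm{sgn}(\pi)\prod_{u\in W} e_{(u,v+\pi(u))}$ in the edge weights. Distinct permutations produce monomials in disjoint families of edge weights (the edge $(u,v+\pi(u))$ is determined by the pair $(u,\pi(u))$), so no cancellation can occur and the coefficient of $z^{mv}$ in $D_c$ is a nonzero polynomial in $c$.

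For Part 2, since $mQ$ has only finitely many faces, it suffices to show that for each proper face $F$ of $mQ$ other than the base, the set
\[
S_F\ \vcentcolon=\ \{c\in Y\mid \Var(D_c|_F)\text{ is singular}\}
\]
is a proper Zariski closed subset of $Y$. The required open set $U$ is then the intersection of the open subset from Part 1 with the complements of all the $S_F$. To show $S_F\neq Y$ we verify that the linear system $\{D_c|_F\}_{c\in Y}$ of facial forms has no base points in $(\CC^\times)^{d+1}$, which by Bertini's theorem implies that its generic member defines a smooth hypersurface of the torus.

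The verification of the no-base-point condition proceeds face by face, and this is where the hypothesis $d\in\{2,3\}$ enters. Zero-dimensional faces (the apex and base vertices) give monomial facial forms with empty variety. For each edge of $mQ$ joining the apex to a base vertex $mv$, the facial form has the shape $\sum_{j=0}^m c_j(c)\,z^{(m-j)v}\lambda^j$, where each $c_j(c)$ is the coefficient of $z^{(m-j)v}$ in a sum of principal minors of $L_c(z)$; the argument of Part 1 applied to these minors shows each $c_j(c)$ is a nonzero polynomial in $c$, and the substitution $w=z^v$ reduces smoothness to the nondegeneracy of a univariate polynomial in $\lambda/w$, controlled by a discriminant. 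Edges in the base of $mQ$ and, when $d=3$, the $2$-dimensional faces containing the apex are handled by analogous reductions. The delicate cases are the top-dimensional proper faces of $mQ$ containing the apex: triangular $2$-faces when $d=2$ and $3$-dimensional pyramids over $2$-faces of the base when $d=3$. Here the facial forms are higher-dimensional polynomials, and one must show, via direct combinatorial analysis of the principal minor expansion of $D_c$, that as $c$ varies these facial forms span a linear system with no common zero in $(\CC^\times)^{d+1}$. This is the main obstacle, and it is here that the hypothesis $d\leq 3$ is essential: in higher dimensions the proper faces through the apex grow in dimension and the required base-point-freeness can fail.
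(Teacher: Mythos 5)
Your Part 1 is essentially the paper's Lemma~\ref{L:NewtonPolytope} (with a more careful non-cancellation argument at the vertices $mv$ via the permutation expansion), and is fine. The problem is Part 2, where there is a genuine gap.

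The family $\{D_c|_F\}_{c\in Y}$ is \emph{not} a linear system in the parameters $c$: the entries of $L_c(z)-\lambda I$ are affine in $(e,V)$, so $D_c|_F=\det\bigl((L_c(z)-\lambda I)|_G\bigr)$ has coefficients that are degree-$m$ polynomials in $c$. Bertini for base-point-free linear systems therefore does not apply to this family. If you instead pass to the linear span $V_F$ of all the facial forms, Bertini tells you that a \emph{generic element of $V_F$} is smooth away from base points; but the locus $\{D_c|_F\mid c\in Y\}$ is a proper (non-linear) subvariety of $V_F$ and may well lie inside the discriminant, so smoothness of the generic member of the span says nothing about the generic $D_c|_F$. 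Likewise, absence of base points does not give smoothness of the incidence variety $\{(c,x): D_c|_F(x)=0\}$ over $Y$, because for a non-linear family ``some member is nonzero at $x$'' does not imply ``the $c$-gradient of $D_c|_F(x)$ is nonzero at the given $c$.'' Your sketch of the ``delicate cases'' defers exactly this point to an unspecified combinatorial analysis, so the main difficulty is not addressed.

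The paper's route is quite different and worth noting. It first restricts to a coordinate subspace $Y'$ of parameters on which $(L_c(z)-\lambda I)|_G$ is supported only on the diagonal, the first superdiagonal, and the lower-left corner, so that the facial form collapses to $\psi_c=\prod_i f_i-(-1)^m\prod_i g_i$ with all coefficients of the $f_i,g_j$ \emph{independent} parameters. It then classifies singular points by how many of the $f_i,g_j$ vanish there: cases (i)--(iv) are handled by incidence correspondences and dimension counts (using the common quasi-homogeneity of the face $G$ to show fibers over the parameter space are positive-dimensional), and it is precisely these counts --- e.g.\ $\dim\Theta\le\dim Z+d-4<\dim Z$ in Lemma~\ref{L:i} --- that force $d\le 3$, not a failure of base-point-freeness on high-dimensional faces through the apex as you suggest. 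The remaining case (v) is treated by Bertini in the form of generic smoothness of fibers of a dominant map from a smooth source, together with a $\CC^\times$-rescaling trick (Lemma~\ref{L:A}) to specialize the auxiliary parameter $\alpha$ to $1$. Finally, since smoothness of $\Var(D_c|_F)$ is an open condition on all of $Y$, exhibiting one good $c\in Y'\cap U$ suffices. To repair your argument you would need either to linearize the family (which is what the restriction to $Y'$ accomplishes) or to prove smoothness of the total incidence space by hand.
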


Together with Corollary~\ref{C:SingThmB}, this implies Theorem C from the Introduction.
We prove Theorem~\ref{Th:denseDense} in the following two subsections.\medskip

\subsection{The Newton polytope of $\Gamma$}~\label{Sec:41}
For a periodic graph $\Gamma$, the space of parameters $(e,V)$ for operators on $\Gamma$ is $Y=\CC^E\times\CC^W$.
Treating parameters as indeterminates gives the \demph{generic dispersion function} \defcolor{$D(e,V,z,\lambda)$}, which is a
polynomial in $z,\lambda$ whose coefficients are polynomials in the parameters $e,V$.
The \demph{Newton polytope} \defcolor{$\calN(\Gamma)$} of $\Gamma$ is the convex hull of the monomials in $z,\lambda$ that appear in
$D(e,V,z,\lambda)$.

\begin{Lemma}\label{L:NewtonPolytope}
  For  $c\in Y$, $\calN(D_c(z,\lambda))$ is a subpolytope of $\calN(\Gamma)$.
  The set of $c\in Y$ such that $\calN(D_c(z,\lambda))=\calN(\Gamma)$ is a dense open subset $U$.
  When $\Gamma$ is a dense periodic graph, $\calN(\Gamma)=mQ$.
\end{Lemma}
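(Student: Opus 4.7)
The plan is to prove the three assertions in turn. For the first two, I would write the generic dispersion function as
\[
D(e,V,z,\lambda)\ =\ \sum_{(a,j)} p_{a,j}(e,V)\ z^a \lambda^j\,,
\]
where each coefficient $p_{a,j}$ is a polynomial in the parameters $(e,V)$. By the definition of $\calN(\Gamma)$, a lattice point $(a,j)$ lies in $\calA(\Gamma)$ precisely when $p_{a,j}$ is not identically zero, while $(a,j)\in\calA(D_c)$ precisely when $p_{a,j}(c)\neq 0$. Containment of these sets gives $\calN(D_c)\subset\calN(\Gamma)$, which is the first claim. For the second claim, equality of Newton polytopes (given this inclusion) is equivalent to $p_{a,j}(c)\neq 0$ for every vertex $(a,j)$ of $\calN(\Gamma)$; at each such vertex $p_{a,j}$ is by definition a nonzero polynomial, so its nonvanishing locus is a dense Zariski open subset of $Y$, and the finite intersection over the vertex set yields the required $U$.

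The third claim carries the content. Lemma~\ref{L:genericNP} already gives $\calN(\Gamma)\subset mQ$, so I only need the reverse inclusion, and for that it suffices to exhibit, for each vertex of $mQ$, a parameter value making the corresponding coefficient $p_{a,j}$ nonzero. Since $\calA(\Gamma)\subset\ZZ^d\times\{0\}$ while $\bfe$ has last coordinate $1$, the polytope $Q$ is a pyramid with apex $\bfe$ over the base $\conv(\calA(\Gamma))$, so the vertices of $mQ$ are $m\bfe$ together with the points $m\alpha$ as $\alpha$ ranges over the vertices of $\conv(\calA(\Gamma))$. The apex is immediate, since $\det(L(z)-\lambda I)$ always has $\lambda^m$-coefficient $(-1)^m$. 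For a base vertex $m\alpha$, connectedness of $\Gamma$ forces $\calA(\Gamma)\neq\{\bfz\}$, so the central symmetry $\calA(\Gamma)=-\calA(\Gamma)$ places $\bfz$ in the relative interior of $\conv(\calA(\Gamma))$, giving $\alpha\neq\bfz$; density of $\Gamma$ then supplies the edge $(u,\alpha+u)\in\calE(\Gamma)$ for every $u\in W$. I would specialize to $V\equiv 0$, $e_O=1$ on every orbit $O$ containing an edge $(u,\alpha+u)$ with $u\in W$, and $e_{O'}=0$ on all other orbits. Granted the orbit analysis described below, formula~(\ref{Eq:matrix-entry}) then produces $L(z)_{u,u}=2-z^\alpha-z^{-\alpha}$ on the diagonal and $L(z)_{u,u'}=0$ off-diagonal, so $\det(L(z)-\lambda I)=(2-z^\alpha-z^{-\alpha}-\lambda)^m$; its coefficient of $z^{m\alpha}\lambda^0$ is $(-1)^m\neq 0$, as the only multinomial contribution to that exponent comes from selecting $-z^\alpha$ in each of the $m$ factors. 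Hence $p_{m\alpha,0}$ is a nonzero polynomial and $m\alpha\in\calA(\Gamma)$.

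The step I expect to be delicate is verifying consistency of the specialization: (i) the edges $(u,\alpha+u)$ for distinct $u\in W$ must lie in $|W|$ distinct $\ZZ^d$-orbits, so that independently setting $e_O=1$ on each is well-posed; and (ii) no edge of the form $(u,a+u')$ with $u\neq u'$ in $W$ may lie in any such orbit, so that the off-diagonal entries truly vanish. Both reduce to one orbit check: equating $(u,\alpha+u)=(v+b,\alpha+v+b)$ for some $v\in W$ and $b\in\ZZ^d$, combined with the identification $(x,y)=(y,x)$, forces either $u-v\in\ZZ^d$ (impossible for distinct elements of $W$, which lie in distinct $\ZZ^d$-orbits) or $2\alpha=\bfz$ (impossible because $\alpha\neq\bfz$ in the torsion-free group $\ZZ^d$).
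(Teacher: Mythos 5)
Your proofs of the first two assertions coincide with the paper's: both reduce equality of Newton polytopes to the nonvanishing of the coefficient polynomials $C_{(a,j)}$ at the vertices of $\calN(\Gamma)$, which cuts out a dense Zariski open set. For the third assertion your route differs in a worthwhile way. The paper's proof is a one-line genericity observation: when $\Gamma$ is dense and no parameter vanishes, every diagonal entry of $L_c(z)-\lambda I$ has full support $\calA(\Gamma)\cup\{\bfe\}$, so the diagonal product already has Newton polytope $mQ$ (with the survival of the vertex coefficients in the determinant left implicit). You instead verify the reverse inclusion $mQ\subset\calN(\Gamma)$ vertex by vertex via an explicit specialization that kills all off-diagonal entries, reducing $D_c$ to $(2-z^\alpha-z^{-\alpha}-\lambda)^m$, whose coefficient at $z^{m\alpha}$ is visibly $(\pm 1)$; this sidesteps any worry about cancellation among terms of the determinant expansion, at the cost of the orbit bookkeeping you carry out at the end (which is correct: distinct $u\in W$ lie in distinct $\ZZ^d$-orbits, and the swap case forces $2\alpha=\bfz$, impossible since $\bfz$ is interior to the centrally symmetric full-dimensional polytope $\conv(\calA(\Gamma))$ and hence not a vertex). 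Both arguments are valid; yours is more explicit where the paper is terse.
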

\begin{proof}
  For any $c=(e,V)\in Y$, $D_c(z,\lambda)$ is the evaluation of the generic dispersion function
  $D(e,V,z,\lambda)$ at the point $(e,V)$.
  Thus $\calN(D_c)\subset\calN(\Gamma)$.

  The coefficient \defcolor{$C_{(a,j)}$} of a monomial $z^a\lambda^j$ in $D(e,V,z,\lambda)$ is a polynomial in $(e,V)$.
  For any $c=(e,V)\in Y$, $z^a\lambda^j$ appears in $D_c$ if any only if $C_{(a,j)}(e,V)\neq 0$.
  Thus, we have the equality $\calN(D_c)=\calN(\Gamma)$ of Newton polytopes if and only if $C_{(a,j)}(e,V)\neq 0$
  for every vertex $(a,j)$ of $\calN(\Gamma)$, which defines a dense open subset $U\subset Y$.

  When  $\Gamma$ is dense and no parameter $c$ vanishes, then every diagonal entry of $L_c(z)-\lambda I$ has
  support $\calA(\Gamma)\cup\{\bfe\}$.
  This implies that $\calN(\Gamma)=mQ$.
\end{proof}  

\subsection{Smoothness of the Bloch variety at infinity}
Let $\Gamma$ be a dense periodic graph with $d=2$ or $3$.
Let $U\subset Y$ be the subset of Lemma~\ref{L:NewtonPolytope}.
We show that for each face $F$ of $\calN(\Gamma)$ that is not its base, there is a nonempty open subset $U_F$ of $U$ such that for
$c\in U_F$, the restriction $D_c|_F$ to the monomials in $F$ defines a smooth hypersurface.
Then for parameters $c$ in the intersection of the $U_F$, the operator satisfies the hypotheses of Corollary~\ref{C:SingThmB}, which
proves Theorem~\ref{Th:denseDense} and Theorem C.

Let $F$ be a face of $\calN(\Gamma)$ that is not its base and let $c\in U$.
We may assume that $F$ is not a vertex, for then $D_c|_F$ is a single term and $\Var(D_c|_F)=\emptyset$.
Since $\calN(\Gamma)=mQ$, there is a unique face \defcolor{$G$} of $Q$ such that $F=mG$.
We have that 
\[
    D_c(z,\lambda)|_F\ =\ \det\bigl( (L_c(z)-\lambda I)|_G \bigr)\ ,
\]
where each entry of the matrix $(L_c(z)-\lambda I)|_G$ is the facial form $f|_G$ of the corresponding entry $f$ of $L_c(z)-\lambda I$.

Since $G$ is not the base of $Q$ (and thus does not contain the origin), we make the following observation, which follows from the form of
the operator $L_c$~\eqref{Eq:Laurent_operator}.
If the apex $\bfe=(\bfz,1)$ of $Q$ lies in $G$ and $f$ is a diagonal entry of $(L_c(z)-\lambda I)|_G$, then $f$ contains the term $-\lambda$.
Any other integer point $a\in G$ is nonzero and lies in the support $\calA(\Gamma)$ of $\Gamma$, and the coefficient of $z^a$ in $f$ is
$-e_{(u,a+v)}$, where $f$ is the entry in row $u$ and column $v$.
Consequently, except possibly for terms $-\lambda$, all coefficients of entries in $(L_c(z)-\lambda I)|_G$ are distinct parameters.

Suppose that the fundamental domain is $W=\{v_1,\dotsc,v_m\}$ so that we may index the rows and columns of $L_c(z)$ by $1,\dotsc,m$.
Let $\defcolor{Y'}\subset Y$ be the set of parameters $c$ where
\[
   e_{(v_i,a+v_j)}\ =\ 0\qquad  \mbox{if $a\in G$ \ and \ $j\neq i, i{+}1$.}
\]
(Here, $m{+}1$ is interpreted to be 1.)
For $c\in Y'$, all entries of $L_c(z)|_G$ are zero, except on the diagonal, the first super diagonal, and the lower
left entry.
The same arguments as in the proof of Lemma~\ref{L:NewtonPolytope} show that there exist parameters $c\in Y'$ such that $D_c(z,\lambda)$ has
Newton polytope $\calN(\Gamma)$.
Thus $Y'\cap U\neq\emptyset$, where $U\subset Y$ is the set of Lemma~\ref{L:NewtonPolytope}.

\begin{Theorem}\label{Th:smoothFaces}
 There exists an open subset $U'$ of $Y'$ with $U'\subset U$ such that if $c\in U'$, then $\Var(D_c(z,\lambda)|_F)$ is a smooth
 hypersurface in $(\CC^\times)^{d+1}$.    
\end{Theorem}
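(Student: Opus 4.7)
The plan is to apply Bertini's theorem to the incidence variety
\[
\mathcal{Z}\;=\;\{(c,(z,\lambda))\in (Y'\cap U)\times(\CC^\times)^{d+1}\mid D_c|_F(z,\lambda)=0\},
\]
whose fiber over $c$ is $\Var(D_c|_F)$. It will suffice to show that the projection of the singular locus $\mathrm{Sing}(\mathcal{Z})$ to $Y'\cap U$ is not surjective; the complement of its image is then a nonempty open $U'\subset Y'\cap U$ over which the fibers are everywhere smooth, by Chevalley constructibility together with the Bertini statement recalled in Section~\ref{Sec:AG}.

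The key computational input is the bidiagonal-plus-corner structure of $M_G:=(L_c(z)-\lambda I)|_G$ for $c\in Y'$. Only the identity permutation and the $m$-cycle contribute to the determinant, so writing $f_{ij}$ for the $(i,j)$-entry of $M_G$,
\[
D_c|_F\;=\;\prod_{i=1}^m f_{ii}\;+\;(-1)^{m-1}\,f_{m,1}\prod_{i=1}^{m-1}f_{i,i+1}.
\]
Since $F$ is not the base of $\calN(\Gamma)$, the origin $\bfz$ does not lie in $G$, so the constant terms $V(v_i)+\sum_w e_{(v_i,w)}$ disappear from the diagonal entries restricted to $G$. Combined with the sparsity defining $Y'$, each edge parameter $e_{(v_i,a+v_j)}$ appears in at most one entry of $M_G$, and differentiating $D_c|_F$ with respect to such a parameter yields $\pm z^a$ times one of the cofactor products $\prod_{j\neq k}f_{jj}$, $f_{m,1}\prod_{j\neq k}f_{j,j+1}$, or $\prod_{i<m}f_{i,i+1}$.

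Next I would verify smoothness of $\mathcal{Z}$ at $(c_0,p_0)\in\mathcal{Z}$ by a case analysis on which entries vanish at $p_0$. If at most one diagonal entry vanishes, a diagonal-parameter derivative is nonzero; if all super-diagonal entries are nonzero, a parameter inside $f_{m,1}$ supplies a nonzero derivative; and if $f_{m,1}(p_0)\neq 0$ with at most one super-diagonal entry vanishing, a parameter inside a nonvanishing super-diagonal suffices. The residual bad locus---points where at least two diagonal and at least two of $\{f_{m,1},f_{i,i+1}\}_{i<m}$ vanish simultaneously---is cut out by at least four independent polynomial equations on $(\CC^\times)^{d+1}$, and hence has expected dimension $d-3$: empty when $d=2$, at most zero-dimensional when $d=3$.

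The main obstacle is the case $d=3$: one must still rule out the possibility that these finitely many bad points per fiber collectively sweep out all of $Y'\cap U$ under $\pi$. I would resolve this by exhibiting, for each face $G$, an explicit specialization $c_0\in Y'\cap U$ whose $\Var(D_{c_0}|_F)$ is smooth. Since the locus of smooth fibers is constructible and the complement of $\pi(\mathrm{Sing}(\mathcal{Z}))$ is nonempty as soon as a single smooth fiber exists, it then contains a dense open $U'$. The construction proceeds by a case analysis on the shape of $G$ (edge, $2$-face, or $3$-face of $Q$), using that, in each case, the smallness of $d$ and the independence of coefficients of the $f_{ij}$ reduce the smoothness check to a transversality condition on a concretely written polynomial in few monomials---a condition which holds on a nonempty open set of coefficients.
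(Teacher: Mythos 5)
Your setup matches the paper's: on $Y'$ the matrix $(L_c(z)-\lambda I)|_G$ is bidiagonal-plus-corner, so $D_c|_F=\prod_i f_{ii}-(-1)^m\prod$(cycle entries), the parameter-derivative computation is correct, and the locus where these derivatives can all vanish is exactly the paper's case (i) (at least two diagonal and at least two cycle entries vanishing); the paper likewise disposes of the ``good'' stratum by Bertini (its case (v) and, via separate dimension counts, cases (ii)--(iv)). But your treatment of the residual bad locus is where the argument breaks. Asserting that four equations cut out a set of ``expected dimension $d-3$'' in a single fiber proves nothing: for special $c$ the entries can share common zeros, so the bad locus in a fiber need not be empty for $d=2$ nor finite for $d=3$. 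The correct bookkeeping is on the incidence variety $\Upsilon\subset X\times Z$: the fiber over $x\in X$ has codimension $4$ in $Z$ because the four vanishing conditions are independent linear conditions on the (disjoint sets of) coefficients, giving $\dim\Upsilon=\dim Z+d-3$, which for $d=3$ still permits $\pi_Z(\Upsilon)$ to be all of $Z$. The ingredient you are missing is the common quasi-homogeneity of all entries of $M_G$ (because $G$ is a proper face of $Q$ other than the base, its affine span avoids the origin), which by Lemma~\ref{L:quasi-homogeneous} gives a $\CC^\times$-action $x\mapsto t^w\cdot x$ on witnesses and forces every nonempty fiber of $\pi_Z$ to be at least one-dimensional (two-dimensional in the analogue of case (iv)). This is what drops $\dim\pi_Z(\Upsilon)$ to $\dim Z+d-4<\dim Z$ and closes the case $d=3$; it is the content of Lemmas~\ref{L:i}--\ref{L:iv} and has no substitute in your proposal.

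Your fallback for $d=3$ is also logically insufficient as stated. The fiberwise-singular locus $S\subset\mathcal{Z}$ is closed, so $\pi(S)$ is constructible by Chevalley, but exhibiting one $c_0\notin\pi(S)$ does not show $\pi(S)$ fails to be dense: a constructible set with nonempty complement can still contain a dense open subset (the fibers here sit in $(\CC^\times)^{d+1}$, not in a projective space, so $\pi(S)$ need not be closed and no properness is available). Hence ``a single smooth fiber exists, therefore the complement of $\pi(\mathrm{Sing})$ contains a dense open $U'$'' is a non sequitur; one must bound $\dim\overline{\pi(S)}$ directly, which is exactly what the quasi-homogeneity argument accomplishes. (Contrast with Theorem~\ref{Th:singleCalculation}, where a single computation does suffice --- but only because regularity of all $(d{+}1)!\vol(\calN(\Gamma))$ solutions first yields properness of the relevant projection over a neighborhood.)
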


Since smoothness of $\Var(D_c(z,\lambda)|_F)$ is an open condition on the space $Y$ of parameters, this will complete the proof of
Theorem~\ref{Th:denseDense}, and thus also of Theorem C.

\begin{proof}
 Let us write \defcolor{$\psi_c(z,\lambda)$} for the facial polynomial $D_c(z,\lambda)|_F$.
 We will show that the set of $c\in Y'$ such that $\Var(\psi_c(z,\lambda))$ is singular is a finite
 union of proper algebraic subvarieties.
 As $c\in Y'$, the only nonzero entries in the matrix $(L_c(z)-\lambda I)|_G$ are its diagonal entries
 $f_1(z,\lambda),\dotsc,f_m(z,\lambda)$ and the entries $g_1(z),\dotsc,g_m(z)$ which are in positions
 $(1,2),\dotsc,(m{-}1,m)$ and $(m,1)$, respectively.
 Thus
 \[
   \psi_c(z,\lambda)\ =\    D_c(z,\lambda)|_F\ =\ \det((L_c(z)-\lambda I)|_G)\ =\
  \prod_{i=1}^m f_i(z,\lambda)\ -\  (-1)^m\prod_{i=1}^m g_i(z)\,.
  \]

 For a polynomial $f$ in the variables $(z,\lambda)$, write \defcolor{$\nabla_\TT$} for the toric gradient operator,
 \[
   \nabla_\TT f \ \vcentcolon=\ 
   \Bigl(z_1\frac{\partial f}{\partial z_1},\dotsc, z_d\frac{\partial f}{\partial z_d}, \lambda\frac{\partial f}{\partial\lambda}\Bigr)\,.
 \]
 Note that 
 \begin{equation}\label{Eq:nablaPsi}
 \nabla_\TT \psi_c\ =\ \sum_{i=1}^m (\nabla_\TT f_i) f_1\dotsb\widehat{f_i}\dotsb f_m
                  \ -\  (-1)^m\sum_{i=1}^m (\nabla_\TT g_i) g_1\dotsb\widehat{g_i}\dotsb g_m\ .
 \end{equation}
 Here $\widehat{f_i}$ indicates that $f_i$ does not appear in the product, and the same for $\widehat{g_i}$.
 
 Let $(z,\lambda)\in\Var(\psi_c)$ be a singular point.
 Then $\psi_c(z,\lambda)=0$ and $\nabla_\TT \psi_c(z,\lambda)=\bfz$.
 There are five cases that depend upon the number of polynomials $f_i,g_j$ vanishing at $(z,\lambda)$.
  \begin{enumerate}[label=(\roman*)]
  \item At least two polynomials $f_p$ and $f_q$ and two polynomials $g_r$ and $g_s$ vanish at $(z,\lambda)$.
    Thus $\psi(z,\lambda)=0$ and by~\eqref{Eq:nablaPsi} this implies that $\nabla_\TT \psi_c(z,\lambda)=\bfz$.
    
    \item At least two polynomials $f_p$ and $f_q$ and exactly one polynomial $g_s$ vanish at $(z,\lambda)$.
      Thus $\psi(z,\lambda)=0$ and by~\eqref{Eq:nablaPsi} if $\nabla_\TT \psi_c(z,\lambda)=\bfz$, then $\nabla_\TT g_s(z,\lambda)=\bfz$.
      
    \item Exactly one polynomial $f_p$ and at least two polynomials $g_r$ and $g_s$ vanish at $(z,\lambda)$.
      Thus $\psi(z,\lambda)=0$ and by~\eqref{Eq:nablaPsi} if $\nabla_\TT \psi_c(z,\lambda)=\bfz$, then $\nabla_\TT f_p(z,\lambda)=\bfz$.
      
    \item Exactly one polynomial $f_p$ and one polynomial $g_r$ vanish at $(z,\lambda)$.
      Thus $\psi(z,\lambda)=0$ and by~\eqref{Eq:nablaPsi} if $\nabla_\TT \psi_c(z,\lambda)=\bfz$, then, after reindexing so that $p=r=1$, we
      have 
      \begin{equation}\label{Eq:iv_consequence}
       \nabla_\TT f_1(z,\lambda)\cdot \prod_{i=2}^m f_i(z,\lambda)\ -\ (-1)^m
       \nabla_\TT g_1(z,\lambda)\cdot \prod_{i=2}^m g_i(z,\lambda)\ =\ \bfz\,.
      \end{equation}

    \item No polynomials $f_i$ or $g_i$ vanish at $(z,\lambda)$.
  \end{enumerate}
  In each case, we will show that the set of parameters $c\in Y'$ such that there exist $(z,\lambda)$ satisfying these
  conditions lies in a proper subvariety of $Y'$.
  Cases (i)---(iv) use arguments based on the dimension of fibers and images of a map and are proven in the rest of this section.
  Case (v) is proven in Section~\ref{S:Case(v)} and it uses Bertini's Theorem.
\end{proof}  

Let us write \defcolor{$X$} for the space $(\CC^\times)^{d+1}$ and \defcolor{$x$} for a point $(z,\lambda)\in X$.
We first derive consequences of some vanishing statements.
For a finite set $\calF\subset\ZZ^{d+1}$, let \defcolor{$\CC^\calF$} be the space of coefficients of polynomials in $x\in X$ with
support $\calF$.
This is the parameter space for polynomials with support $\calF$.

\begin{Lemma}\label{L:linearEquations}
  We have the following.
  
  \begin{enumerate}
   \item For any $x\in X$, $f(x)=0$ is a nonzero homogeneous linear equation on $\CC^\calF$.

   \item For any $x\in X$, $\{\nabla_\TT f(x) \mid f\in\CC^\calF\}$ is the linear span \defcolor{$\CC\calF$} of $\calF$.
  \end{enumerate}

  Suppose that the affine span of $\calF$ does not contain the origin. Then 
  \begin{enumerate}
  \setcounter{enumi}{2} 
  \item For any $f\in\CC^\calF$ and $x\in X$, $\nabla_\TT f=\bfz$ implies that $f(x)=0$.

   \item For any $x\in X$, the equation $\nabla_\TT f(x)=\bfz$ defines a linear subspace of $\CC^\calF$ of codimension
           $\dim \CC\calF$.

  \end{enumerate}
\end{Lemma}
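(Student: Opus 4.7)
The plan is to observe that for any $x \in X = (\CC^\times)^{d+1}$ the evaluations $x^a$ for $a \in \calF$ are all nonzero scalars, which reduces the whole lemma to elementary linear algebra on the coefficient space $\CC^\calF$. Writing a generic $f \in \CC^\calF$ as $f = \sum_{a \in \calF} c_a x^a$ and using the identity $z_i\,\partial_{z_i}(x^a) = a_i x^a$ (and likewise for $\lambda$), I get the compact expression
\[
\nabla_\TT f(x) \;=\; \sum_{a \in \calF} c_a\, x^a\, a \;\in\; \CC^{d+1},
\]
linear in $c$. All four parts will be read off from this formula.

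Parts (1) and (2) are then immediate. For (1), $c \mapsto f(x) = \sum_a c_a x^a$ is a linear functional on $\CC^\calF$ whose coefficient vector $(x^a)_{a\in\calF}$ has no zero entries, so it is a nonzero homogeneous linear form. For (2), the image of $c \mapsto \nabla_\TT f(x)$ is the span of the vectors $\{x^a\cdot a : a \in \calF\}$; rescaling each vector by the nonzero scalar $x^a$ does not change the span, so the image is exactly $\CC\calF$.

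For part (3), the first move is to translate the hypothesis on the affine span into a quasi-homogeneity statement: requiring that the affine span of $\calF$ avoid $\bfz$ is equivalent to the existence of $w \in \ZZ^{d+1}$ such that $w \cdot a$ takes a constant nonzero value $w_f$ on $\calF$. In the terminology of Section~\ref{S:facial}, this says every $f \in \CC^\calF$ is quasi-homogeneous with weight $w$ and degree $w_f \neq 0$, so Lemma~\ref{L:quasi-homogeneous}(2) gives the polynomial identity $w_f\, f = w \cdot \nabla_\TT f$. Evaluating at $x$ and using $\nabla_\TT f(x) = \bfz$ yields $w_f\, f(x) = 0$, and since $w_f \neq 0$ we conclude $f(x) = 0$. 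I expect this to be the only place the affine-span hypothesis is genuinely used, and the only step requiring more than bookkeeping, since it hinges on recognising the hypothesis as quasi-homogeneity and invoking the Euler-type identity.

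Part (4) is then a corollary of (2) by rank--nullity: the linear map $c \mapsto \nabla_\TT f(x)$ from $\CC^\calF$ to $\CC^{d+1}$ has image $\CC\calF$ by (2), so its kernel---the subspace cut out in $\CC^\calF$ by the equations $\nabla_\TT f(x) = \bfz$---has codimension $\dim \CC\calF$, as claimed.
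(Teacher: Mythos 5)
Your proposal is correct and follows essentially the same route as the paper: parts (1) and (2) by direct linear algebra on the coefficient vector $(x^a)_{a\in\calF}$, part (3) via the quasi-homogeneity reformulation and the Euler identity of Lemma~\ref{L:quasi-homogeneous}(2), and part (4) by rank--nullity applied to the surjection onto $\CC\calF$ from part (2).
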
  
\begin{proof}
  Writing $f=\sum_{a\in\calF} c_a x^a$, the first statement is obvious.
  We have $\nabla_\TT f=\sum a c_a x^a$.
  As the coefficients $c_a$ are independent complex numbers and $x^a\neq 0$, Statement (2) is immediate.
  The hypothesis that the affine span of $\calF$ does not contain the origin implies that any $f\in\CC^\calF$ is
  quasi-homogeneous. 
  Statement (3) follows from Equation~\eqref{Eq:Euler}.
  The last statement follows from the observation that the set of $f$ such that $\nabla_\TT f=\bfz$ is the kernel of a surjective
  linear map $\CC^\calF\twoheadrightarrow \CC\calF$.
\end{proof}

Let $\defcolor{\calF}\vcentcolon=G\cap(\calA(\Gamma)\cup\{\bfe\})$, where $\bfe=(\bfz,1)$, be the (common) support of the diagonal
polynomials $f_i$ and let $\defcolor{\calG}\vcentcolon=G\cap\calA(\Gamma)$ be the (common) support of the polynomials $g_j$.
We either have that $\calF=\calG$ or $\calF=\calG\cup\{\bfe\}$.
Also, $|\calF|>1$ as $G$ is not a vertex, and as $G$ is a proper face of $Q=\conv(\calA(\Gamma)\cup\{\bfe\})$, but not its base, the
polynomials $f_i, g_j$ are quasi-homogeneous with a common quasi-homogeneity.

The parameter space for the entries of $(L_c(z)-\lambda I)|_G$ is 
\[
  \defcolor{Z}\ \vcentcolon=\ \bigl(\CC^\calF)^{\oplus m} \oplus \bigl(\CC^\calG)^{\oplus m}\,.
\]
We write $c=\defcolor{(f_\bullet, g_\bullet)}=(f_1,\dotsc,f_m, g_1,\dotsc,g_m)$ for points of $Z$.
This is a coordinate subspace of the parameter space $Y'$.
As $Z$ contains exactly those parameters that can appear in the facial polynomial $\psi_c(x)$, it suffices to show that the
set of parameters $c=(f_\bullet, g_\bullet)\in Z$ such that $\Var(\psi_c(x))$ is singular lies in a proper subvariety of $Z$.
The same case distinctions (i)---(v) in the proof of Theorem~\ref{Th:smoothFaces} apply.

After reindexing, Case (i) in the proof of Theorem~\ref{Th:smoothFaces} follows from the next lemma.

\begin{Lemma}\label{L:i}
  The set
  \[
  \defcolor{\Theta}\ \vcentcolon=\ \{c\in Z\mid \exists x\in X\mbox{ with }f_1(x)=f_2(x)=g_1(x)=g_2(x)=0\}
  \]
  lies in a proper subvariety of $Z$.
\end{Lemma}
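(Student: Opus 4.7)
The plan is to realize $\Theta$ as the image under projection to $Z$ of an incidence variety whose dimension I can bound, then exploit that $G$ is a proper face of $Q$ to show this projection has positive-dimensional fibers, forcing $\overline{\Theta}$ to be a proper subvariety of $Z$.

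First I would form the incidence variety
\[
\mathcal{I}\ \vcentcolon=\ \{(c,x)\in Z\times X\mid f_1(x)=f_2(x)=g_1(x)=g_2(x)=0\},
\]
with projections $\pi_X\colon\mathcal{I}\to X$ and $\pi_Z\colon\mathcal{I}\to Z$, so that $\Theta=\pi_Z(\mathcal{I})$. To compute $\dim\mathcal{I}$, I use $\pi_X$: for any fixed $x\in X$, the four conditions $f_i(x)=0$ and $g_j(x)=0$ are nonzero homogeneous linear equations on four distinct direct summands of $Z$ (the two copies of $\CC^\calF$ holding $f_1,f_2$ and the two copies of $\CC^\calG$ holding $g_1,g_2$), by Lemma~\ref{L:linearEquations}(1). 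Because these equations involve disjoint coordinate blocks they are independent, so $\pi_X^{-1}(x)$ has codimension exactly $4$ in $Z$. Hence $\pi_X$ is surjective with equidimensional fibers, giving
\[
\dim\mathcal{I}\ =\ \dim X+(\dim Z-4)\ =\ \dim Z+d-3.
\]

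Next I would exploit that $G$ is a proper face of $Q$ by picking a nonzero $w\in\ZZ^{d+1}$ that exposes $G$, so that $w\cdot a$ is constant for $a\in G$. Since $\calF,\calG\subset G$, every element of $\CC^\calF$ and of $\CC^\calG$ is quasi-homogeneous with quasi-homogeneity $w$. By Lemma~\ref{L:quasi-homogeneous}(1), each zero set $\{f_i=0\}$ and $\{g_j=0\}$, and therefore the common zero set $\pi_Z^{-1}(c)$, is invariant under the one-parameter subgroup $\{t^w:t\in\CC^\times\}\subset(\CC^\times)^{d+1}$. Since $w\neq\bfz$, this subgroup acts on $X$ with one-dimensional orbits, so whenever $\pi_Z^{-1}(c)$ is nonempty its dimension is at least $1$. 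In particular, the generic fiber of the dominant morphism $\pi_Z\colon\mathcal{I}\to\overline\Theta$ has dimension at least $1$.

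Applying the fiber-dimension formula now gives
\[
\dim\overline\Theta\ \leq\ \dim\mathcal{I}-1\ =\ \dim Z+d-4.
\]
Under the standing hypothesis $d\in\{2,3\}$ of Theorem~\ref{Th:denseDense}, this forces $\dim\overline\Theta\leq\dim Z-1<\dim Z$, so $\overline\Theta$ is a proper subvariety of $Z$ containing $\Theta$, as required. The main technical point to verify carefully is the independence of the four linear conditions on $Z$; this works precisely because $f_1,f_2,g_1,g_2$ occupy four distinct direct summands of $Z$, so their vanishing involves disjoint parameter blocks, and each individual equation is nonzero by Lemma~\ref{L:linearEquations}(1). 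The existence of the nonzero exposing vector $w$ is automatic from $G$ being a proper face of $Q$.
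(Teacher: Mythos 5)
Your proposal is correct and follows essentially the same route as the paper: the same incidence correspondence, the same codimension-$4$ computation of the fibers over $X$ using Lemma~\ref{L:linearEquations}(1) and the disjointness of the coordinate blocks, and the same use of a nonzero vector $w$ constant on the proper face $G$ to produce one-dimensional $t^w$-orbits in every nonempty fiber of $\pi_Z$, yielding $\dim\Theta\leq\dim Z+d-4<\dim Z$ under the standing hypothesis $d\leq 3$. Your explicit flagging of where $d\in\{2,3\}$ enters matches the paper's implicit use of the same inequality.
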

\begin{proof}
  Consider the incidence correspondence,
  \[
  \defcolor{\Upsilon}\ \vcentcolon=\ \{(x,f_\bullet,g_\bullet)\in X\times Z \mid f_1(x)=f_2(x)=g_1(x)=g_2(x)=0\}\,.
  \]
  This has projections to $X$ and to $Z$ and its image in $Z$ is the set $\Theta$.

  Consider the projection $\pi_X\colon\Upsilon\to X$.
  By Lemma~\ref{L:linearEquations}(1), for $x\in X$, each condition $f_i(x)=0$, $g_i(x)=0$ for $i=1,2$
  is a linear equation on $\CC^\calF$ or $\CC^\calG$.
  These are independent on $Z$ as they involve different variables.
  Thus the fiber $\pi_X^{-1}(x)$ is a vector subspace of $Z$ of codimension 4, and 
  $\dim(\Upsilon)=\dim(Z)-4+\dim(X)=\dim(Z)+d-3$.

  Consider the projection $\pi_Z$ to $Z$ and let $(f_\bullet,g_\bullet)\in\pi_Z(\Upsilon)$.
  Then there is an $x\in X$ such that $f_1(x)=f_2(x)=g_1(x)=g_2(x)=0$.
  Let $w\in\ZZ^{d+1}$ be a common quasi-homogeneity of the polynomials $f_i,g_j$. 
  By Lemma~\ref{L:quasi-homogeneous} (1), for any $t\in\CC^\times$, each of $f_1,f_2,g_1,g_2$ vanishes at $t^w\cdot x$.
  Thus the fiber $\pi_Z^{-1}(f_\bullet,g_\bullet)$ has dimension at least one.
  By the Theorem~\cite[Theorem 1.25]{Shaf13} on the dimension of the image and fibers of a map, the image $\pi_Z(\Upsilon)$ has
  dimension at most $\dim(Z)+d-4<\dim(Z)$, which establishes the lemma.
\end{proof}  

After reindexing and possibly interchanging $f$ with $g$, Cases (ii) and (iii) in the proof of Theorem~\ref{Th:smoothFaces} follow from the
next lemma. 

\begin{Lemma}\label{L:ii}
  The set
  \[
  \Theta\ \vcentcolon=\ \{c\in Z\mid \exists x\in X\mbox{ with }f_1(x)=f_2(x)=g_1(x)=0\mbox{ and }\,\nabla_\TT g_1(x)=\bfz\}
  \]
  lies in a proper subvariety of $Z$.
\end{Lemma}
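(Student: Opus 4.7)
The plan is to follow the template of Lemma~\ref{L:i}. I would form the incidence correspondence
\[
\Upsilon\ \vcentcolon=\ \{(x,(f_\bullet,g_\bullet))\in X\times Z\mid f_1(x)=0,\ f_2(x)=0,\ \nabla_\TT g_1(x)=\bfz\}
\]
with projections $\pi_X$ and $\pi_Z$. Because $\calA(\Gamma)$ is centrally symmetric with the origin in the interior of the base of $Q$, the face $G$ does not contain the origin, so the affine span of $\calG$ avoids the origin. Lemma~\ref{L:linearEquations}(3) then renders $g_1(x)=0$ automatic whenever $\nabla_\TT g_1(x)=\bfz$, and thus $\pi_Z(\Upsilon)=\Theta$.

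Next I would compute $\dim\Upsilon$ via $\pi_X$. For each $x\in X$, Lemma~\ref{L:linearEquations}(1) gives a nonzero linear equation on each of the two $\CC^\calF$ factors carrying $f_1,f_2$, while Lemma~\ref{L:linearEquations}(4) cuts out a codimension-$\dim\CC\calG$ subspace of the $\CC^\calG$ factor carrying $g_1$. These conditions lie on distinct summands of $Z$ and are therefore independent, giving $\dim\Upsilon=\dim Z+d-1-\dim\CC\calG$. On the other side, I pick $w\in\ZZ^{d+1}$ exposing $G$; because $G$ avoids the origin the common quasi-homogeneity weight $c=w\cdot G$ is nonzero, so by Lemma~\ref{L:quasi-homogeneous}(1) and its analogue for $\nabla_\TT$, the one-parameter subgroup $t\mapsto t^w$ preserves $\Upsilon$ and acts on $X$ with one-dimensional orbits. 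Hence every fiber of $\pi_Z$ has dimension at least one, yielding $\dim\pi_Z(\Upsilon)\leq\dim Z+d-2-\dim\CC\calG$. The lemma will follow once I show $\dim\CC\calG\geq d-1$.

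The key geometric input is again that $G$ avoids the origin, which places $G$ in an affine hyperplane $\{a:w'\cdot a=c'\}$ with $c'\ne 0$. If distinct $v_1,v_2\in\calG$ were linearly dependent, writing $v_2=\mu v_1$ would give $\mu c'=w'\cdot v_2=c'$ and hence $\mu=1$, a contradiction. Any two distinct points of $\calG$ are therefore linearly independent, so $|\calG|\geq 2$ forces $\dim\CC\calG\geq 2$. For $d=2$ this is more than enough, since $\calG\ne\emptyset$ ($G$ is not a vertex). For $d=3$ the same count works whenever $|\calG|\geq 2$. The main obstacle I foresee is the single residual case $d=3$ with $|\calG|=1$, where $G$ is the edge from the apex $\bfe$ to a single point $a\in\calA(\Gamma)$ and the dimension count degenerates to the non-strict $\dim\pi_Z(\Upsilon)\leq\dim Z$. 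I would dispatch this case directly: then $g_1(z)=c_{1,a}z^a$ is a single monomial with $a\ne 0$, and $\nabla_\TT g_1(x)=\bfz$ at any $x\in X=(\CC^\times)^{d+1}$ forces the scalar $c_{1,a}$ to vanish. Consequently $\Theta$ lies in the coordinate hyperplane $\{c_{1,a}=0\}$, a proper subvariety of $Z$, which combined with the generic dimension estimate completes the proof.
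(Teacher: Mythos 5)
Your proposal is correct and follows essentially the same route as the paper's proof: the same incidence correspondence, the same fiber-dimension count over $X$ using Lemma~\ref{L:linearEquations}, the same use of the quasi-homogeneity scaling $t^w\cdot x$ to show fibers of $\pi_Z$ are at least one-dimensional, and the same separate treatment of the degenerate case $|\calG|=1$ where the single coefficient must vanish. The only difference is cosmetic: you make explicit (via the affine hyperplane avoiding the origin) why distinct points of $\calG$ are linearly independent, a fact the paper uses implicitly.
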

\begin{proof}
  Consider the incidence correspondence,
  \[
  \Upsilon\ \vcentcolon=\ \{(x,f_\bullet,g_\bullet)\in X\times Z \mid f_1(x)=f_2(x)=g_1(x)=0
    \mbox{ and }\nabla_\TT g_1(x)=\bfz\}\,.
  \]
  Let $x\in X$ and consider the fiber $\pi_X^{-1}(x)$.
  As in the proof of Lemma~\ref{L:i}, the conditions $f_1(x)=f_2(x)=0$ are two independent linear equations on $Z$.
  By Lemma~\ref{L:linearEquations} (3), $\nabla_\TT g_1(x)=\bfz$ implies that $g_1(x)=0$, and by 
  Lemma~\ref{L:linearEquations} (4), the condition $\nabla_\TT g_1(x)=\bfz$ is $\dim \CC\calG$ further independent linear equations on $Z$.

  If $|\calG|=1$, so that $g_1=c_ax^a$ is a single term, then $g(x)=0$ implies that $c_a=0$.
  Consequently, the image $\Theta$ of $\Upsilon$ in $Z$ lies in a proper subvariety.
  Otherwise, $|\calG|>1$ which implies that $\dim \CC\calG\geq 2$, and thus the fiber has codimension at least 4.
   As in the proof of Lemma~\ref{L:i}, this implies that $\Theta$ lies in a proper subvariety of $Z$.
\end{proof}

 Case (iv) in the proof of Theorem~\ref{Th:smoothFaces} is more involved.

\begin{Lemma}\label{L:iv}
  The set
  \[
  \defcolor{\Theta}\ \vcentcolon=\ \{c\in Z\mid \exists x\in X\mbox{ with }f_1(x)=g_1(x)=0\mbox{ and }\,\nabla_\TT \psi_c(x)=\bfz\}
  \]
  lies in a proper subvariety of $Z$.
\end{Lemma}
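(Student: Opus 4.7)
The plan is to set up the incidence correspondence
\[
\Upsilon \;\vcentcolon=\; \{(x, f_\bullet, g_\bullet) \in X \times Z \mid f_1(x) = g_1(x) = 0,\ \nabla_\TT \psi_c(x) = \bfz\},
\]
with projections $\pi_X$ and $\pi_Z$, and to bound $\dim \pi_Z(\Upsilon) = \dim \Theta$ in the spirit of Lemmas~\ref{L:i} and~\ref{L:ii}. First, substituting $f_1(x) = g_1(x) = 0$ into~\eqref{Eq:nablaPsi} collapses all but two terms of $\nabla_\TT \psi_c(x)$, giving exactly the simplified identity~\eqref{Eq:iv_consequence}. I would then stratify $\Upsilon$ according to whether any of the remaining polynomials $f_i, g_j$ for $i, j \geq 2$ also vanishes at $x$. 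On the closed locus where such an extra vanishing occurs, the identity~\eqref{Eq:iv_consequence} either reduces directly to the hypothesis of Lemma~\ref{L:i} (when both an additional $f_i$ and an additional $g_j$ vanish) or forces $\nabla_\TT g_1(x) = \bfz$ (when only additional $f_i$'s vanish) or, symmetrically, $\nabla_\TT f_1(x) = \bfz$. In the latter two cases we are in the hypothesis of Lemma~\ref{L:ii} after reindexing and possibly interchanging $f$ with $g$; hence this portion of $\Upsilon$ projects into a proper subvariety of $Z$.

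The main step is the open stratum $\Upsilon^\circ$ on which $A \vcentcolon= \prod_{i \geq 2} f_i(x)$ and $B \vcentcolon= \prod_{j \geq 2} g_j(x)$ are both nonzero. There the gradient condition becomes the linear relation $A\, \nabla_\TT f_1(x) = (-1)^m B\, \nabla_\TT g_1(x)$ on $(f_1, g_1) \in \CC^\calF \oplus \CC^\calG$. For fixed $x$ and fixed $(f_2, \dots, g_m)$ with $A, B \neq 0$, I would analyze the linear map
\[
\Phi_x \colon \CC^\calF \oplus \CC^\calG \longrightarrow \CC^2 \oplus \CC^{d+1},\quad (f_1, g_1) \mapsto \bigl(f_1(x),\, g_1(x),\, A\,\nabla_\TT f_1(x) - (-1)^m B\, \nabla_\TT g_1(x)\bigr),
\]
whose kernel is the fiber of the further projection forgetting $(f_1, g_1)$. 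Applying Lemma~\ref{L:linearEquations} together with the common quasi-homogeneity $w$ of the supports $\calF \supset \calG$ (both lying on the hyperplane $w \cdot a = w_\calF$ with $w_\calF \neq 0$), a direct computation of the images of the two summands and their intersection gives $\mathrm{rank}(\Phi_x) = \dim \CC\calF + 1$, so the fiber has codimension $\dim \CC\calF + 1$. Combining this with $\dim X = d+1$ and the one-dimensional quasi-homogeneity $\CC^\times$-orbit on the $\pi_Z$-fibers, exactly as in Lemmas~\ref{L:i} and~\ref{L:ii}, yields
\[
\dim \pi_Z(\Upsilon^\circ) \;\leq\; \dim Z + d - \dim \CC\calF - 1,
\]
which is strictly less than $\dim Z$ whenever $\dim \CC\calF \geq d$.

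The main obstacle is the marginal case $\dim \CC\calF < d$, which in the setting of Theorem~\ref{Th:smoothFaces} can only occur when $d = 3$ and the face $G$ of $Q$ is an edge, giving $\dim \CC\calF = 2$. For such $G$ the facial polynomials are so sparse that the linear-algebraic count falls one short, forcing an ad hoc analysis from the explicit monomial structure. If $G$ joins a vertex of the base to the apex $\bfe$, then $\calG$ consists of a single point and $g_1$ is a single monomial, whose vanishing on the torus $X = (\CC^\times)^{d+1}$ already forces its coefficient in $Z$ to vanish---a proper linear condition on $Z$. If instead $G$ is an edge of the base, then $\calF = \calG$ consists of collinear lattice points and $f_1, g_1$ become one-variable polynomials in a single monomial after a monomial substitution; the existence of a common zero on $X$ then imposes a proper resultant-type condition on their coefficients in $Z$. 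These two small-face arguments close the remaining gap.
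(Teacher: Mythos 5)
Your proposal is correct, and its skeleton (delegate the strata with extra vanishing to Lemmas~\ref{L:i} and~\ref{L:ii}, then bound the open stratum by an incidence correspondence whose $\pi_X$-fibers are cut out by linear conditions of codimension $\dim\CC\calF+1$ and whose $\pi_Z$-fibers contain quasi-homogeneity orbits) matches the paper's. The difference is in how the count is closed. The paper introduces an auxiliary variable $y\in\CC^\times$ recording the ratio $\prod_{i\ge2}g_i(x)/\prod_{i\ge2}f_i(x)$, disposes of the case $\bfe\in\calF$ outright (the $-\lambda$ term forces $\nabla_\TT f_1(x)\notin\CC\calG$, so the relevant locus is empty), and then observes that in the surviving case $\calF=\calG$ the face $G$ is a proper face of the \emph{base} of $Q$, hence admits at least \emph{two} independent quasi-homogeneities; the $\pi_Z$-fibers therefore have dimension at least $2$, and $\dim\Theta^\circ\le\dim Z+d-\dim\CC\calF-2<\dim Z$ follows uniformly from $\dim\CC\calF\ge2$ and $d\le3$, with no residual case. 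You instead absorb the $\bfe\in\calF$ case into the rank computation for $\Phi_x$ (your rank $\dim\CC\calF+1$ is right in both cases, via the Euler relation~\eqref{Eq:Euler}), but you quotient by only one quasi-homogeneity orbit, so your bound $\dim Z+d-\dim\CC\calF-1$ falls one short precisely when $d=3$ and $G$ is an edge; your two ad hoc repairs there (a single-monomial $g_1$ forcing a coefficient to vanish when the edge meets the apex, and a resultant condition on the one-variable reductions of $f_1,g_1$ when $G$ is a base edge) are both legitimate and close the gap. The net trade-off: the paper's second quasi-homogeneity makes the argument uniform in the face $G$, while your version avoids the auxiliary $\CC^\times$ factor and the separate emptiness argument at the cost of a small case analysis.
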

\begin{proof}
  The set $\Theta$ includes the sets of Lemmas~\ref{L:i} and~\ref{L:ii}.
  Let $\defcolor{\Theta^\circ}\subset\Theta$ be the set of $c=(f_\bullet,g_\bullet)$ that have a witness $x\in X$ 
  ($f_1(x)=g_1(x)=0$ and $\nabla_\TT \psi_c(x)=\bfz$) such that none of
  $\nabla_\TT f_1(x)$, $\nabla_\TT g_1(x)$, or $f_i(x)g_i(x)$ for $i>1$ vanish.
  It will suffice to show that $\Theta^\circ$ lies in a proper subvariety of $Z$.

  For this, we use the incidence correspondence,
  \begin{multline*}
   \qquad\defcolor{\Upsilon}\ \vcentcolon=\ \{(y,x,f_\bullet,g_\bullet)\in \CC^\times \times X\times Z \mid f_1(x)=g_1(x)=0\,,\\
   y\prod_{i=2}^m f_i(x)\ -\ (-1)^m\prod_{i=2}^m g_i(x)=0\,,\mbox{ and }
   \nabla_\TT f_1(x)\ -\ (-1)^my \nabla_\TT g_1(x)=\bfz\}\,.\qquad
  \end{multline*}
  We show that $\Theta^\circ\subset\pi_Z(\Upsilon)$.
  Let $c=(f_\bullet,g_\bullet)\in\Theta^\circ$ with witness $x\in X$ in that $f_1(x)=g_1(x)=0$ and $\nabla_\TT \psi_c(x)=\bfz$, but none of
  $\nabla_\TT f_1(x)$, $\nabla_\TT g_1(x)$, or $f_i(x)g_i(x)$ for $i>1$ vanish.
  There is a unique $\defcolor{y}\in\CC^\times$ satisfying
  \[
  y \prod_{i=2}^m f_i(x)\ -\ (-1)^m \prod_{i=2}^m g_i(x)\ =\ 0\,.
  \]
 Dividing~\eqref{Eq:iv_consequence}
  by $\prod_{i=2}^m f_i(x)$ gives $\nabla_\TT f_1(x)-(-1)^my \nabla_\TT g_1(x)=\bfz$, and thus
  $(y,x,f_\bullet,g_\bullet)\in\Upsilon$.

  We now determine the dimension of $\Upsilon$.
  Let $(y,x)\in\CC^\times \times X$ and consider the fiber $\pi^{-1}(y,x)\subset Z$ above it in $\Upsilon$.
  The two linear and one nonlinear equations
  \begin{equation}\label{Eq:Three}
  f_1(x)\ =\ g_1(x)\ =\  y\prod_{i=2}^m f_i(x)-(-1)^m\prod_{i=2}^m g_i(x)\ =\ 0
  \end{equation}
  are independent on $Z$ as they involve disjoint sets of variables, and thus 
  define a subvariety $\defcolor{T}\subset Z$ of codimension 3.
  Consider the remaining equation, $\nabla_\TT f_1(x)-(-1)^my \nabla_\TT g_1(x)=\bfz$.

  Note that if $\bfe=(\bfz,1)$ lies in the support $\calF$ of $f_1$, so that $\calF=\calG\cup\{\bfe\}$, then
  $\nabla_\TT f_1(x)$ contains the term $-\bfe$ and thus cannot lie in the span $\CC\calG$ of $\calG$, which contains $\nabla_\TT g_1(x)$
  by Lemma~\ref{L:linearEquations}(2).
  In this case the fiber is empty and $\Theta^\circ=\emptyset$.

  Suppose that $\calF=\calG$ and $(f_\bullet,g_\bullet)\in T$.
  Let $w\in\ZZ^{d+1}$ be any homogeneity for $f_1$ (or $g_1$).
  Then there exists  $w_\calF\neq 0$ such that $w\cdot a=w_\calF$ for all $a\in\calF$.
  Equation~\eqref{Eq:Euler} implies that
  \[
  w\cdot \nabla_\TT f_1(x)\  =\ w_\calF\, f_1(x)\ =\ 0\,,
  \]
  and the same for $g_1$.
  Thus $\nabla_\TT f_1(x)$ and $\nabla_\TT g_1(x)$ are annihilated by all homogeneities and so lie in the affine span of $\calF$---the linear span
  of differences $a{-}b$ for $a,b\in\calF$. 
  This has dimension $\dim \CC\calF-1$.
  Consequently, $\nabla_\TT f_1(x)-(-1)^my \nabla_\TT g_1(x)=\bfz$ consists of
  $\dim \CC\calF-1$ independent linear equations on the subset of
  $\CC^\calF\oplus\CC^\calF$ consisting of pairs $f_1,g_1$ such that $f_1(x)=g_1(x)=0$.
  These are independent of the third equation in~\eqref{Eq:Three}.
  Thus the fiber $\pi^{-1}(y,x)\subset Z$ has codimension $3+\dim\CC\calF-1=2+\dim\CC\calF$ and so 
  \[
  \dim\Upsilon\ =\ \dim(\CC^\times\times X)+\dim Z-\dim\CC\calF-2\ =\ \dim Z+d-\dim\CC\calF\,.
  \]
  
  Let $(f_\bullet,g_\bullet)\in\pi_Z(\Upsilon)$ have witness $(y,x)$.
  That is, the equations~\eqref{Eq:Three} hold, as well as $\nabla_\TT f_1(x)-(-1)^my \nabla_\TT g_1(x)=\bfz$.
  As in the proof of Lemma~\ref{L:i}, if $w\in\ZZ^{d+1}$ is a quasi-homogeneity for polynomials of support $\calF$, then
  $(y, t^w\cdot x)$ also satisfies these equations.

  We have $\calF=\calG= G\cap\calA(\Gamma)$, so that $G$ is a face of the base of $Q$.
  Thus there are at least two (in fact the codimension of $G$ in $Q$) independent homogeneities, which implies that the fiber
  $\pi_Z^{-1}(f_\bullet,g_\bullet)$ has dimension at least two.
  This implies that the image $\Theta^\circ$ has dimension at most $\dim Z+d-\dim\CC\calF-2$.
  Since $G$ is not a vertex, $\dim \CC\calF\geq 2$, which shows that $\dim\Theta^\circ<\dim Z$ and completes the proof.
\end{proof}  

%
\subsection{Case (v)}\label{S:Case(v)}

For $\alpha\in\CC^\times$, define $\defcolor{\Psi(\alpha,f_\bullet, g_\bullet)}\subset X$ to be the set
\[
   \Big\{x\in X \;\Big|\;  \mbox{none of $f_i(x)g_i(x)$ for $i\geq 1$ vanish}
     \mbox{ \ and \ } \prod_{i=1}^m f_i(x)\ -\ (-1)^m\alpha \prod_{i=1}^m g_i(x)\ =\ 0\Big\}\ .
\]
Case (v) in the proof of Theorem~\ref{Th:smoothFaces} follows from the next lemma.

\begin{Lemma}\label{L:A}
  There is a dense open subset $U_1\subset Z$ such that if $(f_\bullet,g_\bullet)\in U_1$, then
  $\Psi(1,f_\bullet,g_\bullet)$ is smooth.
\end{Lemma}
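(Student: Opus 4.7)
The strategy is to apply Bertini's Theorem (as stated in Section~\ref{Sec:AG}) to a natural incidence variety over the parameter space $Z$. Define
\[
  \Upsilon\ \vcentcolon=\ \Bigl\{(x,f_\bullet,g_\bullet)\in X\times Z \;\Big|\;
   \textstyle\prod_{i=1}^m f_i(x)g_i(x)\neq 0,\
   \prod_{i=1}^m f_i(x) - (-1)^m \prod_{i=1}^m g_i(x) = 0 \Bigr\}\,,
\]
a locally closed subvariety of $X\times Z$. By construction, the fiber of the projection $\pi_Z\colon\Upsilon\to Z$ over $c\in Z$ is exactly $\Psi(1,c)$, so it suffices to find a dense open $U_1\subset Z$ such that $\pi_Z^{-1}(c)$ is smooth for every $c\in U_1$.

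The first step is to show that $\Upsilon$ is smooth by a Jacobian calculation on the defining function $F(x,c):=\prod_i f_i(x) - (-1)^m\prod_i g_i(x)$. At any point $(x_0,c_0)\in\Upsilon$, the partial derivative of $F$ with respect to the coefficient of the monomial $x^a$ in $f_1$ equals $x_0^a\prod_{i\geq 2}f_{i,0}(x_0)$, which is nonzero since we are on the locus $\prod_i f_i g_i\neq 0$. Hence the gradient $\nabla F$ is nowhere vanishing on $\Upsilon$, so $\Upsilon$ is a smooth hypersurface in the open subset $\{(x,c)\in X\times Z\mid \prod_i f_i(x)g_i(x)\neq 0\}$, of dimension $\dim Z+\dim X-1$.

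The second step is to verify that $\pi_Z$ is dominant. For a generic $c_0=(f_{\bullet,0},g_{\bullet,0})\in Z$ (e.g., no coefficient zero), $F(x,c_0)$ is a nonzero Laurent polynomial on $X$, defining a nonempty hypersurface that meets the open subset $\{x: f_{i,0}(x)g_{i,0}(x)\neq 0\ \forall i\}$; this gives $\pi_Z^{-1}(c_0)\neq\emptyset$. Applying Bertini's Theorem to the smooth variety $\Upsilon$ and the dominant map $\pi_Z$ then yields a nonempty open $U_1\subset Z$ such that $\pi_Z^{-1}(c)=\Psi(1,c)$ is smooth for every $c\in U_1$, which is the conclusion of the lemma.

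The main obstacle is the smoothness verification for $\Upsilon$; once this is in place, Bertini delivers the conclusion immediately. A subtle point I would take care with is that $\Psi(1,c)$ is a hypersurface in the open subset $X^\circ_c\subset X$ rather than in $X$ itself, but since the open condition $\prod_i f_ig_i\neq 0$ has been baked into the definition of $\Upsilon$, the fibers of $\pi_Z$ agree with $\Psi(1,c)$ as schemes and Bertini's smoothness statement transfers directly.
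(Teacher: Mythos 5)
Your proof is correct, but it takes a genuinely different route from the paper, and a more direct one. The paper first proves a weaker statement (Lemma~\ref{L:B}): introducing an auxiliary parameter $\alpha$ and the map $\varphi(x,f_\bullet,g_\bullet)=((-1)^m\prod f_i(x)/\prod g_i(x),f_\bullet,g_\bullet)$ from the open set $T\subset X\times Z$, it applies Bertini to conclude that $\Psi(\alpha,f_\bullet,g_\bullet)$ is smooth for $(\alpha,f_\bullet,g_\bullet)$ in a dense open $U\subset\CC^\times\times Z$. Since Bertini gives no control over whether $U$ meets the slice $\alpha=1$, the paper then runs an indirect argument using the $\CC^\times$-action $\alpha.(f_\bullet,g_\bullet)=(f_\bullet,\alpha g_1,g_2,\dotsc,g_m)$ and the identity $\Psi(\alpha,f_\bullet,g_\bullet)=\Psi(1,f_\bullet,\alpha^{-1}g_\bullet)$ to derive a contradiction from the failure of the lemma. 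You avoid this entirely by observing that the total space $\Upsilon$ over the slice $\alpha=1$ is itself smooth: the derivative of $F$ with respect to a coefficient of $f_1$ is $x^a\prod_{i\geq 2}f_i(x)$, which is nonvanishing on the locus $\prod_i f_ig_i\neq 0$. This lets you apply Bertini directly to $\pi_Z\colon\Upsilon\to Z$, whose target is $Z$ itself, so the resulting dense open set lives exactly where you need it. Your argument is shorter and, in my view, cleaner; what the paper's detour buys is that its Bertini application needs no Jacobian computation at all, since its source $T$ is an open subset of $X\times Z$ and hence trivially smooth.

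Two points to tighten. First, the dominance of $\pi_Z$: the assertion that $\Var(F(\cdot,c_0))$ ``meets the open subset'' deserves a line of justification. One way: on $\Var(F(\cdot,c_0))$ one has $\prod_i f_{i,0}(x)=0$ if and only if $\prod_i g_{i,0}(x)=0$, so the excluded locus is contained in $\bigcup_{i,j}\bigl(\Var(f_{i,0})\cap\Var(g_{j,0})\bigr)$, which for generic $c_0$ has codimension at least $2$ in $X$ (or is empty when $|\calG|=1$), whereas every component of the nonempty hypersurface $\Var(F(\cdot,c_0))$ has codimension $1$; hence the intersection with the open set is nonempty. Second, $\Upsilon$ need not be irreducible, so the Bertini statement from Section~\ref{Sec:AG} should be applied to each irreducible component separately (discarding the non-dominant ones, whose images lie in proper closed subsets); intersecting the finitely many resulting dense open subsets of $Z$ gives $U_1$. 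Both are routine repairs.
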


We will deduce this from a weaker lemma.

\begin{Lemma}\label{L:B}
  There is a dense open subset $U\subset \CC^\times\times Z$ such that if $(\alpha,f_\bullet,g_\bullet)\in U$, then
  $\Psi(\alpha,f_\bullet,g_\bullet)$ is smooth.
\end{Lemma}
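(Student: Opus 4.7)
The plan is a standard Bertini argument applied to an incidence variety. Consider
\[
  \Upsilon \ \vcentcolon=\ \Big\{(x, \alpha, f_\bullet, g_\bullet) \in X \times \CC^\times \times Z \;\Big|\; \prod_{i=1}^m f_i(x) g_i(x) \neq 0 \text{ and } \prod_{i=1}^m f_i(x) - (-1)^m \alpha \prod_{i=1}^m g_i(x) = 0\Big\},
\]
viewed as a subvariety of the open set $V \subset X \times \CC^\times \times Z$ on which $\prod_i f_i g_i$ does not vanish. On $V$, the partial derivative of the defining equation with respect to $\alpha$ equals $-(-1)^m \prod_i g_i(x)$, which is nowhere zero, so $\Upsilon$ is a smooth hypersurface in $V$. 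Equivalently, projection onto the $(x, f_\bullet, g_\bullet)$-factor exhibits $\Upsilon$ as the graph of $\alpha = (-1)^m \prod_i f_i(x) / \prod_i g_i(x)$ over an open subset of $X \times Z$, which makes smoothness transparent.

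Let $\pi \colon \Upsilon \to \CC^\times \times Z$ denote projection onto the second and third factors. I then apply Bertini's Theorem as stated in Section~\ref{Sec:AG}: since $\Upsilon$ is smooth, provided $\pi$ is dominant there is a dense open subset $U \subset \CC^\times \times Z$ such that $\pi^{-1}(\alpha, f_\bullet, g_\bullet)$ is smooth for every $(\alpha, f_\bullet, g_\bullet) \in U$. By construction this fiber is precisely $\Psi(\alpha, f_\bullet, g_\bullet)$, which will yield the lemma.

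It remains to verify dominance. Fix $(f_\bullet, g_\bullet)$ in the dense open subset of $Z$ where every polynomial $f_i, g_j$ is nonzero, so that $\{x \in X : \prod_i f_i(x) g_i(x) \neq 0\}$ is a dense open subset of $X$. For generic such $(f_\bullet, g_\bullet)$ and generic $\alpha \in \CC^\times$, the Laurent polynomial $\prod f_i - (-1)^m \alpha \prod g_i$ is nonzero and divisible by no individual $f_i$ or $g_j$; its zero locus is therefore a hypersurface in $X$ not contained in $\bigcup_i \Var(f_i g_i)$, hence it meets the good open set and produces a point of $\Upsilon$ mapping to $(\alpha, f_\bullet, g_\bullet)$. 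The only nontrivial step is this dominance verification; smoothness of $\Upsilon$ is automatic because $\alpha$ enters linearly with nowhere-vanishing coefficient on $V$, so the main work reduces to the generic existence of a zero of the defining polynomial outside the union of coordinate-like divisors $\Var(f_i g_j)$.
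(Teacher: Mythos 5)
Your construction is the paper's: the incidence variety $\Upsilon$ you define is exactly the graph of the map $\varphi(x,f_\bullet,g_\bullet)=\bigl((-1)^m\prod_i f_i(x)/\prod_i g_i(x),\,f_\bullet,\,g_\bullet\bigr)$ on the open set $T\subset X\times Z$ where no $f_ig_i$ vanishes, and both proofs finish by applying Bertini's Theorem to the projection to $\CC^\times\times Z$, whose fibers are the sets $\Psi(\alpha,f_\bullet,g_\bullet)$. Smoothness of $\Upsilon$ is handled equivalently (you differentiate in $\alpha$; the paper notes $T$ is open in the smooth $X\times Z$).

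The one step where you diverge is the dominance check, and your version has a genuine soft spot. From ``$h:=\prod_i f_i-(-1)^m\alpha\prod_i g_i$ is divisible by no individual $f_i$ or $g_j$'' you cannot conclude $\Var(h)\not\subset\bigcup_i\Var(f_ig_i)$: that containment holds precisely when every irreducible factor of $h$ divides $\prod_i f_ig_i$, and an irreducible factor of $h$ can coincide with an irreducible factor of a reducible $f_i$ without $f_i$ itself dividing $h$. You also need $h$ not to be a monomial (a unit in the Laurent ring), else $\Var(h)=\emptyset$. Both points can be repaired for generic parameters, but as written the inference is invalid, and it is the step you yourself flag as the crux. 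The paper avoids all of this with an explicit witness: take $f_i=x^a$, $g_i=x^b$ for distinct $a,b$ in the support (possible since the face $G$ is not a vertex); then $\varphi$ restricted to $X\times\{(f_\bullet,g_\bullet)\}$ is the monomial map $x\mapsto(-1)^m x^{m(a-b)}$, which is a submersion onto $\CC^\times$, so $d\varphi$ is surjective there and $\varphi(T)$ contains a nonempty open subset of the irreducible variety $\CC^\times\times Z$, hence is dense. Substituting such an explicit witness for your genericity argument completes your proof.
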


\begin{proof}[Proof of Lemma~\ref{L:A}]
  If we knew that the set $U$ of Lemma~\ref{L:B} contained a point $(1,f_\bullet,g_\bullet)$, then
  $U_1\vcentcolon=U\cap(\{1\}\times Z)$ would be a dense open subset of $Z$,  which
  would complete the proof.
  As we do not know this, we must instead argue indirectly.

  Suppose that there is no such open set $U_1$ as in Lemma~\ref{L:A}.
  Then the set $\defcolor{\Xi}\subset Z$ consisting of $(f_\bullet,g_\bullet)$ such that $\Psi(1,f_\bullet,g_\bullet)$ is singular is dense
  in $Z$.

  For $\alpha\in\CC^\times$ and $(f_\bullet,g_\bullet)\in Z$, define \defcolor{$\alpha.(f_\bullet,g_\bullet)$} to be 
  $(f_\bullet,\alpha.g_\bullet)$ where
  \[
  \alpha.(g_1,g_2,\dotsc,g_m)\ =\ (\alpha g_1,g_2,\dotsc,g_m)\,.
  \]
  This is a $\CC^\times$-action on $Z$.
  Consequently, $\alpha.\Xi$ is dense in $Z$ for all $\alpha\in\CC^\times$.

  Let $U\subset\CC^\times\times Z$ be the set of Lemma~\ref{L:B}.
  As it is nonempty, let $(\alpha,f'_\bullet, g'_\bullet)\in U$.
  Then $\defcolor{U_\alpha}\vcentcolon= U\cap (\{\alpha\}\times Z)$ is nonempty and open in $\{\alpha\}\times Z$.
  As $\alpha.\Xi$ is dense, we have 
  \[
  U_\alpha \bigcap \big(\{\alpha\}\times \alpha.\Xi\big)\ \neq\ \emptyset\,.
  \]
  This is a contradiction, for if $(\alpha,f_\bullet,g_\bullet)\in U_\alpha$, then
  $\Psi(\alpha,f_\bullet, g_\bullet)$ is smooth, but if $(f_\bullet,g_\bullet)\in \alpha.\Xi$, then
  $(f_\bullet,\alpha^{-1}g_\bullet)\in\Xi$ and $\Psi(1,f_\bullet,\alpha^{-1}g_\bullet)$ is singular.
  The contradiction follows from the equality of sets $\Psi(\alpha,f_\bullet, g_\bullet)=\Psi(1,f_\bullet,\alpha^{-1}g_\bullet)$.
\end{proof}

\begin{proof}[Proof of Lemma~\ref{L:B}]
  Let $\defcolor{T}\subset X\times Z$ be the set of $(x,f_\bullet,g_\bullet)$ such that none of $f_i(x)g_i(x)$ for $i\geq 1$ vanish.
  Define $\varphi\colon T\to \CC^\times\times Z$ by
  \[
  \varphi(x,f_\bullet,g_\bullet)\ =\
  \bigl((-1)^m{\textstyle \prod_{i=1}^m f_i(x)/ \prod_{i=1}^mg_i(x)} \;,\; f_\bullet\,,\,g_\bullet \bigr)\,.
  \]
  Notice that $\varphi^{-1}(\alpha,f_\bullet,g_\bullet)=\Psi(\alpha,f_\bullet,g_\bullet)$ for
  $(\alpha,f_\bullet,g_\bullet)\in\CC^\times\times Z$.

  We claim that $\varphi(T)$ is dense in $\CC^\times\times Z$.
  For this, recall that the polynomials $f_i$ have support $\calF$, which is $G\cap(\calA(\Gamma)\cup\{\bfe\})$
  for some face $G$ of $Q=\conv(\calA(\Gamma)\cup\{\bfe\})$ that is neither its base nor a vertex, and the 
  polynomials $g_i$ have support $\calG=G\cap\calA(\Gamma)$.
  Since $G$ is not a vertex, there are $a,b\in\calF$ with $a\neq b$ and $b\in\calA(\Gamma)$.

  Let $\defcolor{f_i}\vcentcolon=x^a$ and $\defcolor{g_i}\vcentcolon=x^b$ for $i=1,\dotsc,m$.
  Then $X\times\{(f_\bullet,g_\bullet)\}\subset T$ and for $x\in X$
  $\varphi(x,f_\bullet,g_\bullet)=(x^{ma}-(-1)^mx^{mb},f_\bullet,g_\bullet)$.
  The map $X=(\CC^\times)^{d+1}\to\CC^\times$ given by $x\mapsto x^{ma}-(-1)^mx^{mb}$ is surjective as $ma-mb\neq 0$.
  This implies that the differential $d\varphi$ is surjective at any point of $X\times\{(f_\bullet,g_\bullet)\}$, and therefore $\varphi(T)$ 
  is dense in $\CC^\times\times Z$.

  Since $T$ is an open subset of the smooth variety $X\times Z$, it is smooth.
  Then Bertini's Theorem~\cite[Thm.\ 2.27, p.\ 139]{Shaf13} implies that there is a dense open subset
  $U\subset \CC^\times\times Z$ such that for
  $(\alpha,f_\bullet,g_\bullet)\in U$,
  $\varphi^{-1}(\alpha,f_\bullet,g_\bullet)=\Psi(\alpha,f_\bullet,g_\bullet)$ is smooth.
\end{proof}

\section{Critical points property}\label{S:final}
We illustrate our results, using them to establish the critical points property (and thus the spectral edges nondegeneracy conjecture) for
three periodic graphs.
We first state this property.

Let $\Gamma$ be a connected $\ZZ^d$-periodic graph with parameter space $Y=\CC^E\times\CC^W$ for discrete
operators on $\Gamma$.
We say that $\Gamma$ has the \demph{critical points property} if there
is a dense open subset $U\subset Y$ such that if $c\in U$, then every critical point of the function $\lambda$ on the Bloch variety
$\Var(D_c(z,\lambda))$ is nondegenerate in that the Hessian determinant
 \begin{equation}\label{Eq:Hessian}
   \det \left( \left( \frac{\partial^2\lambda}{\partial z_i\partial z_j}\right)_{i,j=1}^{d} \right)
 \end{equation}
is nonzero at that critical point.
Here, the derivatives are implicit, using that $D(z,\lambda)=0$.  

\subsection{Reformulation of Hessian condition}
Let $D=\det(L_c(z)-\lambda I)$ be the dispersion function for an operator $L_c$ on a periodic graph $\Gamma$.
In Section~\ref{S:A} we derived the equations for the critical points of the function $\lambda$ on the Bloch variety
$\Var(D(z,\lambda))$,
 \begin{equation}\label{Eq:CPE_nonToric}
     D(z,\lambda)\ =\ 0
    \qquad\mbox{and}\qquad
  \frac{\partial D}{\partial z_i}\ =\ 0 \qquad\mbox{for } i=1,\dots,d\,.
 \end{equation}
Implicit differentiation of $D=0$ gives
$\frac{\partial D}{\partial z_j}+\frac{\partial D}{\partial\lambda}\cdot\frac{\partial \lambda}{\partial z_j}=0$.
If $\frac{\partial D}{\partial\lambda}\neq 0$, then $\frac{\partial\lambda}{\partial z_j}=0$.
If $\frac{\partial D}{\partial\lambda}=0$, then $(z,\lambda)$ is a singular point hence is also a critical point of the function $\lambda$
and so we again have $\frac{\partial\lambda}{\partial z_j}=0$.
Differentiating again we obtain,
\[
0\ =\
   \frac{\partial}{\partial z_i}\left(\frac{\partial D}{\partial z_j}
   +\frac{\partial D}{\partial\lambda}\cdot\frac{\partial \lambda}{\partial z_j}\right)
 \ =\ \frac{\partial^2D}{\partial z_i\partial z_j} + 
      \frac{\partial^2D}{\partial z_i\partial\lambda}\cdot\frac{\partial \lambda}{\partial z_j} +
      \frac{\partial D}{\partial\lambda}\cdot \frac{\partial^2\lambda}{\partial z_i\partial z_j}\,.
\]
At a critical point (so that $\frac{\partial\lambda}{\partial z_j}=0$), we have
\[
\frac{\partial^2D}{\partial z_i\partial z_j} \ =\
-\frac{\partial D}{\partial\lambda}\cdot \frac{\partial^2\lambda}{\partial z_i\partial z_j}\,.
\]
Thus 
\[
    \det \left( \left( \frac{\partial^2 D}{\partial z_i\partial z_j}\right)_{i,j=1}^{d} \right)
    \ =\ \left(-\frac{\partial D}{\partial\lambda}\right)^d\cdot
    \det \left( \left( \frac{\partial^2\lambda}{\partial z_i\partial z_j}\right)_{i,j=1}^{d} \right)\ .
\]
Consider now the Jacobian matrix of the critical point equations~\eqref{Eq:CPE_nonToric},
\[
  J\ =\ \left(\begin{array}{cccc}
     \frac{\partial D}{\partial z_1} & \dotsc &  \frac{\partial D}{\partial z_d} & \frac{\partial D}{\partial\lambda} \\
     \frac{\partial^2 D}{\partial z_1^2} & \dotsc &  \frac{\partial^2 D}{\partial z_d\partial z_1}      \rule{0pt}{14pt}
     & \frac{\partial^2 D}{\partial\lambda\partial z_1} \\
     \vdots & \ddots & \vdots & \vdots\\
     \frac{\partial^2 D}{\partial z_1\partial z_d} & \dotsc &  \frac{\partial^2 D}{\partial z_d^2}
     & \frac{\partial^2 D}{\partial\lambda\partial z_d} 
   \end{array}\right)\ .
\]
At a critical point, the first row is $(0\; \dotsb\; 0\; \frac{\partial D}{\partial\lambda})$, and thus
\[
   \det(J)\ =\ \frac{\partial D}{\partial\lambda}
               \det \left( \left( \frac{\partial^2 D}{\partial z_i\partial z_j}\right)_{i,j=1}^{d} \right)
          \ =\ 
                (-1)^d \left(\frac{\partial D}{\partial\lambda}\right)^{d+1}\cdot
              \det \left( \left( \frac{\partial^2\lambda}{\partial z_i\partial z_j}\right)_{i,j=1}^{d} \right)\ .
\]

A solution $\zeta$ of a system of polynomial equations on $\CC^n$ is \demph{regular} if the Jacobian of the system at $\zeta$ has full
rank $n$. 
Regular solutions are isolated and have multiplicity 1.
We deduce the following lemma.

\begin{Lemma}\label{L:nondegenerate}
  A nonsingular critical point $(z,\lambda)$ on $\Var(D_c(z,\lambda))$ is nondegenerate if and only if it is a regular solution 
  of the critical point equations~\eqref{Eq:CPE_nonToric}.
\end{Lemma}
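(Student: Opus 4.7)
The plan is to observe that essentially all of the computational work has already been carried out in the paragraphs preceding the statement of the lemma, and what remains is a short logical argument combining two facts.

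First, I would recall the identity derived just above the lemma:
\[
    \det(J)\ =\ (-1)^d \left(\frac{\partial D}{\partial\lambda}\right)^{d+1}
    \det \left( \left( \frac{\partial^2\lambda}{\partial z_i\partial z_j}\right)_{i,j=1}^{d} \right),
\]
valid at any critical point of $\lambda$ on $\Var(D_c(z,\lambda))$. Here $J$ is the Jacobian matrix of the system~\eqref{Eq:CPE_nonToric}, which consists of $d{+}1$ polynomial equations in the $d{+}1$ variables $(z_1,\dotsc,z_d,\lambda)$. Regularity of a solution therefore means precisely $\det(J)\neq 0$.

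Second, I would observe that at a \emph{nonsingular} critical point $(z,\lambda)$, the factor $\frac{\partial D}{\partial\lambda}(z,\lambda)$ is nonzero. Indeed, nonsingularity of the hypersurface $\Var(D_c)$ at $(z,\lambda)$ means $\nabla D(z,\lambda)\neq\bfz$, while the critical point equations~\eqref{Eq:CPE_nonToric} force $\frac{\partial D}{\partial z_i}(z,\lambda)=0$ for every $i=1,\dotsc,d$; the only remaining partial derivative is $\frac{\partial D}{\partial\lambda}$, which must therefore be nonzero.

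Combining the two, at a nonsingular critical point we have that $\det(J)$ and the Hessian determinant~\eqref{Eq:Hessian} differ by the nonzero factor $(-1)^d(\partial D/\partial\lambda)^{d+1}$, so one is nonzero if and only if the other is. By the definition of regularity this is exactly the statement that the critical point is a regular solution of~\eqref{Eq:CPE_nonToric} iff it is nondegenerate, which is what we want.

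There is essentially no serious obstacle: the derivation of the identity above already handled the only nontrivial calculation (the transfer from the Hessian of the implicit function $\lambda$ to a minor of the Jacobian of the explicit polynomial system). The only small care needed is in the second step, to confirm that nonsingularity together with the critical point equations rules out the pathological case $\partial D/\partial\lambda=0$, but this is immediate from $\nabla D\neq\bfz$.
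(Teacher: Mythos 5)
Your proposal is correct and follows essentially the same route as the paper: the identity $\det(J)=(-1)^d(\partial D/\partial\lambda)^{d+1}\det\bigl((\partial^2\lambda/\partial z_i\partial z_j)\bigr)$ derived just before the lemma, combined with the observation that nonsingularity and the critical point equations force $\frac{\partial D}{\partial\lambda}\neq 0$, is exactly how the paper deduces the statement. No gaps.
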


The following theorem is adapted from arguments in~\cite[Sect.\ 5.4]{DKS}.

\begin{Theorem}\label{Th:singleCalculation}
  Let $\Gamma$ be a $\ZZ^d$-periodic graph.
  If there is a parameter value $c\in Y$ such that the critical point equations have $(d{+}1)!\vol(\calN(\Gamma))$ regular solutions, then the
  critical points property holds for $\Gamma$.
\end{Theorem}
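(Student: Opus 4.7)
Set $N \vcentcolon= (d{+}1)!\vol(\calN(\Gamma))$. The plan is to construct a dense Zariski open subset $W \subseteq Y$ such that for every $c \in W$, all critical points on the Bloch variety $\Var(D_c)$ are isolated and nondegenerate; by Lemma~\ref{L:nondegenerate} this is the critical points property.

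First I would form the incidence variety
\[
\mathcal{I} \;\vcentcolon=\; \{(c,z,\lambda) \in Y \times (\CC^\times)^d \times \CC \mid (z,\lambda) \text{ satisfies the CPE~\eqref{Eq:CPE} for } L_c\},
\]
which is a subvariety since, after multiplying by monomials to clear negative exponents in the $z_i$'s, the CPE are polynomial in all variables. Let $\pi\colon \mathcal{I} \to Y$ be the projection. Writing $\Delta(c,z,\lambda)$ for the Jacobian determinant of the CPE with respect to $(z,\lambda)$, Lemma~\ref{L:nondegenerate} identifies regular (equivalently, nondegenerate) solutions with the open subvariety $\mathcal{I}^{\mathrm{reg}} \vcentcolon= \mathcal{I} \cap \{\Delta \neq 0\}$. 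Observe that $c_0$ lies in the Zariski open set $U \subseteq Y$ of Lemma~\ref{L:NewtonPolytope}: otherwise $\calN(D_{c_0})$ would be a strict subpolytope of $\calN(\Gamma)$ of strictly smaller $(d{+}1)$-dimensional volume, contradicting the existence of $N$ regular, hence multiplicity-one, critical points via Corollary~\ref{C:Bound}.

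Next, I would single out two further dense open subsets: $U'$ where $\dim \pi^{-1}(c) = 0$, open by upper semicontinuity of fiber dimension and containing $c_0$ since $\pi^{-1}(c_0)$ is the $N$-point set; and $U'' \vcentcolon= (U \cap U') \smallsetminus \overline{\pi(\mathcal{I} \smallsetminus \mathcal{I}^{\mathrm{reg}})}$, the complement of the Zariski closure of the constructible (by Chevalley's theorem) set of parameters admitting a degenerate solution. I would take $W \vcentcolon= U''$. For $c \in W$ every critical point is isolated (from $c \in U'$) and regular (from $c \notin \overline{\pi(\mathcal{I} \smallsetminus \mathcal{I}^{\mathrm{reg}})}$), hence nondegenerate.

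The crux, and the main obstacle, is showing $\pi(\mathcal{I} \smallsetminus \mathcal{I}^{\mathrm{reg}})$ is not Zariski dense in $Y$. For this I would argue in the classical topology: at each of the $N$ regular solutions $(c_0, z_i, \lambda_i)$ the Jacobian $\Delta$ is nonzero, so the implicit function theorem furnishes a classical open neighborhood $V_i \subseteq Y$ of $c_0$ and a holomorphic section of $\pi$ through $(z_i,\lambda_i)$ lying entirely in $\mathcal{I}^{\mathrm{reg}}$. On $V_0 \vcentcolon= V_1 \cap \dotsb \cap V_N \cap U \cap U'$ (shrunk so the $N$ sections remain disjoint), every $c$ therefore has at least $N$ distinct regular, multiplicity-one solutions, while Corollary~\ref{C:Bound} caps the total multiplicity of the isolated fiber at $N$; the $N$ sections thus exhaust $\pi^{-1}(c)$. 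Hence $V_0 \cap \pi(\mathcal{I} \smallsetminus \mathcal{I}^{\mathrm{reg}}) = \emptyset$. Because $Y$ is an irreducible affine space, any constructible Zariski-dense subset contains a nonempty Zariski-open subset and in particular meets every nonempty classical open set such as $V_0$; so $\pi(\mathcal{I} \smallsetminus \mathcal{I}^{\mathrm{reg}})$ lies in a proper subvariety, and $W = U''$ is the required dense Zariski open subset of $Y$.
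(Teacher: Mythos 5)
Your proposal is correct and follows essentially the same route as the paper: the incidence variety of critical points over the parameter space, the implicit function theorem at the $N$ regular solutions to produce local sections over a classical neighborhood, the volume bound of Corollary~\ref{C:Bound} to show these sections exhaust nearby fibers, and the passage from a classical open set avoiding the degenerate locus to a Zariski open one. The only blemish is your claim that $\{c : \dim\pi^{-1}(c)=0\}$ is open "by upper semicontinuity" (on the target this set is in general only constructible), but that step is dispensable, since the last sentence of Theorem~\ref{Th:Bound} already rules out positive-dimensional fiber components once the $N$ multiplicity-one isolated points are in hand.
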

\begin{proof}
  Let $Y$ be the parameter space for operators $L$ on $\Gamma$.
  Consider the variety
  \[
    \defcolor{CP}\ \vcentcolon=\
    \{(c,z,\lambda)\in Y\times (\CC^\times)^d\times\CC \mid \
    \mbox{the critical point equations~\eqref{Eq:CPE} hold}\}\,,
  \]
  which is the incidence variety of critical points on all Bloch varieties for operators on $\Gamma$.
  Let $\pi$ be its projection to $Y$.
  For any $c\in Y$, the fiber $\pi^{-1}(c)$ is the set of critical points of the function $\lambda$ on the corresponding Bloch variety for $D_c$.
  By Corollary~\ref{C:Bound}, there are at most $(d{+}1)!\vol(\calN(D_c))$ isolated points in the fiber.
  
  Let $c\in Y$ be a point such that the critical point equations have $(d{+}1)!\vol(\calN(\Gamma))$ regular solutions.
  Then $(d{+}1)!\vol(\calN(\Gamma))\leq(d{+}1)!\vol(\calN(D_c))$.
  By Lemma~\ref{L:NewtonPolytope}, $\calN(D_c)$ is a subpolytope of $\calN(\Gamma)$, so that $\vol(\calN(D_c))\leq\vol(\calN(\Gamma))$.
  We conclude that both polytopes have the same volume and are therefore equal.
  In particular, the corresponding Bloch variety has the maximum number of critical points, and each is a regular solution of the critical point
  equations~\eqref{Eq:CPE}.
  Because they are regular solutions, the implicit function theorem implies that there is a neighborhood \defcolor{$U_c$} of $c$ in the
  classical topology on $Y$ such that the map $\pi^{-1}(U_c)\to U_c$ is proper (it is a $(d{+}1)!\vol(\calN(\Gamma))$-sheeted cover).

  The set \defcolor{$DC$} of degenerate critical points is the closed subset of $CP$ given by the vanishing of the Hessian
  determinant~\eqref{Eq:Hessian}. 
  Since $\pi$ is proper over $U_c$, if $\defcolor{DP}=\pi(DC)$ is the image of $DC$ in $Y$, then $DP\cap U_c$ is closed in $U_c$.
  As the points of $\pi^{-1}(c)$ are regular solutions, Lemma~\ref{L:nondegenerate} implies they are all nondegenerate and thus
  $c\not\in DP$, so that $U_c\smallsetminus DP$ is a nonempty classically open subset of $Y$ consisting of parameter values
  $c'$ with the property that all critical points on the corresponding Bloch variety are nondegenerate.

  This implies that there is a nonempty Zariski open subset of $Y$ consisting of parameters such that all critical points on the
  corresponding Bloch variety are nondegenerate, which completes the proof.
\end{proof}  

By Theorem~\ref{Th:singleCalculation}, it suffices to find a single Bloch variety with the maximum number of isolated critical points to
establish the critical points property for a periodic graph.
The following examples use such a computation to establish the critical points property for $2^{19}+2$ graphs $\Gamma$.
Computer code and output are available at the github repository\footnote{\url{https://mattfaust.github.io/CPODPO}.}.

\begin{Example}\label{Ex:DenseGraph}
  Let us consider the dense $\ZZ^2$-periodic graph $\Gamma$ of Figure~\ref{F:New-dense}.
  It has $m=2$ points in its fundamental domain and the convex hull of its support $\calA(\Gamma)$ has area 4.
  By Theorem~C, a general operator on $\Gamma$ has $2!\cdot 2^{2+1}\cdot 4 = 64$ critical points.
\begin{figure}[htb]
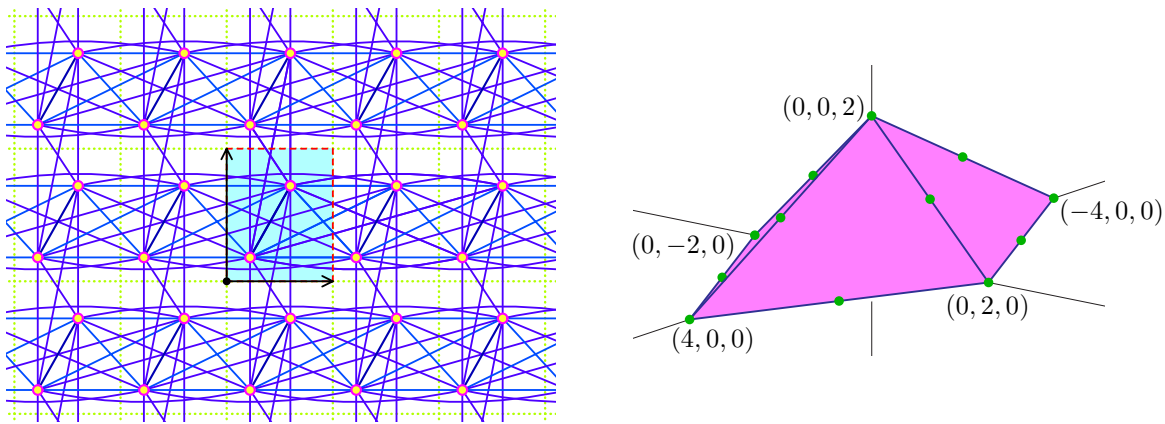

  \centering
   \includegraphics{images/newDense}
     \qquad
  \begin{picture}(212,156)(-2,-25)
    \put(0,0){\includegraphics[height=110pt]{images/widePyramid}}

     \thicklines

     \thinlines   
      \put( 56,91){\footnotesize$(0,0,2)$}
 
      \put(161, 52){\footnotesize$(-4,0,0)$}
 
       \put(-1,39){\footnotesize$(0,-2,0)$}
       \put(14, 3.5){\footnotesize$(4,0,0)$}  \put(118, 16){\footnotesize$(0,2,0)$}

 \end{picture}

   \caption{Dense periodic graph and its polytope from Figure~\ref{F:New-dense}.}
   \label{F:newDensewidePyramid}
 \end{figure}
There are 13 edges and two vertices in $W$, and independent computations in the computer algebra systems 
Macaulay2~\cite{M2} and Singular~\cite{Singular} find a point $c\in Y=\CC^{15}$
  such that the critical point equations have 64 regular solutions on $(\CC^\times)^2\times\CC$.
  By Theorem~\ref{Th:singleCalculation}, the critical points property holds for $\Gamma$.
  These computations are independent in that the code, authors, and parameter values for each are distinct.\hfill$\diamond$
\end{Example}

\begin{Example}\label{Ex:FullNotDense}
  The graph $\Gamma$ in Figure~\ref{F:FullNotDense} is not dense.
  Its restriction to the fundamental domain is not the
  complete graph on 3 vertices and there are three and not nine edges between any two adjacent translates of the fundamental domain.
  Altogether, it has $3\cdot 6 + 1=19$ fewer edges than the corresponding dense graph.
  Its support $\calA(\Gamma)$ forms the columns of the matrix
  $(\begin{smallmatrix}0&1&1&0&-1&-1&0\\0&0&1&1&0&-1&-1\end{smallmatrix})$ whose convex hull is a hexagon of area 3.

  \begin{figure}[htb]
  \centering
  \includegraphics{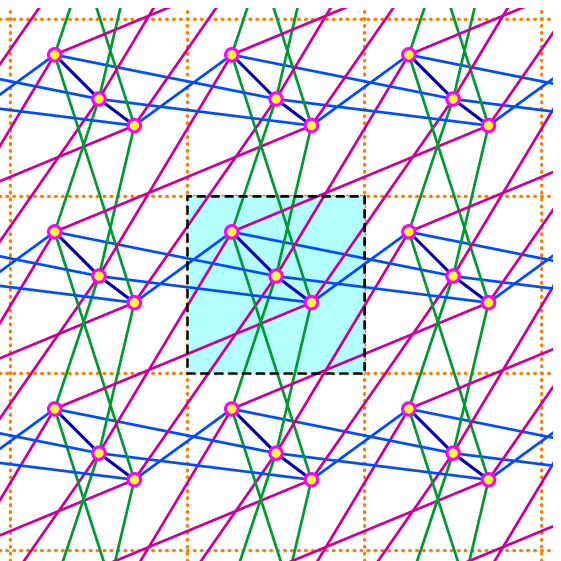}
  \qquad
   \begin{picture}(235,160)(-10,0)
     \put(0,0){\includegraphics[height=150pt]{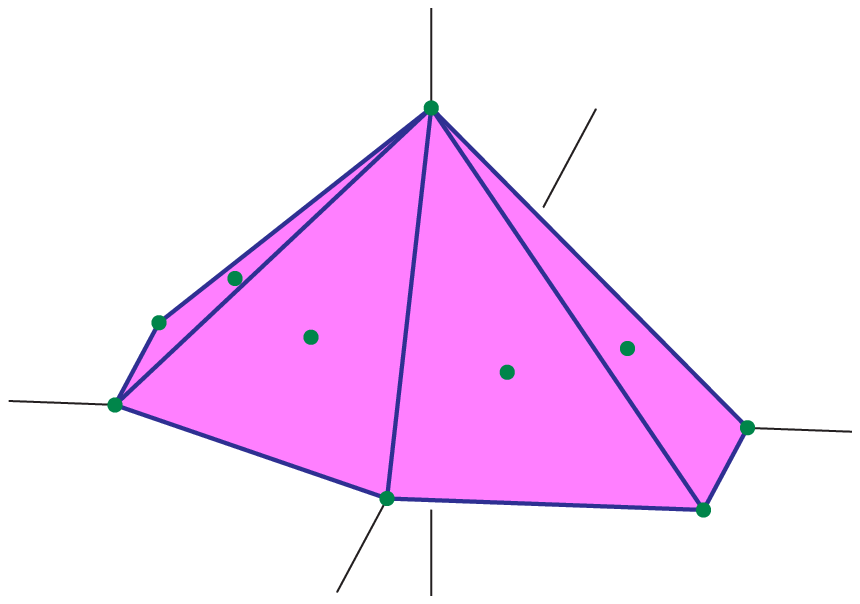}}
     \thicklines
     \put(127,10){{\color{white}\line(0,1){41}}}   \put(128,10){{\color{white}\line(0,1){41}}}
     \put(39.5,10){{\color{white}\line(2,3){33}}}  \put(40.5,10){{\color{white}\line(2,3){33}}}
     \put(178.5,104.5){{\color{white}\line(-1,-2){17.5}}}  \put(179.5,104.5){{\color{white}\line(-1,-2){17.5}}}
     \thinlines   
     \put( 72,125){\footnotesize$(0,0,3)$} 
     \put( 10,120.5){\footnotesize$(-1,-1,1)$}  \put( 39.5,117.5){\vector(1,-2){16}}
     \put(164,107.5){\footnotesize$( 1,2,1)$}   \put(179,104.5){\vector(-1,-2){19.5}}
     \put(-13, 69){\footnotesize$(-3,-3,0)$} 
     \put(-12, 38){\footnotesize$(0,-3,0)$}  \put(190, 48){\footnotesize$(0,3,0)$}
     \put(59.5, 16.5){\footnotesize$(3,0,0)$}   \put(162, 10){\footnotesize$(3,3,0)$}

    \put(14, 0){\footnotesize$(1,-1,1)$} \put(40,10){\vector(2,3){35.5}}
    \put(111.5,0){\footnotesize$(2,1,1)$} \put(127.5,10){\vector(0,1){43.5}}

   \end{picture}
  \caption{Sparse graph with the same Newton polytope as the corresponding dense graph.}
  \label{F:FullNotDense}
  \end{figure}

  Despite $\Gamma$ not being dense, its Newton polytope $\calN(\Gamma)$ is equal to the Newton polytope of the dense graph with the same
  parameters, $\calA(\Gamma)$ and $W$.
  Figure~\ref{F:FullNotDense} displays the Newton polytope, along with elements of the support of the dispersion function
  that are  visible.
  Observe that on each triangular face, there are four and not ten monomials.
  
  By Theorem~A (Corollary~\ref{C:Bound}), there are at most $2!\cdot 3^{2+1}\cdot 3=162$ critical points.
  There are eleven edges and three vertices in $W$, and independent computations in Macaulay2 and Singular find a point
  $c\in Y=\CC^{14}$ such that the critical point equations have 162 regular solutions on $(\CC^\times)^2\times\CC$.
  By Theorem~\ref{Th:singleCalculation}, the critical points property holds for $\Gamma$.

  Let $\Gamma'$ be a graph that has the same vertex set and support as $\Gamma$, and contains all the edges of
  $\Gamma$---then~\cite[Thm.\ 22]{DKS} implies that the
  critical points property also holds for $\Gamma'$.
  This establishes the critical points property for an additional $2^{19}-1$ periodic graphs.\hfill$\diamond$
\end{Example}

  \begin{Example}\label{Ex:NotPyramid}
   The graph $\Gamma$ of Figure~\ref{F:notPyramidGraphPolytope} has 
   only ten edges but the same fundamental domain $W$ and support $\calA(\Gamma)$ as the the graph of Figure~\ref{F:FullNotDense},
   which had eleven edges.
   Its Newton polytope is smaller, as it is missing the vertices $(3,3,0)$ and $(-3,-3,0)$.
   \begin{figure}[htb]
  \centering
   \includegraphics{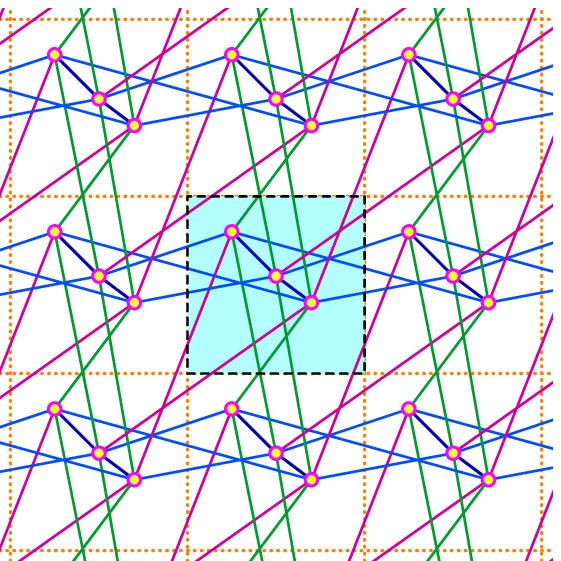}
     \qquad
  \begin{picture}(245,159)(-8,-2)
    \put(0,0){\includegraphics[height=150pt]{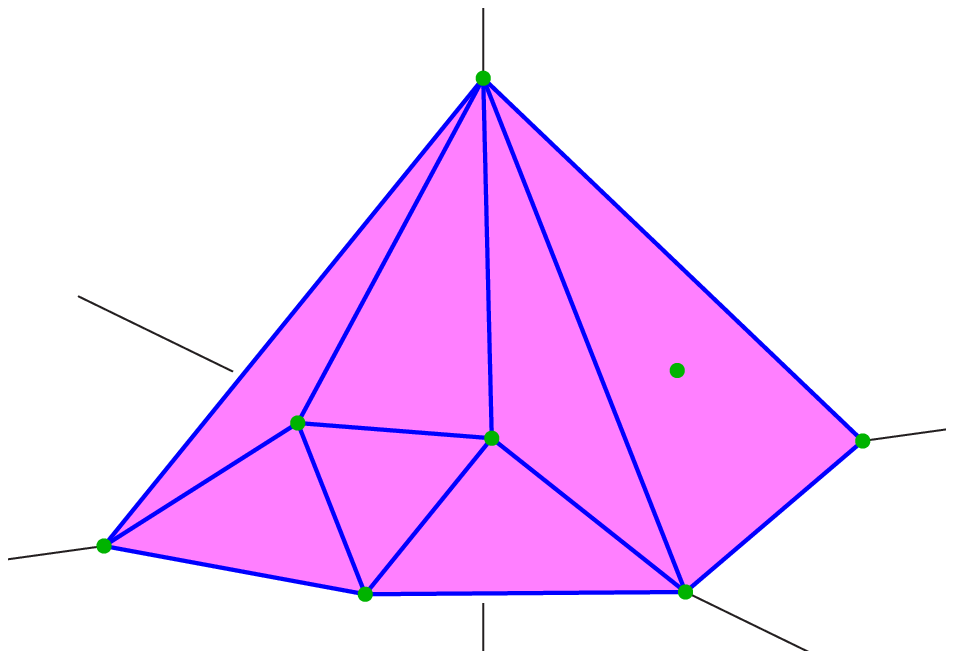}}

    \thicklines
     \put( 38.5,10){{\color{white}\line(2,3){24.5}}}       \put(39.5,10){{\color{white}\line(2,3){24.5}}}
     \put(178.5,110.5){{\color{white}\line(-1,-2){20}}}  \put(179.5,110.5){{\color{white}\line(-1,-2){20}}}
     \put(133, 8){{\color{white}\line(-1,2){17}}} \put(134, 8){{\color{white}\line(-1,2){17}}}
    \thinlines   
    \put( 75,132){\footnotesize$(0,0,3)$}
    \put(155.5,113.5){\footnotesize$(-1,1,1)$}   \put(179,110.5){\vector(-1,-2){21}}

    \put(197, 56){\footnotesize$(-3,0,0)$} 
    \put( 21,  0){\footnotesize$(2,1,1)$} \put( 39,10){\vector(2,3){26.5}}
    \put( 67,  2){\footnotesize$(2,2,0)$}  
    \put(120, -2){\footnotesize$(1,2,1)$}     \put(133.5, 8){\vector(-1,2){19}}

    \put(-11, 28){\footnotesize$(3,0,0)$}   \put(166, 12.5){\footnotesize$(0,3,0)$}
  \end{picture}
   \caption{A periodic graph and its Newton polytope.}
   \label{F:notPyramidGraphPolytope}
\end{figure}

   It has volume $70/3$ and normalized volume $3!\cdot 70/3=140$.
   Independent computations in Macaulay2 and Singular find a point $c\in Y=\CC^{13}$ such that the critical point equations have
   140 regular solutions on $(\CC^\times)^2\times\CC$.
   Thus there are no critical points at infinity, and Theorem B implies that the Bloch variety is smooth at infinity.

   As before, achieving the bound of Corollary~\ref{C:Bound} with regular solutions implies that all critical points are
   nondegenerate and the critical points property holds for $\Gamma$.  \hfill$\diamond$
  \end{Example}

\section{Conclusion}
We considered the critical points of the complex Bloch variety for an operator on a periodic graph.
We gave a bound on the number of critical points---the normalized volume of a Newton polytope---together with a criterion
for when that bound is attained.
We presented a class of graphs (dense periodic graphs) and showed that this criterion holds for general discrete operators on a
dense graph.
Lastly, we used these results to find $2^{19}+2$ graphs on which the spectral edges conjecture holds for general discrete operators
when $d=2$.


\bibliographystyle{amsplain}
\def\cprime{$'$}
\providecommand{\bysame}{\leavevmode\hbox to3em{\hrulefill}\thinspace}
\providecommand{\MR}{\relax\ifhmode\unskip\space\fi MR }
\providecommand{\MRhref}[2]{%
  \href{http://www.ams.org/mathscinet-getitem?mr=#1}{#2}
}
\providecommand{\href}[2]{#2}

\end{document}